\documentclass[11pt]{article}
\pdfoutput=1

\usepackage{amsmath,amsthm}
\usepackage{preprint}

\usepackage[utf8]{inputenc}
\usepackage[T1]{fontenc}
\usepackage{graphicx}
\usepackage{float}
\usepackage{booktabs}
\usepackage{array}
\usepackage{hyperref}
\usepackage{cleveref}
\crefname{lstlisting}{listing}{listings}
\Crefname{lstlisting}{Listing}{Listings}
\usepackage{xcolor}
\usepackage{tikz}
\usetikzlibrary{arrows.meta, positioning, shapes.geometric, fit, calc, backgrounds}
\usepackage{pgfplots}
\pgfplotsset{compat=1.18}
\newcommand{\Mmark}{\ensuremath{\blacksquare}}
\newcommand{\Pmark}{\ensuremath{\square}}
\usepackage{listings}
\usepackage{enumitem}
\usepackage{caption}
\usepackage{subcaption}
\usepackage{needspace}
\usepackage{placeins}

\lstdefinelanguage{Isabelle}{
  morekeywords={locale, fixes, assumes, shows, lemma, theorem, definition,
    interpretation, datatype, fun, where, record, type_synonym,
    by, auto, cases, metis, unfold_locales, rule, proof, qed,
    Some, None, let, in, case, of, begin, end, section, text,
    imports, theory, corollary, using, from, obtain, then, have,
    show, with, next, also, finally, hence, thus, moreover,
    assumes, and, not, if, else, True, False},
  sensitive=true,
  morecomment=[s]{(*}{*)},
  morestring=[b]",
}

\usepackage{aliascnt}
\newtheorem{theorem}{Theorem}[section]
\newaliascnt{lemma}{theorem}\newtheorem{lemma}[lemma]{Lemma}\aliascntresetthe{lemma}
\newaliascnt{corollary}{theorem}\newtheorem{corollary}[corollary]{Corollary}\aliascntresetthe{corollary}
\newaliascnt{proposition}{theorem}\newtheorem{proposition}[proposition]{Proposition}\aliascntresetthe{proposition}
\theoremstyle{definition}
\newaliascnt{definition}{theorem}\newtheorem{definition}[definition]{Definition}\aliascntresetthe{definition}
\newaliascnt{remark}{theorem}\newtheorem{remark}[remark]{Remark}\aliascntresetthe{remark}
\theoremstyle{plain}
\crefname{theorem}{Theorem}{Theorems}
\crefname{lemma}{Lemma}{Lemmas}
\crefname{corollary}{Corollary}{Corollaries}
\crefname{proposition}{Proposition}{Propositions}
\crefname{definition}{Definition}{Definitions}
\crefname{remark}{Remark}{Remarks}
\crefname{section}{Section}{Sections}
\crefname{appendix}{Appendix}{Appendices}
\crefname{table}{Table}{Tables}
\crefname{figure}{Figure}{Figures}

\hypersetup{
  colorlinks=true,
  linkcolor=blue!70!black,
  citecolor=blue!70!black,
  urlcolor=blue!70!black,
  pdftitle={Soundness of the GKR Protocol for Sparse-Merkle Updates, with a Machine-Checked Core},
  pdfauthor={Jinwook Kim},
  pdfsubject={Soundness of the GKR protocol with a cube gate, its instantiation over a quartic extension field, exact arithmetization of sparse-Merkle transitions relative to the witness encoding, and their composition, with complete proofs and an Isabelle/HOL mechanization of the core},
  pdfkeywords={GKR protocol, sumcheck protocol, interactive proofs, computer-aided cryptography, Isabelle/HOL, extension fields},
}

\newcommand{\isaf}[1]{\texttt{#1}}
\newcommand{\F}{\mathbb{F}}
\newcommand{\cube}[1]{\{0,1\}^{#1}}
\newcommand{\smtvalid}{\isaf{smt\_valid}}
\newcommand{\AIdesignation}{Substantial AI contribution}
\newcommand{\eqk}{\mathrm{eq}}
\newcommand{\bpt}[2]{\langle #1\rangle_{#2}}
\newcommand{\cSC}{\mathrm{SC}}
\newcommand{\cBad}{\mathrm{Bad}}
\newcommand{\cRepr}{\mathrm{Repr}}
\newcommand{\ccs}{\mathrm{cs}}
\newcommand{\sout}{s_{\mathrm{out}}}
\newcommand{\dbnd}{\mathrm{db}}
\newcommand{\gLin}{\mathsf{Lin}}
\newcommand{\gMul}{\mathsf{Mul}}
\newcommand{\gPow}{\mathsf{Pow3}}
\newcommand{\extE}{\mathbb{E}}
\newcommand{\extK}{\mathbb{K}}
\newcommand{\hleaf}{h_{\mathrm{leaf}}}
\newcommand{\hnode}{h_{\mathrm{node}}}
\newcommand{\hwl}{\widehat{h}_{\mathrm{leaf}}}
\newcommand{\hwn}{\widehat{h}_{\mathrm{node}}}
\newcommand{\smSponge}{\mathsf{Sponge}}
\newcommand{\smEncode}{\mathsf{encode}}
\newcommand{\smPenc}{\mathsf{penc}}
\newcommand{\smFold}{\mathsf{fold}}
\newcommand{\smTag}{\mathrm{tag}}
\newcommand{\smEmpty}{\mathsf{Empty}}
\newcommand{\smOcc}{\mathsf{Occupied}}
\newcommand{\smTomb}{\mathsf{Tombstone}}
\newcommand{\smMem}{\mathsf{Membership}}
\newcommand{\smNon}{\mathsf{NonMembership}}
\newcommand{\smUpd}{\mathsf{Update}}
\newcommand{\smBit}{\mathrm{kbit}}
\newcommand{\smPathRoot}{\mathsf{PathRoot}}
\newcommand{\smPathOK}{\mathsf{PathOK}}
\newcommand{\smWl}{W_{\mathrm{leaf}}}
\newcommand{\smRep}{\mathrm{rep}}
\newcommand{\smPre}{\mathrm{pre}}
\newcommand{\smEnc}{\mathsf{Enc}}
\newcommand{\smCons}{\mathsf{Cons}}
\newcommand{\smRes}{\mathrm{Res}}
\newcommand{\smStack}{\mathrm{Stack}}
\newcommand{\smEv}{\mathrm{ev}}
\newcommand{\smAcc}{\mathsf{Acc}}
\newcommand{\smVAcc}{\mathsf{VAccept}}
\newcommand{\smPub}{\mathsf{PubOK}}
\newcommand{\smVd}{\mathsf{VdOK}}
\newcommand{\smEcho}{\mathsf{EchoOK}}
\newcommand{\smCanon}{\mathrm{canon}}
\newcommand{\smVdc}{\mathit{vd}^{\ast}}
\newcommand{\smRout}{r_{\mathrm{out}}}
\newcommand{\smMuxL}{\mathrm{mux}_{\mathrm{L}}}
\newcommand{\smMuxR}{\mathrm{mux}_{\mathrm{R}}}
\newcommand{\smKind}{\mathrm{kind}}
\newcommand{\smKM}{\mathsf{M}}
\newcommand{\smKN}{\mathsf{N}}
\newcommand{\smKU}{\mathsf{U}}
\newcommand{\smVecAcc}{\mathsf{VecAccept}}
\newcommand{\smWit}{\mathrm{wit}}
\usepackage{fancyhdr}
\title{Soundness of the GKR Protocol for Sparse-Merkle Updates,\\ with a Machine-Checked Core}

\author{
  Jinwook Kim\textsuperscript{1,2}~\href{https://orcid.org/0009-0004-4993-8005}{\mbox{\scalerel*{\begin{tikzpicture}[yscale=-1,transform shape]\pic{orcidlogo};\end{tikzpicture}}{|}}} \\
  \textsuperscript{1}Oraclizer Labs, Inc., Delaware, USA \\
  \textsuperscript{2}Oraclizer Labs Korea, Seoul, Korea \\
  \texttt{jay@oraclizer.io}
}

\date{November 2026}

\begin{document}

\maketitle

\begin{abstract}
The GKR soundness theorem bounds the acceptance of a false output claim for a fixed circuit and input. It does not say what the outputs of a compiled circuit mean for that input, and that has to be proved for each family of circuits. For a sparse Merkle tree over a compressing hash, heuristically almost every root admits a valid path for almost any leaf, so the needed statement concerns the witness as well as the claim. We prove per-input soundness for membership, non-membership and update claims, for trees of every depth, in the delegation setting of GKR: the verifier reads the circuit input in the clear and leaves the node hashes to the circuit. Assuming correct hash sub-circuits, every accepted input, honestly encoded or not, determines a valid witness, and the chain constraint cannot be dropped. If the witness read from an input fixed before the challenges is not valid, this verifier accepts with probability at most $\bigl(s_{\mathrm{out}}+\sum_i(8s_i+1)\bigr)/|\mathbb{F}|$ for a circuit with $2^{s_{\mathrm{out}}}$ outputs and $2^{s_i}$ wires read at layer $i$. This covers cube gates for the Poseidon2 S-box and, with $|\mathbb{F}| = p^4$, circuits over the 31-bit KoalaBear field with challenges from its quartic extension. The implemented node hash, a truncated permutation, is not collision resistant, so the statement that accepted runs for one key and root assert the same leaf assumes position binding of the tree. All our results are proved in the paper. Isabelle/HOL also checks a layer-chain bound, the exactness of the verifier's checks on encoded witnesses in a constraint-level model with abstract hash sub-circuits and, over one field, their composition for claims with no valid witness. All bounds concern the interactive public-coin protocol.
\end{abstract}

\noindent\textbf{AI-use designation (IACR AI policy):} \AIdesignation. AI systems drafted the manuscript, including the paper proofs, wrote the Isabelle and Rust code, and were used to check and review them, directed by the author, who defined the research questions, the system design and the scope of the verification, decided which statements are claimed, reviewed the manuscript, and takes full responsibility for the work. No AI system is an author. The Declarations give the details.

\keywords{GKR protocol \and Sumcheck protocol \and Interactive proofs \and Computer-aided cryptography \and Isabelle/HOL \and Extension fields}

\section{Introduction}
\label{sec:introduction}

\paragraph{Soundness of GKR for authenticated-data-structure updates.} Interactive proofs of the Goldwasser--Kalai--Rothblum (GKR) family~\cite{gkr2015,thaler2013} reduce a claim about the output of a layered arithmetic circuit, layer by layer, to a claim about its input, running one sumcheck protocol~\cite{lfkn1992} per layer. Their soundness theorem bounds the probability that the verifier accepts a false output claim by a small multiple of $1/|\F|$. A system that accepts updates of a sparse Merkle tree on the strength of such a verifier needs a statement about updates: an invalid update is accepted only with small probability. For a compressing hash that statement has to concern the witness as well as the claim, because, heuristically, almost every root admits a valid path for almost any leaf (\cref{sec:goal}). It rests on three arguments.
\begin{enumerate}[leftmargin=*,label=(\roman*)]
\item \emph{Protocol soundness}: the GKR bound, for the gate set and the field that the system uses.
\item \emph{Exact arithmetization}: the compiled checks accept exactly the valid updates. A missing check lets the protocol prove a weaker statement than intended with probability one, and a superfluous check makes honest updates unprovable.
\item \emph{Commitment binding}: the root binds the state. For the hash functions used here this is position binding of the tree, because the node hash is a truncated permutation and is not collision resistant (\cref{sec:semantics}).
\end{enumerate}
Small-field provers add a further step. They evaluate circuits over a small prime field, 31 bits here, and draw the verifier's challenges from an extension field, so the bound must hold over the field the challenges come from and the circuit must be carried into that field.

\paragraph{Machine-checked cryptography for proof systems.} Machine-checked security proofs have recently reached proof systems: the sumcheck protocol in Isabelle/HOL~\cite{bosshard2024,sumcheckafp2024} and interactive oracle reductions in Lean~4, with sumcheck and further protocols under active formalization~\cite{arklib}. The authors of the Isabelle sumcheck development name the analysis of protocols such as GKR, which rely directly on sumcheck, as future work~\cite{bosshard2024}. Lean developments covering GKR have since appeared. complexitylib proves perfect completeness and soundness of the GKR protocol for layered circuits with addition and multiplication gates over a finite field, composing round-by-round soundness of its stages~\cite{complexitylib2026}. The Lean library of the GordieHowe proof system machine-checks the security of a GKR-based argument and proves, from certificates over the exported circuits that the Lean kernel checks, that each compiled circuit of its applications enforces the checks assigned to it~\cite{sharpgrubbs2026}. The Lean development of Longfellow proves conditional extraction and false-statement soundness for a model of its version-7 non-interactive verifier, starting from a trusted circuit~\cite{longfellowformal2026}. leanerVM proves round-by-round knowledge soundness of the oracle protocol of leanVM, in which a grand-product GKR checks the table bus, and open pull requests extend this to a Lean implementation of its non-interactive verifier~\cite{leanervm2026}.

\paragraph{This paper.} Of the three arguments, the first is standard up to the gate set and the field. Our subject is the second, what the outputs of the compiled circuit say about its input, which has to be proved for each family of circuits: the GordieHowe library proves it for the circuits of its signature-verification applications, open pull requests of leanerVM add it for a virtual machine, and the main chain of the Longfellow development starts from a trusted circuit. For sparse-Merkle updates, with the hash chains carried by the circuit, cube gates, the extension field of a small-field prover and a node hash that is not collision resistant, we work in a constraint-level model with abstract hash sub-circuits, for trees of every depth. We prove, on paper from a mechanized characterization of the constraints, that every accepted input determines a valid witness. The proof needs the chain constraint, which a statement about encoded witnesses cannot see: encoded witnesses satisfy it by construction, so that statement would hold for a circuit without it (\cref{rem:vector-constraints}). Composed with the protocol bound, this gives per-input soundness (\cref{sec:smt-vector}). The verifier there reads the input in the clear, which is the delegation setting of GKR; an argument that commits to the input needs a bound of this form at the input it extracts, and we do not treat the commitment. Our Isabelle/HOL development takes up the analysis of GKR that the authors of the Isabelle sumcheck development name as future work. \Cref{sec:gkr-faithful} also shows where a mechanized chain bound stops: instantiated with the multilinear representative of the layer integrand, it does not bound the acceptance of the verifier, which evaluates a different polynomial at its random point. The step to the verifier and the per-witness form, which carries security (\cref{sec:goal}), are proved on paper; the composition with the protocol bound is machine-checked, over one field, for claims that no witness makes valid. \Cref{sec:related} compares the results.

\paragraph{Contributions.} For sparse-Merkle membership, non-membership and update we prove per-input soundness of a GKR verifier, for a verifier that leaves the node hashes to the circuit and for one that holds the witness. The proofs rest on the first two arguments, which we state and prove, with the protocol bound instantiated over the exact extension field of a small-field prover. Of the third argument we prove the leaf side and state the rest as an assumption. Isabelle/HOL~\cite{isabellehol2002} checks the chain bound, its instance over the extension, the arithmetization theorem for encoded witnesses and, over one field, the composition for claims with no valid witness; the items below say which steps are proved on paper. The development accompanies the StateSync-GKR prover, whose engineering is reported separately~\cite{kim2026engine}; this paper reports no measurement.
\begin{enumerate}[leftmargin=*]
\item \textbf{A verifier that leaves the node hashes to the circuit} (\cref{sec:smt-vector}). The verifier receives the circuit input from the prover, checks its public entries and computes no node hash, so the protocol checks the hash chains. Every accepted input determines a valid witness, provided digests are represented by their own field elements, the node sub-circuit computes the compression function on every input and, for updates, the leaf sub-circuit computes the leaf hash of the old leaf (\cref{thm:a-vector}, a paper proof from a mechanized characterization of the constraints). If the witness read from the prover's first message is not valid, the verifier accepts with probability at most $\bigl(s_{\mathrm{out}} + \sum_i (8 s_i + 1)\bigr)/|\F|$, over the base field (\cref{cor:perinput-vector}) and over the extension (\cref{cor:perinput-vector-ext}); both are paper proofs. \Cref{rem:vector-constraints} records which constraints the proof uses and shows that the chain constraint cannot be dropped.
\item \textbf{Per-input soundness for a verifier that holds the witness} (\cref{sec:composition}). For every witness whose canonical form does not make the transition valid, the verifier accepts with probability at most the same bound, over the base field (\cref{cor:perinput-verifier}) and over the extension (\cref{cor:perinput-ext}). These are paper proofs built on the mechanized results; the composition itself is mechanized for claims that no witness makes valid (\cref{thm:c}), a form that is weak for a compressing hash (\cref{sec:goal}).
\item \textbf{One leaf per key and root} (\cref{sec:semantics,sec:one-leaf}). An injectivity lemma for the leaf encoding reduces leaf-hash collisions to sponge collisions (\cref{lem:leaffold}), and two different leaves with valid paths at the same key under one root yield a collision of the leaf or the node hash (\cref{prop:binding}, a paper step). The implemented node hash has cheap collisions, so we assume position binding of the tree (\cref{def:pb}) and bound the probability that two accepted runs assert different leaves for one key and root (\cref{cor:one-leaf}, a paper proof).
\item \textbf{GKR soundness with a cube gate} (\cref{sec:gkr}). For layered circuits with linear, product and cube gates, the cube gate serving the $x^3$ S-box of Poseidon2~\cite{poseidon2}, a false output claim survives the layer chain with probability at most $\bigl(s_{\mathrm{out}} + \sum_i (2 s_i d_i + 1)\bigr)/|\F|$, where layer $i$'s sumcheck runs over $2s_i$ variables with round polynomials of degree at most $d_i$; for the verifier analyzed here $d_i = 4$, because the layer integrand has degree at most $4$ in each variable (\cref{lem:integrand-poly}). The chain bound (\cref{thm:b}) is mechanized for every polynomial that agrees with the layer integrand on the Boolean hypercube and has total degree at most $d_i$; its application to the integrand that the verifier evaluates (\cref{lem:accept-bad,prop:b-faithful}) is proved on paper.
\item \textbf{Instantiation over the quartic extension} (\cref{sec:smallfield}). We construct $\F_p[X]/(X^4-3)$ for the KoalaBear prime $p = 2^{31}-2^{24}+1$ as a tower of two quadratic extensions, prove that it is a field with $p^4$ elements, and obtain the bound over it with denominator exactly $p^4$ (\cref{thm:b-ext4}). The embedding of the base field preserves layer evaluation, well-formedness and false outputs layer by layer (\cref{lem:lift-layer}); lifting a whole chain and composing across the two fields are proved on paper (\cref{prop:lift-chain,cor:perinput-ext}).
\item \textbf{Exact arithmetization relative to the witness encoding} (\cref{sec:semantics,sec:compiler}). In a constraint-level model of the compiler, with the hash sub-circuits as abstract functions, the verifier model, which performs public checks on the claim, encodes a given witness as the circuit input and evaluates the compiled checks on it, accepts exactly the valid transitions (\cref{thm:a-sound,thm:a-complete}).
\end{enumerate}
\Cref{sec:prelim} fixes the notation and states the standard results we use; every result is proved in the text or in the appendices, and the only statements used without proof are textbook facts of algebra and number theory, among them the Schwartz--Zippel lemma and quadratic reciprocity. The one cryptographic assumption, position binding (\cref{def:pb}), is stated as such and enters only \cref{cor:one-leaf}. \Cref{sec:mechanization} with \cref{app:mechanization} lists for each result the Isabelle theorem that checks it or marks it as a paper step.

\paragraph{Scope.} All probability statements are about the interactive public-coin protocol with challenges drawn uniformly and independently from the field. \Cref{sec:smt-vector} treats the setting in which the protocol carries the hash chains: the verifier receives the circuit input, reads it in the clear, compares its public entries with the claim and leaves the node hashes to the circuit. The verifier of \cref{sec:goal} receives the witness, computes the circuit input from it before any challenge is drawn, and holds that input in the clear. Such a verifier could check the authentication path by hashing it directly, so in that setting the protocol replaces no hash computation of the verifier, and no succinctness in the witness is claimed; the verifier that ships with the artifact is of this kind. \Cref{sec:limitations} states the boundary toward the non-interactive executable.

\paragraph{Organization.} \Cref{sec:goal} states the security goal. \Cref{sec:prelim} collects preliminaries. \Cref{sec:gkr} proves the protocol bound and \cref{sec:smallfield} instantiates it over the extension. \Cref{sec:semantics,sec:compiler} give the sparse-Merkle semantics, the leaf binding and the arithmetization, and \cref{sec:composition} composes them. \Cref{sec:mechanization} describes what is machine-checked, \cref{sec:limitations} the boundary, and \cref{sec:related} related work. The appendices contain the proofs that are deferred from the main text.

\section{Security Goal}
\label{sec:goal}

\paragraph{Claims and adversary.} A transition claim is a triple $(\mathit{op}, r, r')$: an operation on a sparse Merkle tree (membership of a key with a payload, non-membership of a key, or update of a key's leaf), the claimed root $r$, and, for updates, the claimed post-root $r'$; the claim also fixes a value digest $\mathit{vd}$, the leaf digest that the operation asserts (\cref{def:verifier-model}). The prover holds a witness $w$, made of the leaf data and the authentication path, and sends it with the claim. The verifier performs public checks on the claim, checks that the sibling list of $w$ has length $d$ and that the leaf of $w$ agrees with the claim (\cref{def:verifier-model}), encodes $w$ as the input of the compiled circuit of the claim (\cref{def:encoding}), and runs the GKR protocol with the prover to check that every output wire of that circuit is zero. The prover's strategy is arbitrary; probabilities are over the verifier's public-coin challenges.

\paragraph{The statement that carries security.} The existential goal, that a claim for which no witness makes the transition valid is accepted only with small probability, is weak for a compressing hash. An authentication path has far more free field elements than a root, so, heuristically, for a compression function that behaves like a random function almost every root admits some valid path for almost any leaf, and a false claim is then rarely invalid in the existential sense. The usual remedy is knowledge soundness. In this paper the verifier holds the circuit input in the clear, so the witness can be read from the prover's messages and no extractor is needed. The statement that carries security fixes the witness: for every witness whose canonical form does not make the transition valid, the verifier accepts with small probability (\cref{cor:perinput-verifier}); for the verifier of \cref{sec:smt-vector} it fixes the prover's first message (\cref{cor:perinput-vector,cor:perinput-vector-ext}). An accepted run therefore certifies, except with that probability, that the siblings of the witness form a valid authentication path for the leaf that the claim determines and for the claimed roots. Binding then rests on position binding of the tree (\cref{def:pb}): two different leaves for the same key cannot both be given valid paths under one root. Such a pair would yield a collision of the leaf hash or of the node hash (\cref{prop:binding}), but collision resistance is not available here, because the node hash of the implementation is a truncated permutation (\cref{sec:semantics}).

\paragraph{Decomposition.} The goal splits along the three arguments above (\cref{fig:roadmap}). The arithmetization theorem is deterministic: if the verifier model accepts, that is, the public-side checks and the witness echo hold and the compiled checks pass on the encoding of the witness, then the canonical form of the witness makes the transition valid (\cref{thm:a-sound}), and every valid witness is accepted with its canonical value digest (\cref{thm:a-complete}); \cref{thm:a-vector} gives the soundness direction for inputs supplied by the prover. The protocol bound (\cref{thm:b,prop:b-faithful}) is probabilistic and knows nothing about Merkle trees: for any well-formed layered circuit and any false output claim, the bad event is unlikely. The composition (\cref{thm:c,cor:perinput,cor:perinput-verifier,cor:perinput-vector}) transfers semantic invalidity to the bound, and \cref{sec:smallfield} carries the bound to the field from which the challenges are drawn.

\begin{figure}[t]
\centering
\begin{tikzpicture}[font=\small,
  arg/.style={draw, rounded corners=2pt, align=left, text width=5.6cm, inner sep=4pt, fill=black!3},
  res/.style={draw, rounded corners=2pt, align=left, text width=6.2cm, inner sep=4pt, fill=black!3},
  arr/.style={-{Stealth[length=2mm]}, semithick, black!65}]
\node[arg] (A) at (0,0) {\textbf{Protocol soundness} (\S\ref{sec:gkr})\\
  \Mmark\ chain bound, \cref{thm:b}\\
  \Pmark\ bound for the verifier, \cref{prop:b-faithful}};
\node[arg, below=2.5mm of A] (B) {\textbf{Small-field instance} (\S\ref{sec:smallfield})\\
  \Mmark\ chain bound over $\F_{p^4}$, \cref{thm:b-ext4}\\
  \Pmark\ lifting circuits over $\F_p$, \cref{prop:lift-chain}};
\node[arg, below=2.5mm of B] (C) {\textbf{Exact arithmetization} (\S\ref{sec:compiler}, \S\ref{sec:smt-vector})\\
  \Mmark\ verifier model, \cref{thm:a-sound,thm:a-complete}\\
  \Pmark\ inputs from the prover, \cref{thm:a-vector}};
\node[arg, below=2.5mm of C] (D) {\textbf{Leaf binding} (\S\ref{sec:semantics})\\
  \Mmark\ leaf encoding, \cref{lem:leaffold}\\
  \Pmark\ reduction to collisions, \cref{prop:binding}\\
  assumption: position binding, \cref{def:pb}};
\node[res, anchor=west] (F) at ($(B.east) + (2.0,0)$) {\textbf{Composition} (\S\ref{sec:composition})\\
  \Mmark\ no valid witness, \cref{thm:c}\\
  \Pmark\ per witness, over $\F$, \cref{cor:perinput-verifier}\\
  \Pmark\ per witness, over $\F_{p^4}$, \cref{cor:perinput-ext}\\
  \Pmark\ inputs from the prover, \cref{cor:perinput-vector,cor:perinput-vector-ext}};
\node[res, anchor=west, fill=white] (G) at ($(D.east) + (2.0,0)$) {\textbf{An accepted update} (\S\ref{sec:composition})\\
  \Pmark\ leaves authenticated, error $\le$ the bound\\
  \Pmark\ one leaf per key and root, under position binding, \cref{cor:one-leaf}};
\node[anchor=north west, font=\footnotesize] at ($(G.south west) + (0,-0.2)$) {\Mmark\ machine-checked\qquad \Pmark\ proved on paper};
\draw[arr] (A.east) -- ([yshift=6mm]F.west);
\draw[arr] (B.east) -- (F.west);
\draw[arr] (C.east) -- ([yshift=-6mm]F.west);
\draw[arr] (F.south) -- (G.north);
\draw[arr] (D.east) -- (G.west);
\end{tikzpicture}
\caption{How the results combine.}
\label{fig:roadmap}
\end{figure}

\section{Preliminaries}
\label{sec:prelim}

This section fixes notation, recalls multilinear extensions, the Schwartz--Zippel lemma and the sumcheck protocol in the form in which the imported formalization states them, and describes the probability space of every soundness statement in the paper.

\paragraph{Notation.} $\F$ is a finite field with $|\F|$ elements, and $[n] = \{0,\dots,n-1\}$. A point is a tuple $z = (z_0,\dots,z_{n-1}) \in \F^n$. Challenge tuples are written in bold: for $\mathbf{r} = (\mathbf{r}_0,\dots,\mathbf{r}_{N-1})$ we write $\mathbf{r}_{[a,b)} = (\mathbf{r}_a,\dots,\mathbf{r}_{b-1})$ and $\mathbf{r}_{\geq a} = (\mathbf{r}_a,\dots,\mathbf{r}_{N-1})$, and $\mathbf{u} \,\|\, \mathbf{v}$ is concatenation. A \emph{table} with $n$ index bits is a function $f\colon [2^n]\to\F$; the formalization stores tables as lists and reads them by index, and no statement below reads an index outside $[2^n]$. Index bits are read most significant first: for $a\in[2^n]$ and $j\in[n]$, $\mathrm{bit}_j(a)$ is bit $n-1-j$ of the binary expansion of $a$, and $\bpt{a}{n} = (\mathrm{bit}_0(a),\dots,\mathrm{bit}_{n-1}(a))$ is the Boolean point of $a$. The map $a\mapsto\bpt{a}{n}$ is a bijection from $[2^n]$ onto $\cube{n}$. For a polynomial $g$ over $\F$, $\deg g$ is its total degree and $\deg_X g$ its degree in the variable $X$.

\subsection{Multilinear extensions}

\begin{definition}[Equality kernel, multilinear extension]
\label{def:mle}
For $z\in\F^n$ and $a\in[2^n]$ let
\[
\eqk_n(z,a) \;=\; \prod_{j\in[n]} e_j,
\qquad
e_j = \begin{cases} z_j & \text{if } \mathrm{bit}_j(a) = 1,\\ 1 - z_j & \text{if } \mathrm{bit}_j(a) = 0.\end{cases}
\]
The multilinear extension of a table $f\colon[2^n]\to\F$ is $\widetilde f(z) = \sum_{a\in[2^n]} \eqk_n(z,a)\, f(a)$, and $\beta_n(z) = \eqk_n(z,0) = \prod_{j\in[n]}(1-z_j)$.
\end{definition}

\noindent Mechanized as \isaf{eq\_pi}, \isaf{mle} and \isaf{beta} in \texttt{Multilinear\_Extension.thy}. With indeterminates $Z_0,\dots,Z_{n-1}$ in place of $z$, each $\eqk_n(Z,a)$ is a product of $n$ factors, the $j$-th of degree at most one in $Z_j$ and free of the other variables. Hence $\widetilde f$ is a polynomial of degree at most one in each variable and of total degree at most $n$.

\begin{lemma}[Facts on multilinear extensions; mechanized]
\label{lem:mle-facts}
Let $f,g\colon[2^n]\to\F$, $c\in\F$ and $a,a'\in[2^n]$.
\begin{enumerate}[label=(\alph*),leftmargin=*,nosep]
\item $\eqk_n(\bpt{a}{n},a') = 1$ if $a = a'$, and $0$ otherwise.
\item $\widetilde f(\bpt{a}{n}) = f(a)$.
\item $\widetilde{f+g} = \widetilde f + \widetilde g$ and $\widetilde{c f} = c\,\widetilde f$.
\item $\beta_n(\bpt{a}{n}) = 1$ if $a = 0$, and $0$ otherwise; hence $\sum_{a\in[2^n]}\beta_n(\bpt{a}{n}) = 1$.
\item If $f$ has $n+1$ index bits, $p\in\F$ and $z'\in\F^n$, then $\widetilde f(p, z') = (1-p)\,\widetilde{f_0}(z') + p\,\widetilde{f_1}(z')$, where $f_0(a) = f(a)$ and $f_1(a) = f(a+2^n)$ for $a\in[2^n]$.
\end{enumerate}
\end{lemma}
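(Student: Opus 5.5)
All five parts follow directly from \cref{def:mle} by elementary manipulation of the product $\eqk_n$. The order is (a) first, since (b) and (d) are corollaries of it, then (c) by linearity of the defining sum, and (e) by splitting off the first factor of the kernel.

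For (a) we evaluate $\eqk_n(\bpt{a}{n},a')=\prod_{j\in[n]} e_j$ with $z_j=\mathrm{bit}_j(a)\in\{0,1\}$. If $\mathrm{bit}_j(a)=\mathrm{bit}_j(a')$, then $e_j$ is $z_j=1$ or $1-z_j=1$. If the bits differ, then $e_j$ is $z_j=0$ or $1-z_j=0$. So the product is $1$ exactly when all bits agree. Since $a\mapsto\bpt{a}{n}$ is a bijection onto $\cube{n}$, agreement of all bits means $a=a'$.

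For (b), $\widetilde f(\bpt{a}{n})=\sum_{a'}\eqk_n(\bpt{a}{n},a')f(a')$ collapses by (a) to $f(a)$.

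For (c) we distribute the finite sum: $\sum_{a'}\eqk_n(z,a')\,(f(a')+g(a'))$ splits into the two extensions, and the constant $c$ factors out.

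For (d), $\beta_n(\bpt{a}{n})=\eqk_n(\bpt{a}{n},0)$ is $[a=0]$ by (a). Summing over $[2^n]$ then gives exactly one nonzero term, equal to $1$.

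For (e) we work with $n+1$ index bits and write $z=(p,z')$. Because bits are read most significant first, $\mathrm{bit}_0(a)$ of $a\in[2^{n+1}]$ is $[a\ge 2^n]$. For $j\ge1$, $\mathrm{bit}_j(a)$ equals $\mathrm{bit}_{j-1}$ of $a \bmod 2^n$ with $n$ bits. Hence for $a\in[2^n]$ we get $\eqk_{n+1}((p,z'),a)=(1-p)\,\eqk_n(z',a)$ and $\eqk_{n+1}((p,z'),a+2^n)=p\,\eqk_n(z',a)$. Splitting the sum over $[2^{n+1}]$ into the two halves $[2^n]$ and $2^n+[2^n]$ yields $(1-p)\widetilde{f_0}(z')+p\,\widetilde{f_1}(z')$.

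The only step needing care is this bit bookkeeping in (e): showing that the top bit of $a+2^n$ is $1$ and that its lower $n$ bits are those of $a$. In the mechanization, where tables are lists, this corresponds to splitting a list of length $2^{n+1}$ into two halves and reindexing. We would prove it as a separate lemma about $\mathrm{bit}_j$ under $a\mapsto a+2^n$, using $a<2^n$ together with $\lfloor (a+2^n)/2^{k}\rfloor \bmod 2$. Everything else is routine algebra.
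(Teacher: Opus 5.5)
Your proof is correct and follows the paper's argument step for step. Part (a) is checked factor by factor, (b) and (d) are read off from (a), and (c) is linearity of the defining sum. For (e), the sum is split into the halves $a<2^n$ and $a+2^n$, whose leading factor is $1-p$ or $p$ while the remaining $n$ factors form $\eqk_n(z',a)$; your explicit bookkeeping for the bits of $a+2^n$ spells out a step the paper leaves implicit.
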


\begin{proof}
(a) If $a\neq a'$, pick $j$ with $\mathrm{bit}_j(a)\neq\mathrm{bit}_j(a')$; the $j$-th factor of $\eqk_n(\bpt{a}{n},a')$ is $\mathrm{bit}_j(a) = 0$ if $\mathrm{bit}_j(a') = 1$ and $1-\mathrm{bit}_j(a) = 0$ otherwise. If $a = a'$, every factor equals $1$. (b) follows from (a), (c) from the definition, and (d) is (a) with $a' = 0$. (e) Split the sum defining $\widetilde f(p,z')$ into the indices $a<2^n$, whose first bit is $0$ and whose first factor is $1-p$, and the indices $a+2^n$ with $a<2^n$, whose first bit is $1$ and whose first factor is $p$; in both cases the remaining $n$ factors form $\eqk_n(z',a)$.
\end{proof}

\begin{lemma}[Schwartz--Zippel~\cite{schwartz1980,zippel1979}]
\label{lem:schwartz-zippel}
Let $g\in\F[X_1,\dots,X_m]$ be a nonzero polynomial of total degree at most $d$. Then $\Pr_{\mathbf{r}\leftarrow\F^m}\bigl[g(\mathbf{r}) = 0\bigr] \le d/|\F|$.
\end{lemma}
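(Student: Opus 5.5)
The proof goes by induction on the number of variables $m$, following the textbook argument; the paper says it uses this lemma as a textbook fact. The base case $m=0$ is immediate. A nonzero constant polynomial never vanishes, so the probability is $0\le d/|\F|$. For $m=1$ the claim is the familiar fact that a nonzero univariate polynomial of degree at most $d$ over a field has at most $d$ roots. That fact follows by repeatedly factoring out $(X-a)$ for each root $a$, using that $\F[X]$ is an integral domain, and it gives probability at most $d/|\F|$ directly.

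For the inductive step, write $g$ as a polynomial in $X_m$ with coefficients in $\F[X_1,\dots,X_{m-1}]$:
\[
g \;=\; \sum_{i=0}^{k} X_m^{\,i}\, g_i(X_1,\dots,X_{m-1}),
\]
where $k$ is the largest index with $g_k\neq 0$. Such a $k$ exists because $g\neq 0$. Since every monomial of $X_m^k g_k$ is a monomial of $g$, we get $\deg g_k\le d-k$. Draw $\mathbf{r}=(\mathbf{r}_{<m},\mathbf{r}_m)$ uniformly. Let $E$ be the event $g_k(\mathbf{r}_{<m})=0$. By the induction hypothesis applied to $g_k$, $\Pr[E]\le (d-k)/|\F|$.

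Conditioned on any fixed $\mathbf{r}_{<m}$ outside $E$, the univariate polynomial $g(\mathbf{r}_{<m},X_m)$ has leading coefficient $g_k(\mathbf{r}_{<m})\neq 0$. It is therefore nonzero of degree exactly $k$. Because $\mathbf{r}_m$ is independent of $\mathbf{r}_{<m}$ and uniform on $\F$, the univariate case gives $\Pr[g(\mathbf{r})=0\mid \neg E]\le k/|\F|$. Combining the two cases,
\[
\Pr[g(\mathbf{r})=0]\;\le\;\Pr[E]+\Pr[g(\mathbf{r})=0\wedge\neg E]\;\le\;\frac{d-k}{|\F|}+\frac{k}{|\F|}\;=\;\frac{d}{|\F|}.
\]

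The step that needs the most care is the conditioning argument. One route is to average over the finitely many values of $\mathbf{r}_{<m}$ and write the probability as a sum over that finite product space. Another is to count zeros directly: on the at most $(d-k)|\F|^{m-2}$ prefixes in $E$ each contributes at most $|\F|$ zeros, and on the remaining prefixes each contributes at most $k$ zeros. This gives at most $d\,|\F|^{m-1}$ zeros in total, and dividing by $|\F|^m$ yields the bound. I would use the counting form, since it avoids conditional probabilities entirely. The remaining small check is the degree bound $\deg g_k\le d-k$ from the decomposition.
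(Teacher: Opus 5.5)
Your proof is correct. It is the standard induction on the number of variables behind the cited references: isolate the leading coefficient $g_k$ in $X_m$, bound its vanishing by the induction hypothesis with degree $d-k$, and bound the remaining event by the univariate root count, where the counting form avoids conditioning on a possibly null event. The paper gives no proof of its own (it cites the lemma as a textbook fact, used only in paper proofs); the closest argument in the paper is the mechanized proof of the multilinear case in Lemma~\ref{lem:seed}, which splits off one variable in the same way but in the opposite order.
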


\noindent We use this standard lemma in paper proofs only. The mechanized proofs do not use it: they establish the multilinear case they need directly (Lemma~\ref{lem:seed}), and the univariate root bound behind the root axiom below is proved in the imported AFP instance.

\subsection{The sumcheck protocol}

\paragraph{Abstract polynomials.} The imported formalization of the sumcheck protocol~\cite{sumcheckafp2024,bosshard2024} is stated over an abstract structure of multivariate polynomials, and our development inherits it. Variables are natural numbers, and an \emph{assignment} $\sigma$ is a finite partial map from variables to $\F$; $[x\mapsto r]$ is the assignment with domain $\{x\}$, $\emptyset$ the empty one, and for $H\subseteq\F$ and a finite set $V$ of variables, $H^V$ is the set of assignments with domain $V$ and values in $H$. A structure of abstract polynomials is a set $\mathcal{P}$ with an addition and a zero, together with a finite variable set $\mathrm{vars}(p)$, a degree $\deg(p)\in\mathbb{N}$, an evaluation $p(\sigma)\in\F$ and an instantiation $p[\sigma]\in\mathcal{P}$ for each $p\in\mathcal{P}$, subject to the axioms of the locale \isaf{multi\_variate\_polynomial} (theory \texttt{Abstract\_Multivariate\_Polynomials} of~\cite{sumcheckafp2024}):
\begin{itemize}[leftmargin=*,nosep]
\item $\mathrm{vars}(0) = \emptyset$, $\mathrm{vars}(p+q)\subseteq\mathrm{vars}(p)\cup\mathrm{vars}(q)$, $\mathrm{vars}(p[\sigma])\subseteq\mathrm{vars}(p)\setminus\mathrm{dom}\,\sigma$;
\item $\deg(0) = 0$, $\deg(p+q)\le\max(\deg p,\deg q)$, $\deg(p[\sigma])\le\deg p$;
\item $0(\sigma) = 0$; $(p+q)(\sigma) = p(\sigma)+q(\sigma)$ when $\mathrm{vars}(p)\cup\mathrm{vars}(q)\subseteq\mathrm{dom}\,\sigma$; and $(p[\sigma])(\rho) = p(\rho \oplus \sigma)$ when $\mathrm{vars}(p)\subseteq\mathrm{dom}\,\sigma\cup\mathrm{dom}\,\rho$, where $\rho\oplus\sigma$ agrees with $\sigma$ on $\mathrm{dom}\,\sigma$ and with $\rho$ elsewhere;
\item (root bound) for every $d$, every variable $x$ and all $p\neq q$ with $\mathrm{vars}(p),\mathrm{vars}(q)\subseteq\{x\}$ and $\deg p,\deg q\le d$, at most $d$ elements $r\in\F$ satisfy $p([x\mapsto r]) = q([x\mapsto r])$.
\end{itemize}
In our development values and challenges lie in the same finite field $\F$ (locale \isaf{gkr\_layer\_sumcheck} in \texttt{Sumcheck\_Instance.thy}). The AFP entry provides a concrete instance: multivariate polynomials over $\F$ with $\deg$ the total degree, for which the root bound is the univariate root bound. Our development interprets its layer locale at this instance (\isaf{gkr\_mpoly}).

\begin{definition}[Sumcheck acceptance~\cite{sumcheckafp2024}]
\label{def:sumcheck}
A \emph{sumcheck prover} is a function $\mathrm{pr}$ that, given the current instance $(H,p,v)$, the current variable $x$, the list of remaining variables, the previous challenge $r'$ and a state $\mathit{st}$, returns a polynomial $q$ and a new state. For an instance $(H,p,v)$, where $H\subseteq\F$ is the summation domain, $p\in\mathcal{P}$ and $v\in\F$ is the claimed sum, the acceptance predicate $\cSC(\mathrm{pr},\mathit{st},(H,p,v),r',((x_1,r_1),\dots,(x_m,r_m)))$ is defined by recursion on $m$.
\begin{itemize}[leftmargin=*,nosep]
\item For $m = 0$ it holds iff $v = p(\emptyset)$.
\item For $m\geq 1$, let $(q,\mathit{st}') = \mathrm{pr}((H,p,v), x_1, (x_2,\dots,x_m), r', \mathit{st})$. It holds iff $\mathrm{vars}(q)\subseteq\{x_1\}$, $\deg q\le\deg p$, $v = \sum_{h\in H} q([x_1\mapsto h])$, and $\cSC(\mathrm{pr},\mathit{st}',(H,\,p[x_1\mapsto r_1],\,q([x_1\mapsto r_1])), r_1, ((x_2,r_2),\dots,(x_m,r_m)))$.
\end{itemize}
\end{definition}

\noindent This is the function \isaf{sumcheck} of the AFP theory \texttt{Sumcheck\_Protocol}. When $\mathrm{vars}(p)\subseteq\{x_1,\dots,x_m\}$, the final check compares the last round value with the value of $p$ at $(r_1,\dots,r_m)$, obtained by instantiating every variable. The round check $\deg q\le\deg p$ compares the prover's polynomial with the residual polynomial $p$ of the current round, under the abstract degree measure.

\begin{theorem}[Sumcheck soundness, AFP form~\cite{lfkn1992,bosshard2024,sumcheckafp2024}; imported; proof in \cref{app:gkr-proofs}]
\label{thm:sumcheck-afp}
Let $H\neq\emptyset$, let $\mathbf{x} = (x_1,\dots,x_m)$ be distinct variables with $\mathrm{vars}(p)\subseteq\{x_1,\dots,x_m\}$, and let $\deg p\le d$. For every sumcheck prover $\mathrm{pr}$, state $\mathit{st}$ and previous challenge $r'$,
\[
\Pr_{\mathbf{r}\leftarrow\F^m}\Bigl[\cSC\bigl(\mathrm{pr},\mathit{st},(H,p,v),r',((x_j,\mathbf{r}_{j-1}))_{j=1}^{m}\bigr) \;\wedge\; v\neq \textstyle\sum_{\sigma\in H^{\{x_1,\dots,x_m\}}} p(\sigma)\Bigr] \;\le\; \frac{m\,d}{|\F|}.
\]
\end{theorem}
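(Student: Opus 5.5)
We argue by induction on the number of rounds $m$, in the form in which \cref{def:sumcheck} is recursive. The induction hypothesis must range over all provers, states, previous challenges and claimed values $v$, and over every polynomial $p$ with $\mathrm{vars}(p)$ contained in the remaining variables and $\deg p\le d$. This generality is needed because the recursive call replaces $p$ by $p[x_1\mapsto r_1]$, $v$ by $q([x_1\mapsto r_1])$, and the prover state by $\mathit{st}'$. We write $S(p)$ for the true sum $\sum_{\sigma\in H^{\{x_1,\dots,x_m\}}} p(\sigma)$.

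\emph{Base case} $m=0$. Acceptance means $v=p(\emptyset)$. Since $\mathrm{vars}(p)=\emptyset$, the only assignment in $H^{\emptyset}$ is $\emptyset$, so $S(p)=p(\emptyset)$. Acceptance and $v\neq S(p)$ therefore cannot both hold, and the probability is $0$.

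\emph{Inductive step.} Fix $m\ge1$. The first prover message $(q,\mathit{st}')$ depends only on data fixed before $\mathbf r_0$ is drawn, so $q$ is a fixed polynomial. Define the honest univariate restriction $g=\sum_{\tau\in H^{\{x_2,\dots,x_m\}}} p[\tau]$, built with the abstract addition. The axioms on $\mathrm{vars}$ and $\deg$ under sums and instantiation give $\mathrm{vars}(g)\subseteq\{x_1\}$ and $\deg g\le d$. The evaluation axioms give $\sum_{h\in H} g([x_1\mapsto h])=S(p)$ and $g([x_1\mapsto r])=S(p[x_1\mapsto r])$ for every $r$, where the second sum runs over the remaining variables. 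This is a routine computation, since the assignments agree on disjoint domains and $\mathrm{vars}(p)$ is covered. Acceptance forces $v=\sum_h q([x_1\mapsto h])$. If $v\neq S(p)$, then $q\neq g$, because their sums over $H$ differ.

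We split the event on the first challenge $r_1=\mathbf r_0$.
\begin{itemize}
\item \emph{Agreement:} $q([x_1\mapsto r_1])=g([x_1\mapsto r_1])$. Both polynomials have degree at most $d$, using the round check $\deg q\le\deg p\le d$, and variables inside $\{x_1\}$. By the root-bound axiom this happens for at most $d$ values of $r_1$, so its probability is at most $d/|\F|$.
\item \emph{Disagreement:} the new claim $q([x_1\mapsto r_1])$ differs from $S(p[x_1\mapsto r_1])$. Conditioned on $r_1$, the rest of the run is a sumcheck on the instance $(H,p[x_1\mapsto r_1],q([x_1\mapsto r_1]))$ with $m-1$ variables and fresh independent challenges $\mathbf r_{\ge1}$. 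The degree and vars axioms for instantiation keep the hypotheses in place, and distinctness of the $x_j$ ensures $x_1$ is gone. The induction hypothesis bounds this conditional probability by $(m-1)d/|\F|$, and averaging over $r_1$ preserves the bound.
\end{itemize}
Summing the two cases gives $md/|\F|$.

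\emph{Main obstacle.} The delicate step is making the probability bookkeeping precise for the uniform measure on $\F^m$. I would avoid conditional probabilities by counting instead. I would write the bad set as a subset of $\F\times\F^{m-1}$ and bound its size by $d\cdot|\F|^{m-1}+|\F|\cdot(m-1)d\,|\F|^{m-2}$, using the root bound for the first term and the induction hypothesis on each fibre $r_1$ for the second. The other step needing care is verifying the identities for $g$ in the abstract polynomial structure, where evaluation of a sum is only guaranteed when the assignment covers all variables. I would handle this by always evaluating at assignments whose domain contains $\{x_1,\dots,x_m\}$.
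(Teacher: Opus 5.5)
Your proposal is correct and follows the paper's proof in Appendix~\ref{app:gkr-proofs} essentially step for step. Both argue by induction on $m$ over all instances, provers, states and previous challenges, use the true round polynomial $g=\sum_{\sigma\in H^{\{x_2,\dots,x_m\}}}p[\sigma]$, bound the agreement event $q([x_1\mapsto r_1])=g([x_1\mapsto r_1])$ by the root-bound axiom, and bound the disagreement event by the induction hypothesis on each fibre $r_1$; your counting argument is exactly the averaging of Lemma~\ref{lem:split}(a) that the paper invokes.
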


\noindent Our development imports this theorem from the AFP entry~\cite{sumcheckafp2024} and does not re-prove it; \cref{app:gkr-proofs} gives a proof from the axioms above.

Our verifier checks round degrees against a fixed bound. For a concrete polynomial $g\in\F[X_0,\dots,X_{m-1}]$ we therefore also use the textbook variant of the protocol.

\begin{definition}[Sumcheck with round degree bounds]
\label{def:sumcheck-D}
Let $g\in\F[X_0,\dots,X_{m-1}]$, $v\in\F$ and $\mathbf{D} = (D_0,\dots,D_{m-1})\in\mathbb{N}^m$. A strategy $\psi$ maps each challenge prefix $\mathbf{r}_{[0,t)}$, $t\in[m]$, to a univariate polynomial $g_t = \psi(\mathbf{r}_{[0,t)})\in\F[X]$. Put $v_0 = v$ and $v_{t+1} = g_t(\mathbf{r}_t)$. Then $\cSC^{\mathbf{D}}(g,v,\psi,\mathbf{r})$ holds iff $\deg g_t\le D_t$ and $v_t = g_t(0)+g_t(1)$ for every $t\in[m]$, and $v_m = g(\mathbf{r}_0,\dots,\mathbf{r}_{m-1})$. For a constant bound $D_t = D$ we write $\cSC^{D}$.
\end{definition}

\noindent Definition~\ref{def:sumcheck-D} differs from Definition~\ref{def:sumcheck} with $H = \{0,1\}$ only in the round degree check. A sumcheck prover with a state can compute any function of the challenge prefix, since it receives the previous challenge in every round and may store it, so the two notions of strategy are interchangeable.

\begin{theorem}[Sumcheck soundness with per-variable degrees~\cite{lfkn1992,thaler2013}; standard; proof in \cref{app:gkr-proofs}]
\label{thm:sumcheck-std}
Let $g\in\F[X_0,\dots,X_{m-1}]$ with $\deg_{X_t} g\le D_t$ for every $t\in[m]$, and $v\neq\sum_{\mathbf{b}\in\cube{m}} g(\mathbf{b})$. For every strategy $\psi$, $\Pr_{\mathbf{r}\leftarrow\F^m}\bigl[\cSC^{\mathbf{D}}(g,v,\psi,\mathbf{r})\bigr]\le \sum_{t\in[m]} D_t/|\F|$.
\end{theorem}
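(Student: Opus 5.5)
We prove \cref{thm:sumcheck-std} by the standard round-by-round argument: we define per-round bad events and bound each one with the univariate root bound. Fix $g$, $v$ with $v\neq\sum_{\mathbf{b}\in\cube{m}} g(\mathbf{b})$ and a strategy $\psi$. For $t\in[m]$ and a challenge prefix $\mathbf{r}_{[0,t)}$ we define the honest residual polynomial
\[
h_t(X) \;=\; \sum_{\mathbf{b}\in\cube{m-t-1}} g(\mathbf{r}_0,\dots,\mathbf{r}_{t-1},X,\mathbf{b}),
\]
a univariate polynomial with $\deg h_t\le\deg_{X_t} g\le D_t$. By construction $h_t(0)+h_t(1)$ is the true partial sum $S_t(\mathbf{r}_{[0,t)}) = \sum_{\mathbf{b}\in\cube{m-t}} g(\mathbf{r}_{[0,t)},\mathbf{b})$, and $h_t(\mathbf{r}_t) = S_{t+1}(\mathbf{r}_{[0,t+1)})$. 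Here $S_0 = \sum_{\mathbf{b}} g(\mathbf{b})$ and $S_m = g(\mathbf{r}_0,\dots,\mathbf{r}_{m-1})$.

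Next we show that acceptance forces a bad round. Call round $t$ \emph{bad} if $v_t\neq S_t$ and $v_{t+1} = S_{t+1}$. On the accepting event $\cSC^{\mathbf{D}}(g,v,\psi,\mathbf{r})$ we have $v_0\neq S_0$ by hypothesis and $v_m = g(\mathbf{r}) = S_m$ by the final check. Taking the least $t$ with $v_{t+1}=S_{t+1}$, we get $v_t\neq S_t$, so some round is bad. The acceptance probability is therefore at most $\sum_{t\in[m]}\Pr[E_t]$, where
\[
E_t \;=\; \{\text{round } t \text{ passes its checks},\ v_t\neq S_t,\ v_{t+1}=S_{t+1}\}.
\]

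We then bound each $\Pr[E_t]$ by conditioning on the prefix $\mathbf{r}_{[0,t)}$. Once the prefix is fixed, $g_t=\psi(\mathbf{r}_{[0,t)})$, $v_t$, $S_t$ and $h_t$ are all determined, and $\mathbf{r}_t$ is still uniform and independent of them. On $E_t$ the round check gives $g_t(0)+g_t(1)=v_t\neq S_t=h_t(0)+h_t(1)$, so $g_t\neq h_t$. Both polynomials have degree at most $D_t$, because the round check gives $\deg g_t\le D_t$. The event $v_{t+1}=S_{t+1}$ reads $g_t(\mathbf{r}_t)=h_t(\mathbf{r}_t)$, that is, $\mathbf{r}_t$ is a root of the nonzero polynomial $g_t-h_t$ of degree at most $D_t$. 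It therefore has conditional probability at most $D_t/|\F|$. Averaging over prefixes and summing over $t$ gives the bound $\sum_{t\in[m]} D_t/|\F|$.

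The only delicate step is the degree bound $\deg h_t\le D_t$ together with the independence bookkeeping. Summing over Boolean $\mathbf{b}$ and substituting field constants for the other variables cannot raise the degree in $X$ above $\deg_{X_t} g$, since every monomial contributes at most that degree in $X_t$. The formal probability argument is the conditioning on $\mathbf{r}_{[0,t)}$ over the product space $\F^m$, where $\Pr[E_t]=\mathbb{E}_{\mathbf{r}_{[0,t)}}\Pr_{\mathbf{r}_t}[\cdot]$ by Fubini on the finite product; nothing else is needed.
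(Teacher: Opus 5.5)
Your proposal is correct and follows the paper's proof essentially step for step. It uses the same true round polynomials and partial sums $S_t$, the least index $t$ with $v_{t+1}=S_{t+1}$, the root bound for the per-round event after fixing the prefix $\mathbf{r}_{[0,t)}$, and a union bound over $t$. The only cosmetic difference is that the paper's event $F_t$ records $g_t\neq H_t$ directly instead of $v_t\neq S_t$, and it does your Fubini averaging step through Lemma~\ref{lem:split}.
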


\subsection{Probability space}
\label{sec:probspace}

Every soundness statement in this paper has the following form. The circuit, its input, the claimed output and the prover strategy are fixed first. The verifier's challenges are then $N$ field elements drawn uniformly and independently, and the statement bounds the probability, over the challenges alone, of an event determined by the strategy and the challenges. The prover strategy is arbitrary and adaptive: each of its messages may depend on all challenges drawn before it. We state the results for deterministic strategies. A randomized strategy whose coins are independent of the challenges is a probability mixture of deterministic ones, so every bound below holds for it by averaging. Following the imported development, the mechanized statements express the challenge distribution as the uniform distribution on the set $\F^N$ of $N$-tuples (\isaf{pmf\_of\_set} (\isaf{tuples} \isaf{UNIV} $N$)). Two elementary facts about this distribution carry all independence arguments of Section~\ref{sec:gkr}.

\begin{lemma}[Splitting uniform tuples; mechanized]
\label{lem:split}
Let $E\subseteq\F^{k+m}$.
\begin{enumerate}[label=(\alph*),leftmargin=*,nosep]
\item $\Pr_{\mathbf{r}\leftarrow\F^{k+m}}[\mathbf{r}\in E] = |\F|^{-k}\sum_{\mathbf{u}\in\F^k}\Pr_{\mathbf{v}\leftarrow\F^m}[\mathbf{u}\,\|\,\mathbf{v}\in E]$.
\item If $E = \{\mathbf{r} : \mathbf{r}_{[0,k)}\in E'\}$ for some $E'\subseteq\F^k$, then $\Pr_{\mathbf{r}\leftarrow\F^{k+m}}[\mathbf{r}\in E] = \Pr_{\mathbf{u}\leftarrow\F^k}[\mathbf{u}\in E']$.
\end{enumerate}
\end{lemma}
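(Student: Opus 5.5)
The plan is to prove both parts by counting, since the uniform distribution on $\F^{k+m}$ gives every tuple probability $|\F|^{-(k+m)}$, and the concatenation map $(\mathbf{u},\mathbf{v})\mapsto\mathbf{u}\,\|\,\mathbf{v}$ is a bijection from $\F^k\times\F^m$ onto $\F^{k+m}$. This bijection is the only structural fact needed; everything else is rearranging finite sums.

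For (a), I will write $\Pr_{\mathbf{r}}[\mathbf{r}\in E] = |E|/|\F|^{k+m}$. Through the bijection, $|E|$ becomes the number of pairs $(\mathbf{u},\mathbf{v})$ with $\mathbf{u}\,\|\,\mathbf{v}\in E$. Grouping this count by the prefix $\mathbf{u}$ gives $|E| = \sum_{\mathbf{u}\in\F^k} |\{\mathbf{v}\in\F^m : \mathbf{u}\,\|\,\mathbf{v}\in E\}|$. Each summand is $|\F|^m\cdot\Pr_{\mathbf{v}}[\mathbf{u}\,\|\,\mathbf{v}\in E]$, so dividing by $|\F|^{k+m}$ yields exactly the right-hand side with the factor $|\F|^{-k}$.

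For (b), I will apply (a). For each $\mathbf{u}$, the event $\mathbf{u}\,\|\,\mathbf{v}\in E$ holds iff $(\mathbf{u}\,\|\,\mathbf{v})_{[0,k)} = \mathbf{u}\in E'$, which does not depend on $\mathbf{v}$. The inner probability is therefore the indicator of $\mathbf{u}\in E'$. The sum then collapses to $|\F|^{-k}|E'| = \Pr_{\mathbf{u}}[\mathbf{u}\in E']$.

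The only step that needs care, mainly in the mechanization rather than on paper, is establishing the bijection between $\F^{k+m}$ (as \isaf{tuples} of length $k+m$) and pairs of tuples via list append and take/drop. I would prove that injectivity and surjectivity follow from the lemmas that \emph{take} and \emph{drop} invert \emph{append} on length-$k$ prefixes. I would then transport the cardinality through \isaf{pmf\_of\_set} using the standard sum-over-image lemma. The rest is finite-sum arithmetic.
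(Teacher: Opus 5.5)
Your proposal is correct and follows the paper's proof: both use the bijection $(\mathbf{u},\mathbf{v})\mapsto\mathbf{u}\,\|\,\mathbf{v}$ to write $|E|$ as a sum over prefixes and divide by $|\F|^{k+m}$ for (a). Both then obtain (b) by observing that the inner probability is the indicator of $\mathbf{u}\in E'$.
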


\begin{proof}
The map $(\mathbf{u},\mathbf{v})\mapsto\mathbf{u}\,\|\,\mathbf{v}$ is a bijection from $\F^k\times\F^m$ onto $\F^{k+m}$, so $|E| = \sum_{\mathbf{u}}|\{\mathbf{v} : \mathbf{u}\,\|\,\mathbf{v}\in E\}|$; dividing by $|\F|^{k+m}$ gives (a). In (b) the inner probability is $1$ if $\mathbf{u}\in E'$ and $0$ otherwise.
\end{proof}

\section{Soundness of the GKR Protocol}
\label{sec:gkr}

The GKR protocol~\cite{gkr2015,thaler2013} reduces a claim about the output of a layered arithmetic circuit to a claim about its input, one layer at a time, with one sumcheck per layer. This section defines the circuits (Section~\ref{sec:gkr-circuits}), proves the identity on which each layer's sumcheck rests (Section~\ref{sec:gkr-identity}), defines the verifier we analyze (Section~\ref{sec:gkr-verifier}) and the event that the mechanized theorem bounds (Section~\ref{sec:gkr-chain}), and proves that bound (Section~\ref{sec:gkr-thmb}). Section~\ref{sec:gkr-faithful} relates the event to the verifier and proves the verifier's soundness bound on paper. Complete proofs of the probabilistic steps are in Appendix~\ref{app:gkr-proofs}.

\subsection{Layered circuits}
\label{sec:gkr-circuits}

\begin{definition}[Layered circuit]
\label{def:circuit}
A \emph{gate} is a tuple $g = (\kappa,\mathrm{out},\mathrm{in}_1,\mathrm{in}_2,c)$ of a kind $\kappa\in\{\gLin,\gMul,\gPow\}$, wire indices $\mathrm{out},\mathrm{in}_1,\mathrm{in}_2\in\mathbb{N}$ and a coefficient $c\in\F$. The kinds act by $\mathrm{sem}(\gLin,a,a') = a$, $\mathrm{sem}(\gMul,a,a') = a a'$ and $\mathrm{sem}(\gPow,a,a') = a^3$. A \emph{layer} $L = (w,G,K)$ consists of a number $w = w(L)$ of index bits (the layer has $2^{w}$ wires), a finite list $G$ of gates and a finite list $K$ of constant pairs $(u,c)\in\mathbb{N}\times\F$. Its constant on wire $u$ is $\mathrm{const}_L(u) = \sum_{(u',c)\in K,\ u'=u} c$, and its evaluation on a table $b$ of the layer below is the table $\mathrm{eval}(L,b)\colon[2^{w}]\to\F$ given by
\[
\mathrm{eval}(L,b)(u) \;=\; \mathrm{const}_L(u) \;+\; \sum_{g\in G,\ \mathrm{out}(g) = u} c(g)\cdot\mathrm{sem}\bigl(\kappa(g),\, b(\mathrm{in}_1(g)),\, b(\mathrm{in}_2(g))\bigr).
\]
Both sums run over lists: a constant pair or a gate listed twice contributes twice. A \emph{circuit} consists of layers $L_0,\dots,L_{n-1}$ and a number $w_{\mathrm{in}}$ of input index bits. $L_0$ is the output layer, the gates of $L_i$ read the values of layer $i+1$, and the gates of $L_{n-1}$ read the input. For an input table $V_n$ the layer values are $V_i = \mathrm{eval}(L_i,V_{i+1})$ for $i = n-1,\dots,0$, and the output is $V_0$. A layer $L$ is \emph{well formed over} $B\in\mathbb{N}$ if every gate has $\mathrm{out}(g)<2^{w(L)}$, $\mathrm{in}_1(g),\mathrm{in}_2(g)<B$, and $\mathrm{in}_2(g) = \mathrm{in}_1(g)$ when $\kappa(g)\in\{\gLin,\gPow\}$, and every constant pair has $u<2^{w(L)}$. The circuit is well formed if $L_i$ is well formed over $2^{w(L_{i+1})}$ for $i<n-1$ and $L_{n-1}$ over $2^{w_{\mathrm{in}}}$.
\end{definition}

\noindent Mechanized as \isaf{gate\_sem}, \isaf{const\_at}, \isaf{layer\_eval}, \isaf{circuit\_values}, \isaf{wf\_layer} and \isaf{wf\_circuit} in \texttt{Layered\_Circuit.thy}. The formalization lists the values output first, $V_0,\dots,V_n$, and its recurrence is \isaf{circuit\_values\_step} in \texttt{GKR\_Assembly.thy}. The degree of a layer's output in the values below is at most three, which the cube gate attains.

The soundness theorem does not take a circuit as input. It takes the list of layers together with the value table that each layer reads.

\begin{definition}[Layer chain]
\label{def:chain}
A \emph{layer chain} is a nonempty list $\mathcal{C} = ((s_i,L_i,b_i))_{i\in[n]}$ of numbers $s_i$, layers $L_i$ and tables $b_i\colon[2^{s_i}]\to\F$ such that every gate of $L_i$ has $\mathrm{in}_1(g),\mathrm{in}_2(g)<2^{s_i}$, and, for $i<n-1$, $s_i = w(L_{i+1})$ and $b_i = \mathrm{eval}(L_{i+1},b_{i+1})$.
\end{definition}

\noindent Mechanized as \isaf{gkr\_chain\_wf} in \texttt{GKR\_Assembly.thy}, which also admits the empty chain and does not bound the length of the input table; the difference matters only for lifting (see before \cref{prop:lift-chain}). Thus $s_i$ is the number of index bits of the table that $L_i$ reads, namely layer $i+1$ for $i<n-1$ and the input for $i = n-1$, and the sumcheck of layer $i$ runs over $2s_i$ variables; we write $\sout = w(L_0)$ for the number of output index bits. For a well-formed circuit with input $V_n$, the triples $(s_i,L_i,V_{i+1})$ with $s_i = w(L_{i+1})$ for $i<n-1$ and $s_{n-1} = w_{\mathrm{in}}$ form a layer chain, directly from the definitions.

\subsection{The layer identity}
\label{sec:gkr-identity}

Fix a layer $L$ with $w = w(L)$ and a number $s$ such that every gate of $L$ reads below $2^s$. Its \emph{wiring tables} record, for $u\in[2^{w}]$ and $x,y\in[2^s]$, the total coefficient of each kind of gate:
\[
\mathrm{lin}_L(u,x) = \sum_{\substack{g\in G:\ \kappa(g)=\gLin,\\ \mathrm{out}(g)=u,\ \mathrm{in}_1(g)=x}} c(g),\quad
\mathrm{mul}_L(u,x,y) = \sum_{\substack{g\in G:\ \kappa(g)=\gMul,\ \mathrm{out}(g)=u,\\ \mathrm{in}_1(g)=x,\ \mathrm{in}_2(g)=y}} c(g),\quad
\mathrm{pow}_L(u,x) = \sum_{\substack{g\in G:\ \kappa(g)=\gPow,\\ \mathrm{out}(g)=u,\ \mathrm{in}_1(g)=x}} c(g),
\]
with sums over the list $G$. Their extensions in the output coordinate are, for $z\in\F^{w}$,
\[
\widetilde{\mathrm{lin}}_L(z,x) = \sum_{u\in[2^{w}]}\eqk_{w}(z,u)\,\mathrm{lin}_L(u,x),
\]
and likewise $\widetilde{\mathrm{mul}}_L(z,x,y)$, $\widetilde{\mathrm{pow}}_L(z,x)$ and $\widetilde{\mathrm{const}}_L(z) = \sum_{u}\eqk_{w}(z,u)\,\mathrm{const}_L(u)$. A \emph{claim list} is a list $C = ((z_1,c_1),\dots,(z_m,c_m))$ of points $z_k\in\F^{w}$ with coefficients $c_k\in\F$. For a table $b\colon[2^s]\to\F$ we define the \emph{cube integrand} and the \emph{true claim value}
\begin{align*}
I_{L,s,C,b}(x,y) &= \sum_{k=1}^{m} c_k\Bigl(\widetilde{\mathrm{lin}}_L(z_k,x)\,b(x)\,\beta_s(\bpt{y}{s}) + \widetilde{\mathrm{mul}}_L(z_k,x,y)\,b(x)\,b(y) + \widetilde{\mathrm{pow}}_L(z_k,x)\,b(x)^3\,\beta_s(\bpt{y}{s})\Bigr),\\
\ccs_L(C,b) &= \sum_{k=1}^{m} c_k\bigl(\widetilde{V}(z_k) - \widetilde{\mathrm{const}}_L(z_k)\bigr),\qquad V = \mathrm{eval}(L,b),
\end{align*}
for $x,y\in[2^s]$, where $\widetilde V$ is the $w$-variate extension. By Lemma~\ref{lem:mle-facts}(d), $\beta_s(\bpt{y}{s})$ is $1$ for $y = 0$ and $0$ otherwise. These objects are mechanized as \isaf{lin\_wt}, \isaf{mul\_wt}, \isaf{pow3\_wt}, \isaf{lin\_mle}, \isaf{mul\_mle}, \isaf{pow3\_mle}, \isaf{const\_mle}, \isaf{layer\_integrand} and \isaf{layer\_claim\_sum} in \texttt{Sumcheck\_Instance.thy}. If every claimed evaluation at $z_k$ were the true value $\widetilde V(z_k)$, the verifier's combined claim with the constants removed would be $\ccs_L(C,b)$.

\begin{lemma}[Layer identity; mechanized]
\label{lem:layer-identity}
If every gate of $L$ has $\mathrm{in}_1(g),\mathrm{in}_2(g)<2^s$, then for every claim list $C$ and table $b\colon[2^s]\to\F$,
\[
\sum_{x\in[2^s]}\sum_{y\in[2^s]} I_{L,s,C,b}(x,y) \;=\; \ccs_L(C,b).
\]
\end{lemma}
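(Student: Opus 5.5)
By linearity in the claim list, it suffices to prove the identity for a single claim $(z,c)$ and then sum over $k$. The left side is a finite double sum of a finite sum over $k$, so the sums can be exchanged. The right side $\ccs_L(C,b)$ is by definition a sum over $k$ of $c_k(\widetilde V(z_k)-\widetilde{\mathrm{const}}_L(z_k))$. So the core claim, for a fixed point $z\in\F^{w}$, is
\[
\sum_{x,y\in[2^s]}\Bigl(\widetilde{\mathrm{lin}}_L(z,x)\,b(x)\,\beta_s(\bpt{y}{s}) + \widetilde{\mathrm{mul}}_L(z,x,y)\,b(x)\,b(y) + \widetilde{\mathrm{pow}}_L(z,x)\,b(x)^3\,\beta_s(\bpt{y}{s})\Bigr) \;=\; \widetilde V(z)-\widetilde{\mathrm{const}}_L(z).
\]

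I will split the integrand into its three gate-kind terms. In the linear and cube terms, Lemma~\ref{lem:mle-facts}(d) makes the sum over $y$ collapse, since $\sum_y\beta_s(\bpt{y}{s})=1$. That leaves $\sum_x\widetilde{\mathrm{lin}}_L(z,x)b(x)$ and $\sum_x\widetilde{\mathrm{pow}}_L(z,x)b(x)^3$.

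Next I expand each wiring extension as $\sum_{u\in[2^w]}\eqk_w(z,u)\cdot(\text{wiring table})$ and pull the sum over $u$ to the outside. For each fixed $u$, unfolding the wiring table gives $\sum_x\mathrm{lin}_L(u,x)b(x)=\sum_{g}c(g)\,b(\mathrm{in}_1(g))$, where $g$ ranges over the $\gLin$ gates of $G$ with $\mathrm{out}(g)=u$. The same unfolding applies to $\gPow$ (giving $b(\mathrm{in}_1 g)^3$) and to $\gMul$, where the double sum over $(x,y)$ yields $b(\mathrm{in}_1 g)\,b(\mathrm{in}_2 g)$.

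This unfolding is a regrouping of a list sum: each gate is counted exactly once, at $x=\mathrm{in}_1(g)$ (and $y=\mathrm{in}_2(g)$ for $\gMul$). The hypothesis $\mathrm{in}_1(g),\mathrm{in}_2(g)<2^s$ is exactly what guarantees that this index lies in the range $[2^s]$ being summed, so no gate is lost. I expect this exchange of a sum over indices with a filtered list sum to be the main obstacle, particularly with duplicate gates. I would handle it by induction on the list $G$, checking that both sides are additive over list concatenation and agree on a single gate.

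Adding the three kinds gives, for each $u$, exactly $\sum_{g\in G,\ \mathrm{out}(g)=u}c(g)\,\mathrm{sem}(\kappa(g),b(\mathrm{in}_1 g),b(\mathrm{in}_2 g))$. By the definition of $\mathrm{eval}$ this equals $V(u)-\mathrm{const}_L(u)$; here $\mathrm{sem}(\gLin,a,a')=a$ and $\mathrm{sem}(\gPow,a,a')=a^3$ do not depend on $\mathrm{in}_2$, so well-formedness of $\mathrm{in}_2$ is not needed for those kinds. Multiplying by $\eqk_w(z,u)$ and summing over $u$ gives $\widetilde V(z)-\widetilde{\mathrm{const}}_L(z)$ by Definition~\ref{def:mle} and the linearity in Lemma~\ref{lem:mle-facts}(c). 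Multiplying by $c_k$ and summing over $k$ then completes the proof.
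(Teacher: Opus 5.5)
Your proof is correct and follows the paper's argument. You use the same reduction to a single claim, the same regrouping of the gate list by kind (relying on $\mathrm{in}_1(g),\mathrm{in}_2(g)<2^s$), the same extension in the output coordinate, and the same fact $\sum_y\beta_s(\bpt{y}{s})=1$; the only difference is that you collapse the $y$-sum first, whereas the paper lifts the unary terms at the end, and your induction on $G$ is one way to make precise the paper's ``summing over the gate list and exchanging the two finite sums''.
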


\begin{proof}
Both sides are linear combinations, with the coefficients $c_k$, of the same expressions for single claims $((z_k,1))$, so it suffices to treat $C = ((z,1))$.

\emph{Regrouping by kind.} For $u\in[2^{w}]$,
\[
\mathrm{eval}(L,b)(u) = \mathrm{const}_L(u) + \sum_{x}\mathrm{lin}_L(u,x)\,b(x) + \sum_{x,y}\mathrm{mul}_L(u,x,y)\,b(x)\,b(y) + \sum_{x}\mathrm{pow}_L(u,x)\,b(x)^3,
\]
with $x,y$ ranging over $[2^s]$. A $\gLin$ gate $g$ with $\mathrm{out}(g) = u$ contributes $c(g)\,b(\mathrm{in}_1(g)) = \sum_{x}[\mathrm{in}_1(g) = x]\,c(g)\,b(x)$, which uses $\mathrm{in}_1(g)<2^s$; summing over the gate list and exchanging the two finite sums collects the coefficients into $\mathrm{lin}_L(u,x)$. A $\gMul$ gate contributes $c(g)\,b(\mathrm{in}_1(g))\,b(\mathrm{in}_2(g))$ and is collected by the pair $(\mathrm{in}_1,\mathrm{in}_2)$ into $\mathrm{mul}_L$, and a $\gPow$ gate contributes $c(g)\,b(\mathrm{in}_1(g))^3$ and is collected into $\mathrm{pow}_L$.

\emph{Extension in the output coordinate.} Multiplying by $\eqk_{w}(z,u)$, summing over $u\in[2^{w}]$ and exchanging sums gives
\[
\widetilde V(z) = \widetilde{\mathrm{const}}_L(z) + \sum_{x}\widetilde{\mathrm{lin}}_L(z,x)\,b(x) + \sum_{x,y}\widetilde{\mathrm{mul}}_L(z,x,y)\,b(x)\,b(y) + \sum_{x}\widetilde{\mathrm{pow}}_L(z,x)\,b(x)^3.
\]

\emph{Lifting the unary terms.} Since $\sum_{y}\beta_s(\bpt{y}{s}) = 1$ (Lemma~\ref{lem:mle-facts}(d)), $\sum_x A(x) = \sum_{x}\sum_{y}A(x)\,\beta_s(\bpt{y}{s})$ for the $\gLin$ and $\gPow$ terms. Summing $I_{L,s,((z,1)),b}$ over the cube therefore gives $\widetilde V(z) - \widetilde{\mathrm{const}}_L(z) = \ccs_L(((z,1)),b)$.
\end{proof}

\noindent The identity is the completeness half of a layer reduction: when the incoming claim value equals $\ccs_L(C,b)$, the honest sumcheck prover is accepted for every challenge tuple (\isaf{gkr\_layer\_reduction\_complete} in \texttt{GKR\_Assembly.thy}).

\paragraph{The integrand as a polynomial.} The formalization defines the integrand only on the cube. The verifier's final check evaluates its natural extension to all of $\F^{2s}$, which we define next.

\begin{definition}[Integrand polynomial]
\label{def:integrand-poly}
With $L$, $s$, $C$ and $b$ as above, let $X = (X_0,\dots,X_{s-1})$ and $Y = (Y_0,\dots,Y_{s-1})$ be indeterminates, and for each claim $k$ let
\begin{gather*}
\Lambda_k(X) = \sum_{x}\eqk_s(X,x)\,\widetilde{\mathrm{lin}}_L(z_k,x),\qquad
\Pi_k(X) = \sum_{x}\eqk_s(X,x)\,\widetilde{\mathrm{pow}}_L(z_k,x),\\
M_k(X,Y) = \sum_{x,y}\eqk_s(X,x)\,\eqk_s(Y,y)\,\widetilde{\mathrm{mul}}_L(z_k,x,y).
\end{gather*}
The integrand polynomial is
\[
\widehat{P}_{L,s,C,b}(X,Y) = \sum_{k=1}^{m} c_k\Bigl(\Lambda_k(X)\,\widetilde b(X)\,\beta_s(Y) + M_k(X,Y)\,\widetilde b(X)\,\widetilde b(Y) + \Pi_k(X)\,\widetilde b(X)^3\,\beta_s(Y)\Bigr),
\]
where $\widetilde b$ is the $s$-variate extension of $b$. For $\xi,\eta\in\F^s$ and $t_x,t_y\in\F$ let
\[
\Phi_{L,s,C}(\xi,\eta,t_x,t_y) = \sum_{k=1}^{m} c_k\Bigl(\Lambda_k(\xi)\,t_x\,\beta_s(\eta) + M_k(\xi,\eta)\,t_x\,t_y + \Pi_k(\xi)\,t_x^3\,\beta_s(\eta)\Bigr).
\]
\end{definition}

\noindent The value $\Lambda_k(\xi) = \sum_{u,x}\eqk_{w}(z_k,u)\,\eqk_s(\xi,x)\,\mathrm{lin}_L(u,x)$ is the multilinear extension of the table $\mathrm{lin}_L$ in all its index bits, evaluated at $(z_k,\xi)$, and likewise for $M_k$ and $\Pi_k$. The verifier computes these wiring extensions from the circuit description, so $\Phi_{L,s,C}$ requires no knowledge of $b$, and by construction
\begin{equation}
\label{eq:phi-hatp}
\Phi_{L,s,C}\bigl(\xi,\eta,\widetilde b(\xi),\widetilde b(\eta)\bigr) = \widehat{P}_{L,s,C,b}(\xi,\eta).
\end{equation}

\begin{lemma}[Integrand polynomial; paper proof]
\label{lem:integrand-poly}
\begin{enumerate}[label=(\alph*),leftmargin=*,nosep]
\item $\widehat{P}_{L,s,C,b}(\bpt{x}{s},\bpt{y}{s}) = I_{L,s,C,b}(x,y)$ for all $x,y\in[2^s]$.
\item For every $j\in[s]$, $\deg_{X_j}\widehat{P}_{L,s,C,b}\le 4$ and $\deg_{Y_j}\widehat{P}_{L,s,C,b}\le 2$, and $\deg\widehat{P}_{L,s,C,b}\le 5s$.
\end{enumerate}
\end{lemma}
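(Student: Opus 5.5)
For part (a) I evaluate each factor of $\widehat{P}_{L,s,C,b}$ at the Boolean point $(\bpt{x}{s},\bpt{y}{s})$ using \cref{lem:mle-facts}. By part (a) of that lemma, $\eqk_s(\bpt{x}{s},x') = [x=x']$. The sums defining $\Lambda_k$ and $\Pi_k$ therefore collapse to $\widetilde{\mathrm{lin}}_L(z_k,x)$ and $\widetilde{\mathrm{pow}}_L(z_k,x)$, and the double sum defining $M_k$ collapses to $\widetilde{\mathrm{mul}}_L(z_k,x,y)$. By part (b), $\widetilde b(\bpt{x}{s}) = b(x)$ and $\widetilde b(\bpt{y}{s}) = b(y)$. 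The factor $\beta_s(Y)$ evaluated at $\bpt{y}{s}$ is literally $\beta_s(\bpt{y}{s})$. Substituting these values term by term, summed over $k$ with coefficients $c_k$, gives exactly the cube integrand $I_{L,s,C,b}(x,y)$.

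For part (b) I work with the degree bounds of the building blocks, using $\deg_{X_j}(pq)\le\deg_{X_j}p+\deg_{X_j}q$ and the fact that degrees of sums are bounded by the maximum. As observed after \cref{def:mle}, each $\eqk_s(X,x)$ has degree at most one in each $X_j$, is free of $Y$, and has total degree at most $s$. The same holds for $\widetilde b(X)$ and, in $Y$, for $\eqk_s(Y,y)$, $\widetilde b(Y)$ and $\beta_s(Y)$. Hence $\Lambda_k$ and $\Pi_k$ have $\deg_{X_j}\le1$, total degree $\le s$, and no $Y$. $M_k$ has $\deg_{X_j},\deg_{Y_j}\le1$ and total degree $\le 2s$.

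I then bound the three summands separately.
\begin{itemize}
\item The $\gLin$ term $\Lambda_k\widetilde b(X)\beta_s(Y)$ has $\deg_{X_j}\le2$, $\deg_{Y_j}\le1$ and total degree $\le 3s$.
\item The $\gMul$ term $M_k\widetilde b(X)\widetilde b(Y)$ has $\deg_{X_j}\le2$, $\deg_{Y_j}\le2$ and total degree $\le 4s$.
\item The $\gPow$ term $\Pi_k\widetilde b(X)^3\beta_s(Y)$ has $\deg_{X_j}\le 1+3=4$, $\deg_{Y_j}\le1$ and total degree $\le s+3s+s=5s$.
\end{itemize}
Taking maxima over the three terms and over $k$ gives $\deg_{X_j}\le4$, $\deg_{Y_j}\le2$ and $\deg\le5s$.

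The main obstacle is bookkeeping rather than mathematics. One must state that the polynomial expressions (not only their evaluations) have these degrees. The coefficients $\widetilde{\mathrm{lin}}_L(z_k,x)$ etc. are field constants, since $z_k$ is a fixed point, so they contribute no degree. The total-degree bound for the cube term requires the $s$ bound on the total degree of $\widetilde b$, which is a product bound on each $\eqk$ monomial. No cancellation argument is needed, since only upper bounds are claimed.
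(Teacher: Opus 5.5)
Your proof is correct and takes the same route as the paper's. In part (a) you collapse the $\eqk_s$ sums at Boolean points using \cref{lem:mle-facts}(a,b). In part (b) you use subadditivity of per-variable and total degree under products: the three summands have degree $2$, $2$ and $4$ in each $X_j$, at most $2$ in each $Y_j$, and total degree $3s$, $4s$ and $5s$.
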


\begin{proof}
(a) By Lemma~\ref{lem:mle-facts}(a,b), $\Lambda_k(\bpt{x}{s}) = \widetilde{\mathrm{lin}}_L(z_k,x)$, $M_k(\bpt{x}{s},\bpt{y}{s}) = \widetilde{\mathrm{mul}}_L(z_k,x,y)$, $\Pi_k(\bpt{x}{s}) = \widetilde{\mathrm{pow}}_L(z_k,x)$ and $\widetilde b(\bpt{x}{s}) = b(x)$. (b) The polynomials $\Lambda_k$, $\Pi_k$ and $\widetilde b(X)$ have degree at most one in each $X_j$ and do not contain $Y$; $M_k$ has degree at most one in each $X_j$ and each $Y_j$; $\widetilde b(Y)$ and $\beta_s(Y)$ have degree at most one in each $Y_j$ and do not contain $X$. The degree in a fixed variable is subadditive under products and does not increase under sums. In $X_j$ the three terms have degree at most $1+1$, $1+1$ and $1+3$; in $Y_j$ at most $1$, $1+1$ and $1$. Each of $\Lambda_k$, $\Pi_k$, $\widetilde b(X)$, $\widetilde b(Y)$ and $\beta_s(Y)$ has total degree at most $s$ and $M_k$ at most $2s$, so the three terms have total degree at most $3s$, $4s$ and $5s$.
\end{proof}

\subsection{The verifier}
\label{sec:gkr-verifier}

In the sumcheck of a layer with $s$ index bits, variables $0,\dots,s-1$ stand for $X_0,\dots,X_{s-1}$ and variables $s,\dots,2s-1$ for $Y_0,\dots,Y_{s-1}$. A challenge tuple $\mathbf{r}\in\F^{2s}$ thus determines the points $\xi = \mathbf{r}_{[0,s)}$ and $\eta = \mathbf{r}_{[s,2s)}$, and the Boolean assignment $\sigma_{x,y}$ of index pair $(x,y)\in[2^s]^2$ gives variable $j$ the value $\mathrm{bit}_j(x)$ and variable $s+j$ the value $\mathrm{bit}_j(y)$ (\isaf{idx\_subst} in \texttt{Sumcheck\_Instance.thy}).

\begin{definition}[Verifier]
\label{def:verifier}
The verifier $\mathcal{V}$ is given the layers $L_0,\dots,L_{n-1}$ with the numbers $s_0,\dots,s_{n-1}$, the input table $b_{n-1}\colon[2^{s_{n-1}}]\to\F$ in the clear, and a claimed output table $o\colon[2^{\sout}]\to\F$ with $\sout = w(L_0)$. It interacts with a prover as follows.
\begin{enumerate}[label=\arabic*.,leftmargin=*]
\item \emph{Output seed.} Draw $z\leftarrow\F^{\sout}$, set $C_0 = ((z,1))$ and $v_0 = \widetilde o(z) - \widetilde{\mathrm{const}}_{L_0}(z)$, and run the layer reduction at layer $0$.
\item \emph{Layer reduction at layer $i$ with claim list $C_i$ and value $v_i$.} Write $s = s_i$ and $L = L_i$.
\begin{enumerate}[label=(\alph*),leftmargin=*]
\item For $t = 0,\dots,2s-1$: receive a univariate polynomial $g_t$, reject unless $\deg g_t\le 4$ and $g_t(0)+g_t(1) = u_t$, where $u_0 = v_i$; then draw $\mathbf{r}_t\leftarrow\F$ and set $u_{t+1} = g_t(\mathbf{r}_t)$. Let $\xi = \mathbf{r}_{[0,s)}$ and $\eta = \mathbf{r}_{[s,2s)}$.
\item Receive $t_x,t_y\in\F$, the claimed values of $\widetilde{b_i}$ at $\xi$ and $\eta$, and reject unless $u_{2s} = \Phi_{L,s,C_i}(\xi,\eta,t_x,t_y)$.
\item Draw $\gamma\leftarrow\F$. If $i<n-1$, set $C_{i+1} = ((\xi,1),(\eta,\gamma))$ and
\[
v_{i+1} = \bigl(t_x - \widetilde{\mathrm{const}}_{L_{i+1}}(\xi)\bigr) + \gamma\bigl(t_y - \widetilde{\mathrm{const}}_{L_{i+1}}(\eta)\bigr),
\]
and run the layer reduction at layer $i+1$.
\item If $i = n-1$, accept iff $t_x = \widetilde{b_{n-1}}(\xi)$ and $t_y = \widetilde{b_{n-1}}(\eta)$.
\end{enumerate}
\end{enumerate}
\end{definition}

\noindent The verifier draws $N = \sout + \sum_{i\in[n]}(2s_i+1)$ challenges, in the order $z$, then for each layer the $2s_i$ sumcheck challenges and $\gamma$; at layer $n-1$ the challenge $\gamma$ is drawn and not used, which matches the randomness layout of the formalization (\isaf{chain\_rlen}). The prover sees every challenge as soon as it is drawn. The verifier holds the input table and evaluates its extension itself. \Cref{fig:chain} shows the steps of $\mathcal{V}$ and the term of the bound that each contributes.

\begin{figure}[b]
\centering
\begin{tikzpicture}[font=\small,
  st/.style={draw, rounded corners=2pt, align=center, inner sep=4pt, minimum height=10mm, fill=black!3},
  arr/.style={-{Stealth[length=2mm]}, semithick},
  err/.style={font=\footnotesize, text=black!75}]
\node[st] (o) {claimed output\\ $o$ on $2^{\sout}$ wires};
\node[st, right=15mm of o] (c) {claim $v_i$\\ about layer $i$};
\node[st, right=25mm of c] (xy) {values of $\widetilde{b_i}$\\ at $\xi$ and $\eta$};
\node[st, right=15mm of xy] (in) {input table\\ $b_{n-1}$};
\draw[arr] (o) -- node[above] {seed $z$} node[below, err] {$\sout/|\F|$} (c);
\draw[arr] (c) -- node[above, align=center] {sumcheck,\\ $2s_i$ rounds} node[below, err] {$8s_i/|\F|$} (xy);
\draw[arr] (xy.south) .. controls +(0,-1.0) and +(0,-1.0) .. node[below, align=center] {combine with $\gamma$, then $i\leftarrow i+1$\\ \textcolor{black!75}{\footnotesize $1/|\F|$}} (c.south);
\draw[arr] (xy) -- node[above] {$i = n-1$} node[below, err] {no error} (in);
\end{tikzpicture}
\caption{The steps of the verifier and the term of the bound that each contributes.}
\label{fig:chain}
\end{figure}

\paragraph{Relation to the formalization.} The development does not define $\mathcal{V}$ as a program. \texttt{GKR\_Assembly.thy} defines the event of Definition~\ref{def:chainbad} directly over an adversary, with the randomness layout and the claim update of $\mathcal{V}$; the final input check of $\mathcal{V}$ appears in the event as the absence of a continuation at the last layer. The difference lies in each layer's acceptance. The event uses the acceptance predicate of Definition~\ref{def:sumcheck} for a \emph{representative polynomial}: its round check is $\deg q\le\deg p$ for the residual representative under the abstract degree, and its final check evaluates the representative at $(\xi,\eta)$. $\mathcal{V}$ checks $\deg g_t\le 4$ and evaluates $\Phi$. Section~\ref{sec:gkr-faithful} treats the consequences.

\subsection{The chain event}
\label{sec:gkr-chain}

\begin{definition}[Representative]
\label{def:repr}
Let $L$, $s$, $C$, $b$ be as in Section~\ref{sec:gkr-identity} and $d\in\mathbb{N}$. An abstract polynomial $p\in\mathcal{P}$ \emph{represents} the layer instance with degree bound $d$, written $\cRepr(p,L,s,C,b,d)$, if $\mathrm{vars}(p)\subseteq[2s]$, $\deg p\le d$, and $p(\sigma_{x,y}) = I_{L,s,C,b}(x,y)$ for all $x,y\in[2^s]$. A \emph{representative family} $P$ assigns an abstract polynomial $P(L,s,C,b)$ to every layer, number, claim list and table.
\end{definition}

\noindent Mechanized as \isaf{layer\_poly\_repr} in \texttt{Sumcheck\_Instance.thy}. Only agreement on the cube is required; the values of a representative elsewhere are unconstrained.

\begin{lemma}[Layer sumcheck soundness; mechanized]
\label{lem:layer-sumcheck}
Assume $\cRepr(p,L,s,C,b,d)$, that every gate of $L$ reads below $2^s$, and $v\neq\ccs_L(C,b)$. For every sumcheck prover $\mathrm{pr}$, state $\mathit{st}$ and previous challenge $r'$,
\[
\Pr_{\mathbf{r}\leftarrow\F^{2s}}\Bigl[\cSC\bigl(\mathrm{pr},\mathit{st},(\{0,1\},p,v),r',((j,\mathbf{r}_j))_{j\in[2s]}\bigr)\Bigr] \le \frac{2s\,d}{|\F|}.
\]
\end{lemma}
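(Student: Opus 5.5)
The plan is to reduce the statement directly to the imported sumcheck soundness theorem (\cref{thm:sumcheck-afp}), instantiated with $H = \{0,1\}$, $m = 2s$, variables $x_{j+1} = j$ for $j\in[2s]$, and degree bound $d$. The only hypothesis of \cref{lem:layer-sumcheck} that does not appear verbatim among the hypotheses of \cref{thm:sumcheck-afp} is the falsity of the claim. That theorem requires $v\neq\sum_{\sigma\in\{0,1\}^{[2s]}} p(\sigma)$, while we are given $v\neq\ccs_L(C,b)$. So the core of the argument is to identify the Boolean-hypercube sum of the representative $p$ with $\ccs_L(C,b)$.

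The hypotheses of \cref{thm:sumcheck-afp} are checked in order:
\begin{enumerate}[label=(\roman*),leftmargin=*,nosep]
\item $H=\{0,1\}$ is nonempty.
\item The variables $0,\dots,2s-1$ are distinct.
\item $\mathrm{vars}(p)\subseteq[2s]$ and $\deg p\le d$ both come from $\cRepr(p,L,s,C,b,d)$.
\end{enumerate}

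The key identity is obtained in three steps.
\begin{enumerate}[label=(\alph*),leftmargin=*,nosep]
\item Reindex the sum over assignments $\sigma\in\{0,1\}^{[2s]}$ as a sum over index pairs $(x,y)\in[2^s]^2$ via $(x,y)\mapsto\sigma_{x,y}$. This map is a bijection, because $a\mapsto\bpt{a}{s}$ is a bijection $[2^s]\to\cube{s}$ and the assignment splits into its first $s$ and last $s$ variables.
\item Replace $p(\sigma_{x,y})$ by $I_{L,s,C,b}(x,y)$, using the agreement clause of $\cRepr$.
\item Apply the layer identity (\cref{lem:layer-identity}), which uses the hypothesis that every gate reads below $2^s$, to conclude $\sum_{x,y} I_{L,s,C,b}(x,y)=\ccs_L(C,b)$.
\end{enumerate}
Hence $v\neq\sum_\sigma p(\sigma)$, and \cref{thm:sumcheck-afp} gives the bound $2s\,d/|\F|$ for every prover, state and previous challenge. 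The challenge tuple indexing $((j,\mathbf{r}_j))_{j\in[2s]}$ is exactly the form $((x_j,\mathbf{r}_{j-1}))_{j=1}^{m}$ with $x_j=j-1$.

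I expect the main obstacle to be the reindexing in step (a), which is mainly bookkeeping rather than mathematics. The sum in \cref{thm:sumcheck-afp} ranges over partial maps with domain exactly $[2s]$ and values in $\{0,1\}$, whereas $\sigma_{x,y}$ is built from the most-significant-first bit convention. Two facts must be shown: that $\sigma_{x,y}$ has domain exactly $[2s]$ with Boolean values, and that every such assignment arises from exactly one pair $(x,y)$. My first attempt would be to show injectivity from the injectivity of $\bpt{\cdot}{s}$ on each half, and then conclude bijectivity by counting, since both sets have $2^{2s}$ elements.
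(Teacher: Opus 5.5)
Your proposal is correct and matches the paper's proof. It reindexes the hypercube sum over the assignments $\sigma_{x,y}$; the paper cites the mechanized bijection \isaf{sum\_substs\_index} there, where you sketch injectivity plus counting. It then uses the agreement clause of $\cRepr$ and \cref{lem:layer-identity} to get $\sum_\sigma p(\sigma) = \ccs_L(C,b)\neq v$, and applies \cref{thm:sumcheck-afp} with $H=\{0,1\}$, the variables $0,\dots,2s-1$ and degree bound $d$, the acceptance event coinciding with that theorem's conjunctive event because the falsity of the claim holds for every challenge tuple.
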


\begin{proof}
The assignments in $\{0,1\}^{[2s]}$ are exactly the $\sigma_{x,y}$ with $x,y\in[2^s]$, each obtained once (\isaf{sum\_substs\_index}). Hence $\sum_{\sigma\in\{0,1\}^{[2s]}}p(\sigma) = \sum_{x,y}p(\sigma_{x,y}) = \sum_{x,y}I_{L,s,C,b}(x,y) = \ccs_L(C,b)$ by $\cRepr$ and Lemma~\ref{lem:layer-identity}. So $v$ differs from the sum of $p$ over the cube, the acceptance event coincides with the event of Theorem~\ref{thm:sumcheck-afp} for $H = \{0,1\}$, the variables $0,\dots,2s-1$ and the degree bound $d$, and that theorem gives the bound.
\end{proof}

\begin{definition}[Chain event]
\label{def:chainbad}
An \emph{adversary} $A$ maps a layer index $j\in\mathbb{N}$ and the tuple $\pi$ of challenges drawn so far to a quintuple $A(j,\pi) = (\mathrm{pr},\mathit{st},r',t_x,t_y)$: a sumcheck prover with its initial state and previous challenge, and two carried values. Fix a representative family $P$. For a layer chain $\mathcal{C}$ with first element $(s,L,b)$ and remaining list $\mathcal{C}'$, a claim list $C$, a value $v$ and a challenge tuple $\mathbf{r}$, the event $\cBad_P(A,\mathcal{C},C,v,\mathbf{r})$ holds iff (i) or (ii) holds:
\begin{enumerate}[label=(\roman*),leftmargin=*]
\item $\cSC\bigl(\mathrm{pr},\mathit{st},(\{0,1\},P(L,s,C,b),v),r',((j,\mathbf{r}_j))_{j\in[2s]}\bigr)$, where $(\mathrm{pr},\mathit{st},r',\cdot,\cdot) = A(0,())$;
\item $\mathcal{C}'$ is nonempty with first layer $L'$, and, with $(\cdot,\cdot,\cdot,t_x,t_y) = A(0,\mathbf{r}_{[0,2s)})$, $\xi = \mathbf{r}_{[0,s)}$, $\eta = \mathbf{r}_{[s,2s)}$ and $\gamma = \mathbf{r}_{2s}$,
\[
\bigl(t_x\neq\widetilde b(\xi)\ \text{or}\ t_y\neq\widetilde b(\eta)\bigr)\quad\text{and}\quad \cBad_P\bigl(A',\mathcal{C}',C',v',\mathbf{r}_{\geq 2s+1}\bigr),
\]
where $\widetilde b$ is the $s$-variate extension of $b$, $C' = ((\xi,1),(\eta,\gamma))$, $v' = (t_x - \widetilde{\mathrm{const}}_{L'}(\xi)) + \gamma\,(t_y - \widetilde{\mathrm{const}}_{L'}(\eta))$ and $A'(j,\pi) = A(j+1,\mathbf{r}_{[0,2s+1)}\,\|\,\pi)$.
\end{enumerate}
On the empty list, $\cBad_P$ is false. For a claimed output $o\colon[2^{\sout}]\to\F$, $\sout = w(L_0)$, the \emph{seeded chain event} is
\[
\cBad^{\mathrm{out}}_P(A,\mathcal{C},o,\mathbf{r}) \;=\; \cBad_P\bigl(A_z,\mathcal{C},((z,1)),\,\widetilde o(z) - \widetilde{\mathrm{const}}_{L_0}(z),\,\mathbf{r}_{\geq\sout}\bigr),\qquad z = \mathbf{r}_{[0,\sout)},\ A_z(j,\pi) = A(j,z\,\|\,\pi).
\]
\end{definition}

\noindent Mechanized as \isaf{gkr\_chain\_bad} in \texttt{GKR\_Assembly.thy}; the seeded event is the event of \isaf{gkr\_assembly\_soundness}. Disjunct (i) is acceptance, by the sumcheck verifier of Definition~\ref{def:sumcheck}, of the incoming claim against the representative $P(L_i,s_i,C_i,b_i)$, so its final check compares the last round value with the value of the representative at $(\xi,\eta)$. Disjunct (ii) says that the carried values are false and that the resulting claim fails further down. The claim list $C'$ and value $v'$ are those of step 2(c) of Definition~\ref{def:verifier}, and at the last layer only (i) remains because $\mathcal{V}$ checks the carried values against the input directly. The adversary supplies the sumcheck prover at the entry of a layer and the carried values after the layer's $2s$ sumcheck challenges, and the shift $A'$ passes all earlier challenges to later layers. Since a sumcheck prover may store every challenge it receives in its state, every adaptive deterministic prover strategy is of this form.

\begin{lemma}[Two-point carry; mechanized]
\label{lem:carry}
Let $L'$ be a layer, $b'$ a table, $V = \mathrm{eval}(L',b')$ with $w(L')$-variate extension $\widetilde V$, $\xi,\eta\in\F^{w(L')}$, and $t_x,t_y\in\F$ with $t_x\neq\widetilde V(\xi)$ or $t_y\neq\widetilde V(\eta)$. Then
\[
\Pr_{\gamma\leftarrow\F}\Bigl[\bigl(t_x - \widetilde{\mathrm{const}}_{L'}(\xi)\bigr) + \gamma\bigl(t_y - \widetilde{\mathrm{const}}_{L'}(\eta)\bigr) = \ccs_{L'}\bigl(((\xi,1),(\eta,\gamma)),b'\bigr)\Bigr] \le \frac{1}{|\F|}.
\]
\end{lemma}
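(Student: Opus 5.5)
The plan is to reduce the event to the vanishing of a univariate polynomial of degree at most one in $\gamma$ whose coefficients do not both vanish. The first step is to compute the right-hand side. By definition of $\ccs_{L'}$ with the claim list $((\xi,1),(\eta,\gamma))$,
\[
\ccs_{L'}\bigl(((\xi,1),(\eta,\gamma)),b'\bigr) = \bigl(\widetilde V(\xi) - \widetilde{\mathrm{const}}_{L'}(\xi)\bigr) + \gamma\bigl(\widetilde V(\eta) - \widetilde{\mathrm{const}}_{L'}(\eta)\bigr).
\]
The definition of $\ccs$ is linear in the coefficients $c_k$, so this is immediate from unfolding.

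Subtracting this from the left-hand side, the constant terms cancel. The event becomes $\alpha + \gamma\beta = 0$, where $\alpha = t_x - \widetilde V(\xi)$ and $\beta = t_y - \widetilde V(\eta)$. Note that $\alpha$ and $\beta$ do not depend on $\gamma$: the points, the carried values and $b'$ are all fixed before $\gamma$ is drawn. By hypothesis, $(\alpha,\beta)\neq(0,0)$.

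The proof then splits into two cases.
\begin{enumerate}[label=(\roman*),leftmargin=*,nosep]
\item If $\beta\neq0$, the event holds only for $\gamma = -\alpha/\beta$. That is one element of $\F$, so the probability is exactly $1/|\F|$.
\item If $\beta = 0$, then $\alpha\neq0$ and the equation $\alpha = 0$ is never satisfied. The probability is $0$.
\end{enumerate}
In both cases the bound $1/|\F|$ holds, since $\gamma$ is uniform on $\F$. Equivalently, a nonzero polynomial of degree at most one has at most one root, which is the univariate case of \cref{lem:schwartz-zippel}; the mechanization can count the solution set directly.

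The only step needing care is the first one, and it is bookkeeping rather than a real obstacle. One has to make sure that $\ccs$ is taken with $V = \mathrm{eval}(L',b')$, matching the $\widetilde V$ in the hypothesis, and that the sum over the two-element claim list expands with coefficients $1$ and $\gamma$ exactly as written. After that the argument is the elementary root count above.
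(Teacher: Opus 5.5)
Your proposal is correct and follows essentially the paper's own proof. Both expand $\ccs_{L'}$ over the two-element claim list, reduce the event to $\alpha+\gamma\delta=0$ with $(\alpha,\delta)\neq(0,0)$, and split on whether $\delta$ vanishes: in the one case exactly one $\gamma$ satisfies the equation, in the other none does.
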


\begin{proof}
By definition $\ccs_{L'}(((\xi,1),(\eta,\gamma)),b') = (\widetilde V(\xi) - \widetilde{\mathrm{const}}_{L'}(\xi)) + \gamma(\widetilde V(\eta) - \widetilde{\mathrm{const}}_{L'}(\eta))$, so the event is $\alpha + \gamma\delta = 0$ with $\alpha = t_x - \widetilde V(\xi)$ and $\delta = t_y - \widetilde V(\eta)$, not both zero. If $\delta\neq 0$ exactly one $\gamma$ satisfies it; if $\delta = 0$ then $\alpha\neq 0$ and none does.
\end{proof}

\begin{lemma}[Output seed; mechanized]
\label{lem:seed}
Let $f,g\colon[2^n]\to\F$ with $f(a)\neq g(a)$ for some $a\in[2^n]$. Then $\Pr_{z\leftarrow\F^n}\bigl[\widetilde f(z) = \widetilde g(z)\bigr]\le n/|\F|$.
\end{lemma}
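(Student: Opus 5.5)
By linearity of the extension (\cref{lem:mle-facts}(c)), $\widetilde f - \widetilde g = \widetilde h$ with $h = f - g$, so the plan is to show $\Pr_{z\leftarrow\F^n}[\widetilde h(z) = 0]\le n/|\F|$ for a table $h$ with $h(a)\neq 0$ for some $a\in[2^n]$. The paper notes that the mechanized proofs avoid Schwartz--Zippel, so I would argue directly by induction on $n$, using the splitting formula of \cref{lem:mle-facts}(e) together with \cref{lem:split}.

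For $n = 0$, the table has the single entry $h(0)\neq 0$, and $\widetilde h$ is the constant $h(0)$, so the probability is $0$. For the step $n+1$, write $z = (p,z')$ with $p\in\F$ and $z'\in\F^n$, and use \cref{lem:mle-facts}(e):
\[
\widetilde h(p,z') = (1-p)\,\widetilde{h_0}(z') + p\,\widetilde{h_1}(z') = \widetilde{h_0}(z') + p\bigl(\widetilde{h_1}(z') - \widetilde{h_0}(z')\bigr).
\]
Since $h$ is nonzero somewhere, $h_0$ or $h_1$ is nonzero somewhere. I would split into cases, and I would choose the challenge order so that $z'$ is drawn before $p$; by symmetry this is a relabelling via \cref{lem:split}(a). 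The split into the first coordinate and the rest has prefix length $1$, so I would either condition on $p$ with a sum over $p$, or reverse the roles.

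The cleanest route is as follows. By \cref{lem:split}(a), $\Pr[\widetilde h(z) = 0] = |\F|^{-1}\sum_{p\in\F}\Pr_{z'}[\widetilde{h_p}(z') = 0]$, where $h_p = (1-p)h_0 + p\,h_1$ is itself a table on $[2^n]$ whose extension is $\widetilde h(p,\cdot)$ (by \cref{lem:mle-facts}(c)). Pick $a$ with $(h_0(a),h_1(a))\neq(0,0)$. Then $h_p(a) = h_0(a) + p\,(h_1(a) - h_0(a))$ vanishes for at most one $p$. This is the same affine-in-$p$ argument as in \cref{lem:carry}: if the slope is nonzero there is exactly one root, and otherwise $h_0(a)\neq 0$ and there is none. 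For every other $p$, $h_p$ is nonzero somewhere, so the induction hypothesis gives the bound $n/|\F|$; for the at most one bad $p$ the inner probability is bounded by $1$. Averaging gives at most $\bigl((|\F|-1)\,n/|\F| + 1\bigr)/|\F|\le (n+1)/|\F|$.

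I expect the main obstacle to be bookkeeping rather than mathematics. One part is aligning the prefix and suffix split of \cref{lem:split}(a) with the most-significant-bit-first convention of (e), which matches since the first coordinate is the prefix. The other part is checking that $\widetilde{h_p}$ equals $\widetilde h(p,\cdot)$ as functions, which follows by combining (c) and (e).
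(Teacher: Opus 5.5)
Your proof is correct and is the argument the paper gives for its mechanized proof: split off the first coordinate with Lemma~\ref{lem:mle-facts}(e), note that the table $(1-p)h_0+p\,h_1$ is identically zero for at most one $p$, and close by induction on $n$ with Lemma~\ref{lem:split}(a); the paper's written proof instead applies Schwartz--Zippel directly to the nonzero polynomial $\widetilde h$ of total degree at most $n$. Your aside about drawing $z'$ before $p$ is unnecessary, since $p$ is already the prefix coordinate on which Lemma~\ref{lem:split}(a) conditions.
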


\begin{proof}
Let $h = f-g$. By Lemma~\ref{lem:mle-facts}(c), $\widetilde f - \widetilde g = \widetilde h$, a polynomial of total degree at most $n$, and it is nonzero because $\widetilde h(\bpt{a}{n}) = h(a)\neq 0$ by Lemma~\ref{lem:mle-facts}(b). Lemma~\ref{lem:schwartz-zippel} gives the bound. The mechanized proof avoids the general lemma: by Lemma~\ref{lem:mle-facts}(e), fixing the first coordinate $p$ turns $\widetilde h$ into the extension of the table $(1-p)h_0 + p h_1$, which is identically zero for at most one value of $p$, and induction on $n$ bounds the remaining event.
\end{proof}

\subsection{The chain bound}
\label{sec:gkr-thmb}

The chain event is bounded by induction along the chain, starting from a false incoming claim.

\begin{lemma}[Chain bound; mechanized]
\label{lem:chain-bound}
Let $\mathcal{C} = ((s_i,L_i,b_i))_{i\in[n]}$ be a layer chain, $P$ and $\dbnd\colon\mathbb{N}\to\mathbb{N}$ such that $\cRepr(P(L,s,C',b),L,s,C',b,\dbnd(s))$ for all $L,s,C',b$, and $C$ a claim list with $v\neq\ccs_{L_0}(C,b_0)$. For every adversary $A$, with $M = \sum_{i\in[n]}(2s_i+1)$,
\[
\Pr_{\mathbf{r}\leftarrow\F^{M}}\bigl[\cBad_P(A,\mathcal{C},C,v,\mathbf{r})\bigr] \le \frac{1}{|\F|}\sum_{i\in[n]}\bigl(2s_i\,\dbnd(s_i) + 1\bigr).
\]
\end{lemma}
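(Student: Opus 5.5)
We argue by induction on the length $n$ of the chain, generalizing over the adversary $A$, the claim list $C$ and the value $v$, subject only to $v\neq\ccs_{L_0}(C,b_0)$. The empty chain is excluded by Definition~\ref{def:chain}; if one prefers to follow the mechanization, $\cBad_P$ is false on it and the bound $0$ holds trivially. In the inductive step, write $\mathcal{C} = (s,L,b)\,\mathcal{C}'$, put $d = \dbnd(s)$, and split $\cBad_P(A,\mathcal{C},C,v,\mathbf{r})$ into disjunct (i) and disjunct (ii) of Definition~\ref{def:chainbad}. A union bound then reduces the claim to bounding the two disjuncts separately.

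\emph{Disjunct (i).} This event depends only on $\mathbf{r}_{[0,2s)}$. By Lemma~\ref{lem:split}(b) its probability over $\F^{M}$ equals its probability over $\F^{2s}$. The hypothesis on $P$ gives $\cRepr(P(L,s,C,b),L,s,C,b,d)$, and the chain condition says that every gate of $L$ reads below $2^s$. Lemma~\ref{lem:layer-sumcheck} therefore bounds disjunct (i) by $2sd/|\F|$. If $\mathcal{C}'$ is empty, disjunct (ii) is false and we are done, since the bound is $(2sd+1)/|\F|$.

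\emph{Disjunct (ii).} Let $\mathcal{C}'$ have first element $(s',L',b')$, and let $M' = M - (2s+1)$. Apply Lemma~\ref{lem:split}(a) with $k = 2s+1$, writing $\mathbf{r} = \mathbf{u}\,\|\,\mathbf{v}$ with $\mathbf{u}\in\F^{2s+1}$. Once $\mathbf{u}$ is fixed, the quantities $\xi,\eta,\gamma,t_x,t_y,C',v'$ and the shifted adversary $A'$ are all fixed. By the chain condition $b = \mathrm{eval}(L',b')$, so $\widetilde b$ is the extension $\widetilde V$ of Lemma~\ref{lem:carry}. For each $\mathbf{u}$ we distinguish two cases. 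If $v' = \ccs_{L'}(C',b')$, we bound the inner probability by $1$. Otherwise the induction hypothesis, applied to $A'$, $C'$, $v'$ and $\mathcal{C}'$, bounds it by $\beta' = \frac{1}{|\F|}\sum_{i\ge1}(2s_i\dbnd(s_i)+1)$. Averaging over $\mathbf{u}$, disjunct (ii) has probability at most
\[
\beta' + \Pr_{\mathbf{u}}\bigl[(t_x\neq\widetilde b(\xi)\ \text{or}\ t_y\neq\widetilde b(\eta))\ \text{and}\ v' = \ccs_{L'}(C',b')\bigr].
\]
For the last term, split $\mathbf{u} = \mathbf{u}_{[0,2s)}\,\|\,\gamma$ again with Lemma~\ref{lem:split}(a). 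The values $t_x,t_y$ come from $A(0,\mathbf{u}_{[0,2s)})$ and so do not depend on $\gamma$. Hence for each fixed prefix Lemma~\ref{lem:carry} bounds the probability over $\gamma$ by $1/|\F|$, and averaging over the prefix keeps the bound at $1/|\F|$.

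Adding the three contributions gives $(2sd + 1)/|\F| + \beta'$, which is exactly the claimed sum. The main obstacle is the bookkeeping in disjunct (ii), not any single estimate. First, the induction hypothesis must be stated for arbitrary adversaries, so that it applies to the $\mathbf{u}$-dependent shift $A'$. Second, we must check that $A'$ together with $\mathbf{r}_{\ge 2s+1}$ reproduces the nested event exactly, so that the inner probability really is the probability over $\F^{M'}$ of an instance of the same event. Third, the order of conditioning matters: $\gamma$ must be sampled after $t_x,t_y$ are fixed. We would carry out the decomposition with Lemma~\ref{lem:split}(a) twice, first at $k = 2s+1$ and then at $k = 2s$ inside it, and carry no other independence argument.
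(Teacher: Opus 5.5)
Your proposal is correct and follows the paper's proof step for step: induction on $n$ generalized over $A$, $C$ and $v$; disjunct~(i) via Lemma~\ref{lem:split}(b) and Lemma~\ref{lem:layer-sumcheck}; the carry collision via Lemma~\ref{lem:carry} for each fixed prefix $\mathbf{r}_{[0,2s)}$; and the continuation via the induction hypothesis after conditioning on $\mathbf{r}_{[0,2s+1)}$ with Lemma~\ref{lem:split}(a). The only difference is bookkeeping: you average disjunct~(ii) once over $\mathbf{u}$ and charge the collision case probability $1$, whereas the paper splits it into separate events $S_2$ and $S_3$ before averaging, which gives the same union bound.
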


\begin{proof}[Proof sketch]
Induction on $n$; the complete proof is in Appendix~\ref{app:gkr-proofs}. Write $(s,L,b)$ for the first element. Disjunct (i) depends only on $\mathbf{r}_{[0,2s)}$, so by Lemma~\ref{lem:split}(b) and Lemma~\ref{lem:layer-sumcheck} (with the gate bound and $b$ from the chain conditions) it has probability at most $2s\,\dbnd(s)/|\F|$. For $n = 1$ this is the whole event. For $n\geq 2$, an outcome in disjunct (ii) either makes the next claim value true, $v' = \ccs_{L'}(C',b')$, or keeps it false. The first case has probability at most $1/|\F|$: the carried values and the points $\xi,\eta$ are fixed by $\mathbf{r}_{[0,2s)}$, the chain conditions give $b = \mathrm{eval}(L',b')$ and $s = w(L')$, and the fresh coordinate $\gamma = \mathbf{r}_{2s}$ is then uniform, so Lemma~\ref{lem:carry} applies for each fixed prefix and Lemma~\ref{lem:split}(a) averages. In the second case, for each fixed prefix $\mathbf{r}_{[0,2s+1)}$ the remaining event is $\cBad_P$ for the shorter chain, a false incoming claim and the fixed adversary $A'$, over the fresh tuple $\mathbf{r}_{\geq 2s+1}$; the induction hypothesis bounds it, and Lemma~\ref{lem:split}(a) averages. The union bound adds $2s\,\dbnd(s)/|\F|$, $1/|\F|$ and the bound for the rest of the chain.
\end{proof}

\begin{theorem}[Chain soundness; mechanized]
\label{thm:b}
Fix a structure of abstract polynomials over $\F$. Let $\mathcal{C} = ((s_i,L_i,b_i))_{i\in[n]}$ be a layer chain, $\sout = w(L_0)$ and $N = \sout + \sum_{i\in[n]}(2s_i+1)$. Let $P$ be a representative family and $\dbnd\colon\mathbb{N}\to\mathbb{N}$ a function such that $\cRepr(P(L,s,C,b),L,s,C,b,\dbnd(s))$ holds for all $L$, $s$, $C$ and $b$. Let $o\colon[2^{\sout}]\to\F$ be a claimed output with $o(u)\neq\mathrm{eval}(L_0,b_0)(u)$ for some $u\in[2^{\sout}]$. Then for every adversary $A$,
\[
\Pr_{\mathbf{r}\leftarrow\F^{N}}\bigl[\cBad^{\mathrm{out}}_P(A,\mathcal{C},o,\mathbf{r})\bigr] \;\le\; \frac{\sout + \sum_{i\in[n]}\bigl(2s_i\,\dbnd(s_i)+1\bigr)}{|\F|}.
\]
\end{theorem}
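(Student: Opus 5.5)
The plan is to split the seeded event by whether the seed $z$ makes the incoming claim true, bound the two parts by Lemma~\ref{lem:seed} and Lemma~\ref{lem:chain-bound}, and add. Write $V_0 = \mathrm{eval}(L_0,b_0)$. The seeded event is $\cBad_P$ run from the claim list $((z,1))$ with value $v(z) = \widetilde o(z) - \widetilde{\mathrm{const}}_{L_0}(z)$. By definition $\ccs_{L_0}(((z,1)),b_0) = \widetilde{V_0}(z) - \widetilde{\mathrm{const}}_{L_0}(z)$, so the incoming claim is true exactly when $\widetilde o(z) = \widetilde{V_0}(z)$. We therefore cover the seeded event by two events:
\begin{itemize}
\item $E_1$, that $\widetilde o(z) = \widetilde{V_0}(z)$;
\item $E_2$, that $\widetilde o(z)\neq\widetilde{V_0}(z)$ and the seeded event holds.
\end{itemize}

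The event $E_1$ depends only on $z = \mathbf{r}_{[0,\sout)}$. Lemma~\ref{lem:split}(b), with $k = \sout$ and $m = N - \sout = M$, turns its probability into a probability over $z\leftarrow\F^{\sout}$. Since $o(u)\neq V_0(u)$ for some $u$, Lemma~\ref{lem:seed} applied to the tables $o$ and $V_0$ bounds it by $\sout/|\F|$.

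For $E_2$, apply Lemma~\ref{lem:split}(a) with the same split and fix each prefix $z$. The remaining event is $[v(z)\neq\ccs_{L_0}(((z,1)),b_0)]\wedge\cBad_P(A_z,\mathcal{C},((z,1)),v(z),\mathbf{v})$ over $\mathbf{v}\leftarrow\F^{M}$, where $M = \sum_i(2s_i+1)$. For prefixes where the first conjunct fails, this probability is $0$. Otherwise the claim is false, and Lemma~\ref{lem:chain-bound} applies to the fixed adversary $A_z$, the chain $\mathcal{C}$, the claim list $((z,1))$ and the value $v(z)$, using the hypothesis on $P$ and $\dbnd$. It gives at most $\sum_i(2s_i\dbnd(s_i)+1)/|\F|$. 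Averaging over $z$ preserves this bound, and the union bound then yields the stated total.

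The step needing the most care is this conditioning. The tuple $\mathbf{r}_{\geq\sout}$ has to be the fresh uniform tuple that Lemma~\ref{lem:chain-bound} is stated over, and the shifted adversary $A_z$ has to be a legitimate adversary for each fixed $z$. This holds because $A_z$ is just $A$ with its challenge argument prefixed by $z$, so it is again a function of $(j,\pi)$. Lemma~\ref{lem:split}(a) handles the lengths, since $N = \sout + M$ exactly. The chain being nonempty is a hypothesis, so $w(L_0)$ and $b_0$ are defined, and no further case analysis is needed.
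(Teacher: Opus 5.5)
Your proposal is correct and follows the paper's proof: the same split into the event $\widetilde o(z) = \widetilde{V_0}(z)$, bounded by \cref{lem:split}(b) and \cref{lem:seed}, and the false-seed remainder, bounded for each fixed $z$ by \cref{lem:chain-bound} with the shifted adversary $A_z$ and averaged by \cref{lem:split}(a), followed by a union bound. Your remarks on why $A_z$ is a legitimate adversary and on the lengths $N = \sout + M$ are exactly the points the conditioning step needs.
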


\noindent As in the formal statement, the premise on $P$ quantifies over all layer instances; the proof uses it only for the instances that occur in the event. For a circuit chain, $\mathrm{eval}(L_0,b_0) = V_0$ is the true output.

\begin{proof}
Let $z = \mathbf{r}_{[0,\sout)}$ and $V_0 = \mathrm{eval}(L_0,b_0)$. Since $\ccs_{L_0}(((z,1)),b_0) = \widetilde{V_0}(z) - \widetilde{\mathrm{const}}_{L_0}(z)$, the seeded claim value $\widetilde o(z) - \widetilde{\mathrm{const}}_{L_0}(z)$ is false unless $\widetilde o(z) = \widetilde{V_0}(z)$. Hence the seeded event is contained in $\mathrm{SEED}\cup\mathrm{REST}$, where $\mathrm{SEED}$ is the event $\widetilde o(z) = \widetilde{V_0}(z)$ and $\mathrm{REST}$ is the event that the seeded claim value is false and the seeded chain event holds. $\mathrm{SEED}$ depends only on $z$, so by Lemma~\ref{lem:split}(b) and Lemma~\ref{lem:seed} its probability is at most $\sout/|\F|$. For each fixed $z$, $\mathrm{REST}$ is the event of Lemma~\ref{lem:chain-bound} for the claim list $((z,1))$, a false value and the fixed adversary $A_z$, over the fresh tuple $\mathbf{r}_{\geq\sout}$; by Lemma~\ref{lem:split}(a) and Lemma~\ref{lem:chain-bound} its probability is at most $\sum_i(2s_i\,\dbnd(s_i)+1)/|\F|$. The union bound gives the claim.
\end{proof}

\noindent Each term of the bound has one source. The output seed contributes $\sout/|\F|$ through Lemma~\ref{lem:seed}. Layer $i$ contributes $2s_i\,\dbnd(s_i)/|\F|$, the sumcheck bound for its representative (Lemma~\ref{lem:layer-sumcheck}), and $1/|\F|$, the carry collision of Lemma~\ref{lem:carry}; at the last layer there is no carry, and the formal statement keeps the $1/|\F|$ as padding.

\begin{remark}[A degree-four specialization]
\label{cor:b-deg4}\label{rem:deg4-vacuous}
The development also states Theorem~\ref{thm:b} with $\dbnd$ the constant function $4$ (\isaf{gkr\_assembly\_soundness\_deg4} in \texttt{GKR\_Assembly.thy}), which displays the bound $\bigl(\sout + \sum_{i\in[n]}(8s_i+1)\bigr)/|\F|$. Its premise asks for a representative of abstract degree at most $4$ for every layer instance, and no representative family satisfies it when the degree is the total degree, as it is in every instance of the development. Take $s = 3$, a layer $L$ with $w(L) = 0$ whose only gate is $(\gLin,0,0,0,1)$ and which has no constants, the claim list $C = (((),1))$ and a table $b$ with $b(0) = 1$. Then $\widetilde{\mathrm{lin}}_L((),x) = [x = 0]$, so $I_{L,3,C,b}(x,y) = 1$ if $x = y = 0$ and $0$ otherwise. If $p$ is a polynomial in six variables with $p(\sigma_{x,y}) = I_{L,3,C,b}(x,y)$ on the cube, replacing every power $X^e$, $e\geq 1$, by $X$ changes no value on $\cube{6}$ and does not increase the total degree, and it yields the unique multilinear polynomial with these values, $\prod_{j\in[6]}(1-X_j)$, of total degree $6$. Hence the specialization holds only vacuously in the development, and we do not use it. The bound it displays is established for the verifier of Definition~\ref{def:verifier} by Proposition~\ref{prop:b-faithful}, on paper.
\end{remark}

\subsection{The representative and the verifier}
\label{sec:gkr-faithful}

Theorem~\ref{thm:b} bounds the chain event for any representative family that satisfies its premise. Whether that bounds the probability that $\mathcal{V}$ accepts depends on the representative. Disjunct (i) evaluates the representative at the random point $(\xi,\eta)$, whereas $\mathcal{V}$ evaluates $\Phi$, which by \eqref{eq:phi-hatp} equals $\widehat{P}(\xi,\eta)$ when the carried values are true.

In the total-degree instance the development discharges the premise of Theorem~\ref{thm:b} with the multilinear interpolation $\mathrm{ML}(I_{L,s,C,b})$ of the cube integrand, the unique polynomial of degree at most one in each of the $2s$ variables that agrees with $I_{L,s,C,b}$ on the cube, whose total degree is at most $2s$ (\isaf{table\_repr} and \isaf{layer\_repr\_mpoly} in \texttt{Layer\_Representative.thy}). Every instance of the layer locale in the development uses the total degree. Since the premise is discharged by a mechanized lemma, Theorem~\ref{thm:b} applies with $\dbnd(s) = 2s$ and the per-layer term $4s_i^2+1$. The event it then bounds concerns $\mathrm{ML}(I)$. The polynomials $\mathrm{ML}(I)$ and $\widehat{P}$ agree on the cube, but $\widehat{P}$ has degree up to four in each $X_j$. When they differ, their difference has total degree at most $5s$ (Lemma~\ref{lem:integrand-poly}(b)), so by Lemma~\ref{lem:schwartz-zippel} they agree at a uniformly random $(\xi,\eta)$ with probability at most $5s/|\F|$. An accepting run of $\mathcal{V}$, honest runs included, therefore need not satisfy disjunct (i) for $\mathrm{ML}(I)$, and Theorem~\ref{thm:b} with this representative does not bound the acceptance probability of $\mathcal{V}$. The mechanized theorem \isaf{verifier\_\allowbreak acceptance\_\allowbreak implies\_\allowbreak gkr\_chain\_bad} (\texttt{Verifier\_\allowbreak Acceptance\_\allowbreak Refinement.thy}) derives the chain event from an accepting run under a per-layer premise: when the carried values are true, the run's round checks and final reconstruction constitute acceptance in the sense of Definition~\ref{def:sumcheck} for the representative. With $\mathrm{ML}(I)$ and the checks of $\mathcal{V}$, that premise fails on every run that passes the checks of $\mathcal{V}$ with true carried values at a point where $\mathrm{ML}(I)(\xi,\eta)\neq\widehat{P}(\xi,\eta)$, honest runs included.

On paper, the family $\widehat{P}$ itself satisfies the premise of Theorem~\ref{thm:b} in the total-degree instance with $\dbnd(s) = 5s$, by Lemma~\ref{lem:integrand-poly}, with $X_j$ and $Y_j$ read as the variables $j$ and $s+j$. Disjunct (i) then compares the final round value with $\widehat{P}(\xi,\eta)$, as $\mathcal{V}$ does, but it also requires in every round that the degree of $g_t$ be at most the total degree of the residual of $\widehat{P}$, which $\mathcal{V}$ does not check; an accepting run of $\mathcal{V}$ need not satisfy it. We therefore close the gap on paper with the representative $\widehat{P}$ and the round check of $\mathcal{V}$. The matching layer acceptance is $\cSC^{4}$ of Definition~\ref{def:sumcheck-D}, with the variable order of Section~\ref{sec:gkr-verifier}.

\begin{definition}[Faithful chain event]
\label{def:chainbad-star}
A \emph{faithful adversary} $A$ maps $(j,\pi)$ to a triple $(\varpi,t_x,t_y)$ of a strategy in the sense of Definition~\ref{def:sumcheck-D} and two carried values. The event $\cBad^{\star}(A,\mathcal{C},C,v,\mathbf{r})$ is defined by the recursion of Definition~\ref{def:chainbad}, with disjunct (i) replaced by
\begin{enumerate}[label=(\roman*$^{\star}$),leftmargin=*]
\item $\cSC^{4}\bigl(\widehat{P}_{L,s,C,b},\,v,\,\varpi,\,\mathbf{r}_{[0,2s)}\bigr)$, where $(\varpi,\cdot,\cdot) = A(0,())$,
\end{enumerate}
and with the carried values of (ii) read from $A(0,\mathbf{r}_{[0,2s)})$. The seeded event $\cBad^{\star,\mathrm{out}}$ is obtained from $\cBad^{\star}$ as $\cBad^{\mathrm{out}}_P$ from $\cBad_P$.
\end{definition}

A deterministic prover strategy $\mathcal{P}^{*}$ for $\mathcal{V}$ maps the challenges received so far to its next message. It \emph{induces} the faithful adversary $A^{*}$ with $A^{*}(j,\pi) = (\varpi,t_x,t_y)$, where $\varpi(\mathbf{u})$ is the round polynomial that $\mathcal{P}^{*}$ sends after the challenges $\pi\,\|\,\mathbf{u}$ and $(t_x,t_y)$ are the carried values it sends after the challenges $\pi$.

\begin{lemma}[Acceptance implies the faithful chain event; paper proof]
\label{lem:accept-bad}
Let $\mathcal{C} = ((s_i,L_i,b_i))_{i\in[n]}$ be a layer chain, $o\colon[2^{\sout}]\to\F$, and $\mathcal{P}^{*}$ a deterministic prover strategy for $\mathcal{V}$ run on $(L_i,s_i)_{i\in[n]}$, $b_{n-1}$ and $o$, with induced adversary $A^{*}$. For every $\mathbf{r}\in\F^{N}$, if $\mathcal{V}$ accepts with the challenges $\mathbf{r}$, then $\cBad^{\star,\mathrm{out}}(A^{*},\mathcal{C},o,\mathbf{r})$ holds. In particular this holds when $o$ is a false output claim.
\end{lemma}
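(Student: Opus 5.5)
We prove a stronger statement by induction on the length of the chain, for an arbitrary incoming claim rather than only the seeded one. Fix a layer chain $\mathcal{C}$ with first element $(s,L,b)$ and a claim list $C$ with value $v$. Let $\mathcal{V}_{C,v}$ be the verifier started at step~2 on this chain with claim list $C$ and value $v$. For every deterministic prover strategy $\mathcal{P}^{*}$ and every challenge tuple $\mathbf{r}$ of length $\sum_i(2s_i+1)$ on which $\mathcal{V}_{C,v}$ accepts, we show that $\cBad^{\star}(A^{*},\mathcal{C},C,v,\mathbf{r})$ holds. The seeded form then follows by unfolding. Step~1 of $\mathcal{V}$ uses $z=\mathbf{r}_{[0,\sout)}$, sets $C_0=((z,1))$ and $v_0=\widetilde o(z)-\widetilde{\mathrm{const}}_{L_0}(z)$, and then runs $\mathcal{V}_{C_0,v_0}$ with the challenges $\mathbf{r}_{\geq\sout}$ against the strategy $\mathcal{P}^{*}$ shifted by $z$. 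That shifted strategy induces exactly $A^{*}_z$, and $C_0,v_0$ are exactly the claim and value used in $\cBad^{\star,\mathrm{out}}$.

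For the inductive step we take an accepting run. Its rounds $t\in[2s]$ give polynomials $g_t=\varpi(\mathbf{r}_{[0,t)})$ with $\deg g_t\le 4$ and $g_t(0)+g_t(1)=u_t$, where $u_0=v$. These are the round conditions of $\cSC^{4}(\widehat P_{L,s,C,b},v,\varpi,\mathbf{r}_{[0,2s)})$; this uses the definition of the induced adversary and the variable order of Section~\ref{sec:gkr-verifier}. We then split on whether the carried values are true, that is, whether $t_x=\widetilde b(\xi)$ and $t_y=\widetilde b(\eta)$.

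If both are true, the final check $u_{2s}=\Phi_{L,s,C}(\xi,\eta,t_x,t_y)$ becomes $u_{2s}=\widehat P_{L,s,C,b}(\xi,\eta)$ by \eqref{eq:phi-hatp}. That is the final check of $\cSC^{4}$, so disjunct (i$^{\star}$) holds.

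If one carried value is false, we split on the layer. At the last layer, $i=n-1$, step~2(d) would reject, because the table $b_{n-1}=b$ is held by the verifier; so this case cannot occur in an accepting run. Otherwise the chain conditions give a next layer $L'$ with $b=\mathrm{eval}(L',b')$. The verifier continues with $C'=((\xi,1),(\eta,\gamma))$ and the value $v'$ from step~2(c), and these are exactly the $C'$ and $v'$ of disjunct (ii). The rest of the run is an accepting run of $\mathcal{V}_{C',v'}$ on the tail chain, with challenges $\mathbf{r}_{\geq 2s+1}$ and the prover shifted by the prefix $\mathbf{r}_{[0,2s+1)}$. That shifted prover induces $A'$. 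The induction hypothesis gives $\cBad^{\star}(A',\mathcal{C}',C',v',\mathbf{r}_{\geq 2s+1})$, and together with the false carried value this is disjunct (ii).

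The base case $n=1$ is the last-layer situation just described: the false case is excluded, so the true case yields (i$^{\star}$).

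The main obstacle is bookkeeping rather than mathematics: we must make sure that the induced adversary of a shifted prover is exactly the shift $A'$ of Definition~\ref{def:chainbad}. This is where the challenge layout matters. The prefix has length $2s+1$ including $\gamma$, the carried values depend on $\mathbf{r}_{[0,2s)}$ only, and the round polynomial after the challenges $\pi\,\|\,\mathbf{u}$ matches $\varpi(\mathbf{u})$.

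We do not use falsity of the claim anywhere. The last sentence of the statement is therefore immediate, and the bound is obtained separately, by applying the reasoning of Theorem~\ref{thm:b} to $\cBad^{\star}$ with Theorem~\ref{thm:sumcheck-std} in place of Lemma~\ref{lem:layer-sumcheck}.
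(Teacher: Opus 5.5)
Your proof is correct and follows the paper's argument essentially step for step. Both use induction on the chain length, quantified over arbitrary incoming claims $(C,v)$ and strategies, and both split on whether the carried values are true. In the first case \eqref{eq:phi-hatp} gives (i$^{\star}$); in the second, either step~2(d) rejects or the induction hypothesis for the shifted strategy, whose induced adversary is $A'$, gives (ii). The seeded form is then the specialization to $C=((z,1))$ and the strategy that $\mathcal{P}^{*}$ follows after $z$. Your appeal to $b=\mathrm{eval}(L',b')$ is not needed for this implication, but it does no harm.
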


\noindent Its induction has the structure of \isaf{verifier\_\allowbreak acceptance\_\allowbreak implies\_\allowbreak gkr\_chain\_bad}, whose per-layer premise is replaced here by the proof of Case~1 below for $\widehat{P}$ and $\cSC^{4}$.

\begin{proof}
We show by induction on the length of $\mathcal{C}$: for every claim list $C$, value $v$ and strategy, if the layer reduction of $\mathcal{V}$ started at the first layer of $\mathcal{C}$ with $(C,v)$ accepts with the challenges $\mathbf{r}$, then $\cBad^{\star}(A^{*},\mathcal{C},C,v,\mathbf{r})$. The lemma is the case $C = ((z,1))$ and $v = \widetilde o(z) - \widetilde{\mathrm{const}}_{L_0}(z)$ of step 1, for the strategy that $\mathcal{P}^{*}$ follows after $z$, whose induced adversary is $A^{*}_z$.

Let $(s,L,b)$ be the first element. Acceptance implies that the $2s$ rounds of step 2(a) pass with the round polynomials $\varpi(\mathbf{r}_{[0,t)})$ of $A^{*}(0,())$, and that $u_{2s} = \Phi_{L,s,C}(\xi,\eta,t_x,t_y)$ for the carried values $(t_x,t_y)$ of $A^{*}(0,\mathbf{r}_{[0,2s)})$.

\emph{Case 1: $t_x = \widetilde b(\xi)$ and $t_y = \widetilde b(\eta)$.} By \eqref{eq:phi-hatp}, $u_{2s} = \widehat{P}_{L,s,C,b}(\xi,\eta)$. Together with the round checks, this is $\cSC^{4}(\widehat{P}_{L,s,C,b},v,\varpi,\mathbf{r}_{[0,2s)})$, so (i$^{\star}$) holds.

\emph{Case 2: $t_x\neq\widetilde b(\xi)$ or $t_y\neq\widetilde b(\eta)$.} If $\mathcal{C}$ has one element, then $b = b_{n-1}$ and step 2(d) rejects, a contradiction. Otherwise $\mathcal{V}$ continues at the next layer with the claim list and value of step 2(c), which are $C'$ and $v'$ of disjunct (ii). That run is the layer reduction on $\mathcal{C}'$ with the challenges $\mathbf{r}_{\geq 2s+1}$ for the strategy that $\mathcal{P}^{*}$ follows after $\mathbf{r}_{[0,2s+1)}$, and its induced adversary is the shift $A'$ of Definition~\ref{def:chainbad}. Since it accepts, the induction hypothesis gives $\cBad^{\star}(A',\mathcal{C}',C',v',\mathbf{r}_{\geq 2s+1})$, and (ii) holds.
\end{proof}

\begin{proposition}[Soundness of the verifier; paper proof]
\label{prop:b-faithful}
Let $\mathcal{C} = ((s_i,L_i,b_i))_{i\in[n]}$ be a layer chain and $o\colon[2^{\sout}]\to\F$ with $o(u)\neq\mathrm{eval}(L_0,b_0)(u)$ for some $u\in[2^{\sout}]$. For every faithful adversary $A$,
\[
\Pr_{\mathbf{r}\leftarrow\F^{N}}\bigl[\cBad^{\star,\mathrm{out}}(A,\mathcal{C},o,\mathbf{r})\bigr] \;\le\; \frac{\sout + \sum_{i\in[n]}(8s_i+1)}{|\F|}.
\]
Consequently, for every prover strategy whose coins are independent of the challenges, $\mathcal{V}$ accepts $o$ with probability at most this bound.
\end{proposition}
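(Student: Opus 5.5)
The plan is to repeat the proofs of Lemma~\ref{lem:chain-bound} and Theorem~\ref{thm:b} for the faithful event $\cBad^{\star}$. The only change is the per-layer sumcheck step, where we use Theorem~\ref{thm:sumcheck-std} for the concrete polynomial $\widehat{P}_{L,s,C,b}$ in place of Lemma~\ref{lem:layer-sumcheck}.

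\emph{Layer step.} Let $(s,L,b)$ be a chain element and $C$ a claim list, and suppose $v\neq\ccs_L(C,b)$. By Lemma~\ref{lem:integrand-poly}(a) and Lemma~\ref{lem:layer-identity},
\[
\sum_{\mathbf{b}\in\cube{2s}}\widehat{P}_{L,s,C,b}(\mathbf{b}) = \sum_{x,y}I_{L,s,C,b}(x,y) = \ccs_L(C,b),
\]
using the bijection between index pairs and Boolean points in the variable order of Section~\ref{sec:gkr-verifier}. So $v$ is a false sum for $\widehat{P}$. By Lemma~\ref{lem:integrand-poly}(b), $\deg_{X_j}\widehat{P}\le 4$ and $\deg_{Y_j}\widehat{P}\le 2\le 4$. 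Theorem~\ref{thm:sumcheck-std} with $D_t = 4$ then bounds the probability of (i$^{\star}$) by $8s/|\F|$. This holds for every strategy $\varpi$ fixed before $\mathbf{r}_{[0,2s)}$. The same-field requirement is met because the degree check of $\cSC^{4}$ is a per-round bound $D_t=4$, which dominates the true per-variable degrees.

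\emph{Induction along the chain.} We follow the proof of Lemma~\ref{lem:chain-bound} verbatim. Disjunct (i$^{\star}$) depends only on $\mathbf{r}_{[0,2s)}$, so Lemma~\ref{lem:split}(b) together with the layer step gives $8s/|\F|$. For disjunct (ii), we split by whether the new value $v'$ equals $\ccs_{L'}(C',b')$. The chain conditions give $b=\mathrm{eval}(L',b')$ and $s=w(L')$, so Lemma~\ref{lem:carry} applies to the uniform $\gamma=\mathbf{r}_{2s}$ for each fixed prefix, and Lemma~\ref{lem:split}(a) averages this to at most $1/|\F|$. If $v'$ stays false, the induction hypothesis for the shifted adversary $A'$ on the fresh suffix bounds the rest, again averaged by Lemma~\ref{lem:split}(a). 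The union bound yields $\sum_i(8s_i+1)/|\F|$ for a false incoming claim. The seeded statement then follows exactly as in the proof of Theorem~\ref{thm:b}: the event SEED has probability at most $\sout/|\F|$ by Lemma~\ref{lem:seed}, and the REST event is bounded by the chain bound applied for each fixed $z$.

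\emph{Acceptance.} For a deterministic prover strategy $\mathcal{P}^*$, Lemma~\ref{lem:accept-bad} shows that the acceptance event is contained in $\cBad^{\star,\mathrm{out}}(A^*,\mathcal{C},o,\cdot)$, so the bound transfers. For randomized strategies with coins independent of the challenges, we average over the coins as in Section~\ref{sec:probspace}.

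The main obstacle is the layer step's use of Theorem~\ref{thm:sumcheck-std}. One must check that $\varpi$, as indexed by challenge prefixes, is a legitimate strategy in the sense of Definition~\ref{def:sumcheck-D}, that the claimed value $v$ and $\widehat{P}$ (which depends on $C$, and hence on earlier challenges) are fixed once the prefix is fixed, and that the variable order $X_0,\dots,X_{s-1},Y_0,\dots,Y_{s-1}$ matches the round order.
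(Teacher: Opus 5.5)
Your proposal is correct and follows the paper's proof essentially step for step. You replace the bound on disjunct (i) by Theorem~\ref{thm:sumcheck-std} applied to $\widehat{P}_{L,s,C,b}$ with $D_t = 4$, using Lemmas~\ref{lem:integrand-poly} and~\ref{lem:layer-identity} to make $v$ a false sum; you keep the carry, continuation and seed bounds unchanged; and you pass to acceptance via Lemma~\ref{lem:accept-bad} and averaging over coins independent of the challenges.
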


\noindent Its proof uses the mechanized Lemmas~\ref{lem:layer-identity}, \ref{lem:carry}, \ref{lem:seed} and~\ref{lem:split}, and Theorem~\ref{thm:sumcheck-std} in place of the imported Theorem~\ref{thm:sumcheck-afp}. The bound equals the one displayed in Remark~\ref{cor:b-deg4}.

\begin{proof}
We repeat the proofs of Lemma~\ref{lem:chain-bound} and Theorem~\ref{thm:b} with disjunct (i$^{\star}$) in place of (i). The only step that involves the layer acceptance is the bound on disjunct (i), where Lemma~\ref{lem:layer-sumcheck} is replaced by the following claim: if every gate of $L$ reads below $2^s$ and $v\neq\ccs_L(C,b)$, then for every strategy $\varpi$,
\[
\Pr_{\mathbf{r}\leftarrow\F^{2s}}\bigl[\cSC^{4}(\widehat{P}_{L,s,C,b},v,\varpi,\mathbf{r})\bigr]\le 8s/|\F|.
\]
In the variable order of Section~\ref{sec:gkr-verifier}, the Boolean points of $\cube{2s}$ are the pairs $(\bpt{x}{s},\bpt{y}{s})$, so by Lemma~\ref{lem:integrand-poly}(a) and Lemma~\ref{lem:layer-identity} the sum of $\widehat{P}_{L,s,C,b}$ over $\cube{2s}$ is $\sum_{x,y}I_{L,s,C,b}(x,y) = \ccs_L(C,b)\neq v$. By Lemma~\ref{lem:integrand-poly}(b) every variable has degree at most $4$, and Theorem~\ref{thm:sumcheck-std} with $m = 2s$ and $D_t = 4$ gives $8s/|\F|$. The bounds for the carry (Lemma~\ref{lem:carry}), for the continuation (induction and Lemma~\ref{lem:split}) and for the seed (Lemma~\ref{lem:seed}) do not involve the layer polynomial and carry over unchanged. Each layer then contributes $(8s_i+1)/|\F|$ and the seed $\sout/|\F|$.

For the consequence, let $\mathcal{P}^{*}$ be deterministic with induced adversary $A^{*}$. By Lemma~\ref{lem:accept-bad} the event that $\mathcal{V}$ accepts is contained in $\cBad^{\star,\mathrm{out}}(A^{*},\mathcal{C},o,\cdot)$, whose probability is bounded above. A strategy with independent coins is handled by averaging over the coins (Section~\ref{sec:probspace}).
\end{proof}


\section{Instantiation over the Quartic Extension}
\label{sec:smallfield}

The bound of \cref{thm:b} is an output term plus one term per layer, divided by $|\F|$. Over the KoalaBear prime field, $p = 2^{31}-2^{24}+1$, it is too weak for circuits of realistic size, so small-field provers evaluate circuits over $\F_p$ and draw the verifier's challenges from an extension field. Applying \cref{thm:b} in this setting requires two facts. The extension, here the binomial extension $\F_p[X]/(X^4-3)$ with basis $1,X,X^2,X^3$, must be a field with $p^4$ elements, and a circuit over $\F_p$ together with a false output claim must carry over to it. This section proves both; \cref{cor:perinput-ext} in \cref{sec:composition} combines them with the compiler soundness theorem of \cref{sec:compiler} into a per-input statement across the two fields. Two routine verifications, of ring axioms and of an equivalent form of chain well-formedness, are in \cref{app:ext-proofs}.

\subsection{The Base Field}

Let $p = 2^{31}-2^{24}+1 = 127\cdot 2^{24}+1 = 2{,}130{,}706{,}433$, and let $\F_p$ be the set $\{0,1,\dots,p-1\}$ with addition, negation and multiplication modulo $p$. It is a commutative ring with $p$ elements in which $0\neq 1$.

\begin{lemma}[The base field; mechanized]
\label{lem:ext-base}
The integer $p$ is prime, and every nonzero element of $\F_p$ has a multiplicative inverse. Hence $\F_p$ is a field with $p$ elements.
\end{lemma}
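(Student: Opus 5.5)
The proof splits into two parts: a primality certificate for $p$, then the existence of inverses, which is standard once primality is known.

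For primality I would use a Pocklington/Lucas certificate, which suits $p-1 = 2^{24}\cdot 127$ because $p-1$ factors completely into the small primes $2$ and $127$. The plan is to exhibit a witness $a$, for instance $a = 3$, and verify two things. First, $a^{p-1}\equiv 1 \pmod p$. Second, $a^{(p-1)/2}\not\equiv 1$ and $a^{(p-1)/127}\not\equiv 1 \pmod p$. Lucas' theorem then says the multiplicative order of $a$ modulo $p$ is exactly $p-1$. Since that order divides $|(\mathbb{Z}/p)^{\times}| = \varphi(p)\le p-1$, we get $\varphi(p) = p-1$, which forces $p$ to be prime.

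Each of the three modular exponentiations is a finite computation. In Isabelle I would discharge them by evaluation, using repeated squaring with \isaf{eval} or a Pratt-certificate tactic, if the library offers one. On paper it is enough to state the three residues. I would also check beforehand that $127$ is prime, by trial division by $2,3,5,7,11$. The main obstacle is making this computation trustworthy and efficient in the prover. If $3$ fails the order test for some prime factor, I would search for a different small witness, such as $5$ or $7$. If certificate-based checking proves awkward, the fallback is the AFP Pratt-certificate infrastructure.

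For inverses, given $0<x<p$ with $p$ prime, we have $\gcd(x,p)=1$. Bézout then gives integers $u,v$ with $ux+vp=1$, so $u \bmod p$ is the inverse of $x$. Equivalently, Fermat's little theorem gives the inverse $x^{p-2}$. Combined with the stated commutative-ring structure and $0\neq 1$, this makes $\F_p$ a field. Its cardinality is $p$ by construction, since the carrier is $\{0,\dots,p-1\}$.
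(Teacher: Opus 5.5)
Your proof is correct, but it certifies primality by a different route from the paper. The paper uses plain trial division: since $46159^2\le p<46160^2$, it checks that none of the $46{,}158$ integers $2\le m\le 46159$ divides $p$. That argument needs nothing beyond the definition of a prime and reduces to one long evaluation. Your Lucas certificate, built on $p-1=2^{24}\cdot 127$, replaces those divisions by three modular exponentiations of a few dozen multiplications each, plus a trivial check that $127$ is prime. It also scales to moduli far beyond the reach of trial division. The price is the group theory behind it: the order of a unit divides $\varphi(p)$, and $\varphi(p)=p-1$ forces primality. In Isabelle you would import this, e.g.\ from the AFP Pratt-certificate entry, rather than obtain it by evaluation alone. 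One caution: obtain $3^{(p-1)/2}\not\equiv 1\pmod p$ by computation, as you plan. Do not take it from \cref{lem:nonsquare}(a), whose proof via quadratic reciprocity already uses that $p$ is prime, so that shortcut would be circular. For inverses the two routes are interchangeable. The paper shows that $y\mapsto xy\bmod p$ is injective by Euclid's lemma and hence surjective on a finite set. Bézout gives the inverse directly, and Fermat gives the explicit formula $x^{p-2}$.
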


\begin{proof}
If $p = ab$ with $1<a\le b$, then $a^2\le p$, so it suffices to show that no integer $m\ge 2$ with $m^2\le p$ divides $p$. Since $46159^2 = 2{,}130{,}653{,}281\le p<46160^2 = 2{,}130{,}745{,}600$, these are the integers $2\le m\le 46159$, and computing the $46{,}158$ remainders $p\bmod m$ shows that none of them is zero.

For the inverse, let $x\in\{1,\dots,p-1\}$ and $\mu_x(y) = xy\bmod p$ for $y\in\{0,\dots,p-1\}$. If $\mu_x(y_1) = \mu_x(y_2)$, then $p$ divides $x(y_1-y_2)$. As $p$ is prime and $0<x<p$, $p$ divides $y_1-y_2$, and $|y_1-y_2|<p$ gives $y_1 = y_2$. An injective map of a finite set into itself is surjective, so $\mu_x(y) = 1$ for some $y$.
\end{proof}

\subsection{The Quartic Extension}

The extension is built as a tower of two quadratic extensions, each adjoining a square root of a nonsquare.

\begin{definition}[The tower]
\label{def:ext-tower}
Let $\extE$ be the set $\F_p\times\F_p$, whose element $(a,b)$ we write $a+b\theta$, with componentwise addition and
\[
(a+b\theta)(c+e\theta) = (ac+3be)+(ae+bc)\,\theta .
\]
Let $\extK$ be the set $\extE\times\extE$, whose element $(u,v)$ we write $u+v\alpha$, with componentwise addition and
\[
(u+v\alpha)(u'+v'\alpha) = (uu'+\theta\,vv')+(uv'+vu')\,\alpha .
\]
We regard $\F_p\subseteq\extE$ via $a\mapsto a+0\theta$ and $\extE\subseteq\extK$ via $u\mapsto u+0\alpha$; the multiplication rules then give $\theta^2 = 3$ and $\alpha^2 = \theta$. The \emph{base embedding} is $\iota\colon\F_p\to\extK$, $\iota(a) = (a+0\theta)+0\alpha$, and the \emph{coordinate map} is
\[
\chi\colon\F_p^4\to\extK,\qquad \chi(a_0,a_1,a_2,a_3) = (a_0+a_2\theta)+(a_1+a_3\theta)\,\alpha .
\]
\end{definition}

\noindent In the mechanization, $\extE$, $\extK$, $\iota$, $\alpha$ and $\chi$ are \isaf{kb\_quad}, \isaf{koala\_bear\_ext4}, \isaf{kb4\_embed}, \isaf{kb\_alpha} and \isaf{coeff4}.

\begin{lemma}[Nonsquares; mechanized]
\label{lem:nonsquare}
\leavevmode
\begin{enumerate}[label=(\alph*)]
\item $3$ is not a square in $\F_p$.
\item $-3$ is not a square in $\F_p$.
\item $\theta$ is not a square in $\extE$.
\end{enumerate}
\end{lemma}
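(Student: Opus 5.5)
The plan is to handle (a) and (b) by quadratic reciprocity, using only the residues of $p$ modulo $4$ and modulo $3$, and then to reduce (c) to (b) by a direct computation in $\extE$.

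\emph{Parts (a) and (b).} I would use quadratic reciprocity (a permitted textbook fact) together with two congruences. From $p = 127\cdot 2^{24}+1$ we get $p\equiv 1 \pmod 4$, so $\bigl(\tfrac{-1}{p}\bigr) = 1$. Reciprocity then gives $\bigl(\tfrac{3}{p}\bigr) = \bigl(\tfrac{p}{3}\bigr)$. Modulo $3$ we have $2^{31}\equiv 2$ and $2^{24}\equiv 1$, so $p\equiv 2-1+1 = 2 \pmod 3$. Since $2$ is not a square modulo $3$, $\bigl(\tfrac{3}{p}\bigr) = -1$, which proves (a). Multiplicativity of the Legendre symbol then gives $\bigl(\tfrac{-3}{p}\bigr) = \bigl(\tfrac{-1}{p}\bigr)\bigl(\tfrac{3}{p}\bigr) = -1$, which proves (b).

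\emph{Alternative route for a mechanization.} If reciprocity is unavailable, I would use Euler's criterion instead. It says $x$ is a square iff $x^{(p-1)/2}=1$, and this follows from the cyclic structure, or more elementarily from the fact that $y\mapsto y^2$ is two-to-one on $\F_p^\times$. With it, one checks by evaluated modular exponentiation that $3^{(p-1)/2}\equiv -1$ and $(-3)^{(p-1)/2}\equiv -1 \pmod p$. Only the easy direction is needed: if $x=y^2$ then $x^{(p-1)/2}=y^{p-1}=1$ by Fermat. So a value of $-1\neq 1$ excludes squares, and Fermat's little theorem follows from the injectivity argument of \cref{lem:ext-base}.

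\emph{Part (c).} Suppose $(a+b\theta)^2 = \theta$ in $\extE$. The multiplication rule of \cref{def:ext-tower} gives $(a+b\theta)^2 = (a^2+3b^2) + 2ab\,\theta$, so $a^2+3b^2 = 0$ and $2ab = 1$. The second equation forces $b\neq 0$. Dividing the first by $b^2$ gives $(a/b)^2 = -3$, which contradicts (b).

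The only real obstacle is (a). On paper it is immediate from reciprocity. In the mechanized setting it needs either the exponentiation computation together with the easy half of Euler's criterion, or reciprocity as available in the library; I would try the computational route first.
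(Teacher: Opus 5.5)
Your proof is correct and takes the paper's route: in (a) you use reciprocity with $\frac{p-1}{2}$ even and $p\equiv 2 \pmod 3$, and in (c) you use the same computation, where $a^2+3b^2=0$ and $2ab=1$ would make $-3$ a square. The only difference is in (b), where the paper avoids the first supplement and multiplicativity of the Legendre symbol by exhibiting $i = 2{,}113{,}994{,}754$ with $i^2\equiv -1 \pmod p$ (checked by one multiplication), so that $x^2=-3$ would give $(ix)^2=3$; this makes concrete the fact you cite, that $-1$ is a square because $p\equiv 1 \pmod 4$.
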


\begin{proof}
(a) For an odd prime $q$ and an integer $a$, the Legendre symbol $\bigl(\frac{a}{q}\bigr)$ is $0$ if $q$ divides $a$, $1$ if $a$ is a nonzero square modulo $q$, and $-1$ otherwise. By the law of quadratic reciprocity~\cite{irelandrosen1990}, $\bigl(\frac{q}{q'}\bigr)\bigl(\frac{q'}{q}\bigr) = (-1)^{\frac{q-1}{2}\cdot\frac{q'-1}{2}}$ for distinct odd primes $q$ and $q'$. For $q = 3$ and $q' = p$, which is prime by \cref{lem:ext-base}, the exponent is $\frac{p-1}{2} = 2^{23}\cdot 127$, an even number, so $\bigl(\frac{3}{p}\bigr) = \bigl(\frac{p}{3}\bigr)$. From $2^{31}\equiv 2$ and $2^{24}\equiv 1\pmod 3$ we get $p\equiv 2\pmod 3$, and $2$ is not a square modulo $3$. Hence $\bigl(\frac{3}{p}\bigr) = \bigl(\frac{p}{3}\bigr) = -1$.

(b) Let $i = 2{,}113{,}994{,}754$; one multiplication shows $i^2\equiv -1\pmod p$. If $x^2 = -3$ in $\F_p$, then $(ix)^2 = i^2x^2 = (-1)(-3) = 3$, contradicting (a).

(c) Suppose $(a+b\theta)^2 = \theta$ with $a,b\in\F_p$. By the multiplication rule of $\extE$, $(a+b\theta)^2 = (a^2+3b^2)+2ab\,\theta$, so $a^2+3b^2 = 0$ and $2ab = 1$. The second equation gives $b\neq 0$, and dividing the first by $b^2$ gives $(a/b)^2 = -3$, contradicting (b).
\end{proof}

\begin{proposition}[The quartic extension field; partly mechanized]
\label{prop:ext-field}
\leavevmode
\begin{enumerate}[label=(\alph*)]
\item $\extE$ is a field with $p^2$ elements.
\item $\extK$ is a field with $p^4$ elements.
\item $\iota$ is an injective ring homomorphism, and $\iota(x^{-1}) = \iota(x)^{-1}$ for $x\neq 0$.
\item $\alpha^2 = \theta$, $\alpha^4 = \iota(3)$, and $\alpha\notin\iota(\F_p)$.
\item $\chi$ is a bijection, and $\chi(a_0,a_1,a_2,a_3) = \iota(a_0)+\iota(a_1)\,\alpha+\iota(a_2)\,\alpha^2+\iota(a_3)\,\alpha^3$.
\item The ring homomorphism $\F_p[X]\to\extK$ that agrees with $\iota$ on constants and maps $X$ to $\alpha$ induces an isomorphism $\F_p[X]/(X^4-3)\cong\extK$. In particular, $X^4-3$ is irreducible over $\F_p$.
\end{enumerate}
\end{proposition}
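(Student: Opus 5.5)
The plan is to prove (a) and (b) by a single lemma applied twice: for a field $F$ of characteristic $\neq 2$ and a nonsquare $c\in F$, the set $F\times F$ with componentwise addition and $(a+b\sqrt c)(a'+b'\sqrt c) = (aa'+c\,bb')+(ab'+ba')\sqrt c$ is a field with $|F|^2$ elements. The commutative-ring axioms are routine identities in $F$ (the ring-axiom verification deferred to \cref{app:ext-proofs}). For inverses we use the conjugate. For $x = a+b\sqrt c\neq 0$ put $N(x) = a^2-c\,b^2\in F$. Then $N(x)\neq 0$: if $b = 0$ then $a\neq 0$, and if $b\neq 0$ then $N(x)=0$ would give $(a/b)^2 = c$, contradicting that $c$ is a nonsquare. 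Hence $x^{-1} = N(x)^{-1}(a-b\sqrt c)$. The cardinality is $|F|^2$ by construction.

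For (a) we apply the lemma with $F = \F_p$ (a field by \cref{lem:ext-base}, $p$ odd) and $c = 3$ (a nonsquare by \cref{lem:nonsquare}(a)). For (b) we apply it with $F = \extE$ and $c = \theta$, using \cref{lem:nonsquare}(c). This yields $p^4$ elements.

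For (c), $\iota$ is the composite of the two inclusions $a\mapsto a+0\theta$ and $u\mapsto u+0\alpha$. Each inclusion visibly respects $+$, $\cdot$ and $1$ by the multiplication rules, since the cross terms vanish, and each is injective. Compatibility with inverses then follows for any ring homomorphism between fields: $\iota(x)\iota(x^{-1}) = \iota(1) = 1$.

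For (d), $\alpha = 0+1\alpha$ and the rule gives $\alpha^2 = \theta$. Similarly $\theta^2 = 3$ in $\extE$, so $\alpha^4 = \iota(3)$. Finally $\alpha\notin\iota(\F_p)$ because its $\alpha$-coordinate is $1\neq 0$.

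For (e) we first compute $\alpha^3 = \theta\alpha$. Then
\[
\iota(a_0)+\iota(a_1)\alpha+\iota(a_2)\alpha^2+\iota(a_3)\alpha^3 = (a_0+a_2\theta)+(a_1+a_3\theta)\alpha ,
\]
which is $\chi(a_0,a_1,a_2,a_3)$. Bijectivity holds because $\chi$ is just a reordering of coordinates of $\F_p^4 = (\F_p^2)^2$.

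For (f), the evaluation homomorphism $\mathrm{ev}\colon\F_p[X]\to\extK$ sending $X$ to $\alpha$ kills $X^4-3$ by (d), so it factors through $\F_p[X]/(X^4-3)$. Every class in the quotient has a unique representative of degree below $4$, and by (e) $\mathrm{ev}$ maps these representatives bijectively onto $\extK$. Thus the induced map is a bijective ring homomorphism, i.e.\ an isomorphism. Since $\extK$ is a field by (b), the ideal $(X^4-3)$ is maximal. Therefore $X^4-3$ is irreducible, as it has positive degree and any factorization would give zero divisors in the quotient.

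The main obstacle is the ring-axiom bookkeeping for $\extK$, especially associativity, which is a long but mechanical expansion and is the part delegated to the appendix or mechanization. The conceptual points are only the norm argument and the degree-$<4$ representative argument in (f), and these are short.
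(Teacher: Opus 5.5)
Your proposal is correct and follows the paper's proof step for step. It uses the same quadratic-extension lemma, with inverses from the norm $a^2-c\,b^2$, applied twice, the same coordinate computations for (c)--(e), and the same factorization through the quotient for (f). The only differences are minor. In (f) you obtain injectivity directly from the bijectivity of $\chi$ on degree-$<4$ representatives, where the paper counts $p^4$ elements. The characteristic-$\neq 2$ hypothesis in your lemma is harmless but unnecessary, since the norm argument never uses it.
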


\noindent Parts (a)--(d) and the bijectivity of $\chi$ are mechanized; the formula for $\chi$ in (e) and part (f) are proved on paper only (\cref{tab:mechanization}).

\begin{proof}
Both levels of the tower are instances of one construction: for a commutative ring $R$ and $\delta\in R$, the set $R\times R$ with componentwise addition and $(a,b)(c,e) = (ac+\delta\,be,\ ae+bc)$ is a commutative ring with zero $(0,0)$ and unit $(1,0)$ (\cref{app:ext-proofs}). $\extE$ is the case $R = \F_p$, $\delta = 3$, and $\extK$ the case $R = \extE$, $\delta = \theta$.

(a), (b). Let $R$ be a field and $\delta\in R$ a nonsquare. For $(a,b)\neq(0,0)$ let $N = a^2-\delta\,b^2$. If $N = 0$, then $b\neq 0$ would give $(a/b)^2 = \delta$, and $b = 0$ would give $a^2 = 0$, hence $a = 0$; both are impossible, so $N\neq 0$. Then
\[
(a,b)\,(a/N,\,-b/N) = \bigl((a^2-\delta\,b^2)/N,\ (-ab+ba)/N\bigr) = (1,0),
\]
so $R\times R$ is a field with $|R|^2$ elements. By \cref{lem:ext-base} and \cref{lem:nonsquare}(a) this applies to $R = \F_p$ and $\delta = 3$, which gives (a). By (a) and \cref{lem:nonsquare}(c) it applies to $R = \extE$ and $\delta = \theta$, which gives (b), with $|\extK| = (p^2)^2 = p^4$.

(c) The inclusions $a\mapsto a+0\theta$ and $u\mapsto u+0\alpha$ preserve $0$, $1$, sums and negatives, and they preserve products because the multiplication rules reduce to $ac$ and $uu'$ when the second coordinates vanish. Hence their composite $\iota$ is a ring homomorphism. It is injective because $a$ is the first coordinate of the first coordinate of $\iota(a)$. From $\iota(x)\,\iota(x^{-1}) = \iota(1) = 1$ we get $\iota(x^{-1}) = \iota(x)^{-1}$.

(d) $\alpha^2 = (0+1\alpha)(0+1\alpha) = (0+\theta\cdot 1\cdot 1)+0\alpha = \theta$, so $\alpha^4 = \theta^2 = 3 = \iota(3)$. Every element of $\iota(\F_p)$ has second coordinate $0$, and $\alpha$ has second coordinate $1$.

(e) $\chi$ sends $(a_0,a_1,a_2,a_3)$ to the pair of pairs $\bigl((a_0,a_2),(a_1,a_3)\bigr)$, so it is a bijection from $\F_p^4$ onto $(\F_p\times\F_p)\times(\F_p\times\F_p) = \extK$. For the formula, $\alpha^2 = \theta$ and $\alpha^3 = \theta\alpha$ by (d), and the multiplication rules give $\iota(a)\,(u+v\alpha) = au+(av)\,\alpha$ with $a\,(c+e\theta) = ac+(ae)\,\theta$. Hence $\iota(a_0)+\iota(a_1)\alpha+\iota(a_2)\alpha^2+\iota(a_3)\alpha^3 = (a_0+a_2\theta)+(a_1+a_3\theta)\,\alpha$.

(f) Let $\phi\colon\F_p[X]\to\extK$ be the homomorphism of the statement. By (d), $X^4-3$ lies in its kernel, and by (e), $\phi(a_0+a_1X+a_2X^2+a_3X^3) = \chi(a_0,a_1,a_2,a_3)$, so $\phi$ is surjective. Every class of $\F_p[X]/(X^4-3)$ has a unique representative of degree below $4$, so the quotient has $p^4$ elements, and the induced surjection onto $\extK$ is a bijection. As $\extK$ is a field, $(X^4-3)$ is a maximal ideal, so $X^4-3$ is irreducible.
\end{proof}

\noindent By (e) and (f), $\chi$ reads a quadruple of base-field elements as the coefficients of $1,X,X^2,X^3$ in $\F_p[X]/(X^4-3)$.

\begin{lemma}[Uniform coordinates; mechanized]
\label{lem:pushforward}
If $(a_0,a_1,a_2,a_3)$ is uniformly distributed on $\F_p^4$, that is, if $a_0,\dots,a_3$ are independent and uniform on $\F_p$, then $\chi(a_0,a_1,a_2,a_3)$ is uniformly distributed on $\extK$.
\end{lemma}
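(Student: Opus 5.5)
The plan is to reduce the statement to the bijectivity of $\chi$ from \cref{prop:ext-field}(e). The pushforward of a uniform distribution under a bijection between finite sets of equal size is uniform. I will make this explicit through point probabilities.

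First, independence and uniformity of $a_0,\dots,a_3$ on $\F_p$ mean exactly that the tuple is uniform on $\F_p^4$. Each tuple then has probability $p^{-4}$, because the product distribution assigns $\prod_j p^{-1}$ to each point. This is the form in which the mechanization states challenge distributions (\isaf{pmf\_of\_set} on tuples, \cref{sec:probspace}), so I will take the uniform distribution on $\F_p^4$ as the starting point.

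Second, fix $k\in\extK$. The event $\chi(a_0,\dots,a_3)=k$ is the preimage $\chi^{-1}(\{k\})$. By \cref{prop:ext-field}(e), $\chi$ is a bijection, so the preimage has exactly one element. Its probability is therefore $p^{-4}$. By \cref{prop:ext-field}(b), $|\extK| = p^4$, so $p^{-4} = 1/|\extK|$, and every element of $\extK$ has probability $1/|\extK|$. That is uniformity.

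I do not expect a real obstacle. The only care needed is to use the mechanized part of (e), namely bijectivity, and not the paper-only coefficient formula. Bijectivity is immediate because $\chi$ merely regroups $(a_0,a_1,a_2,a_3)$ as $((a_0,a_2),(a_1,a_3))$. In the formal setting I would phrase the step as the general fact that the map of a bijection preserves $\mathrm{pmf\_of\_set}$ onto the image set, and then identify the image with all of $\extK$.
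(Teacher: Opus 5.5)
Your proposal is correct and matches the paper's proof: since $\chi$ is a bijection by \cref{prop:ext-field}(e), each $z\in\extK$ has exactly one preimage, so $\Pr[\chi(a)=z]=p^{-4}=1/|\extK|$. You are also right that only the mechanized bijectivity part of (e) is needed, and the paper's proof uses (e) in exactly that way.
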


\begin{proof}
By \cref{prop:ext-field}(e), $\chi$ is a bijection between sets with $p^4$ elements. For $z\in\extK$, $\Pr[\chi(a) = z] = \Pr[a = \chi^{-1}(z)] = p^{-4} = 1/|\extK|$.
\end{proof}

\noindent For $N$ independent uniform quadruples, the same argument applied to the bijection $\chi^N\colon(\F_p^4)^N\to\extK^N$ shows that the $N$ resulting elements are independent and uniform on $\extK$, which is the challenge distribution of \cref{thm:b-ext4} below; this extension is proved on paper and is not mechanized. Whether the values that an implementation derives from its transcript are independent and uniform is a property of its Fiat--Shamir construction and is outside this section.

\subsection{The Chain Bound over the Extension}

\begin{theorem}[Chain soundness over the quartic extension; mechanized]
\label{thm:b-ext4}
\Cref{thm:b} holds with $\F = \extK$, the representatives being multivariate polynomials over $\extK$ whose degree is their total degree. Thus, for a layer chain over $\extK$ whose layers read tables with $s_i$ index bits, representatives admissible for a degree-bound function $\mathrm{db}$, any prover strategy, and a claimed output table that differs from the true output at some point of the output cube, the chain event of \cref{def:chainbad} has probability at most
\[
\frac{s_{\mathrm{out}}+\sum_i\bigl(2s_i\,\mathrm{db}(s_i)+1\bigr)}{p^4}
\]
over challenges drawn independently and uniformly from $\extK$.
\end{theorem}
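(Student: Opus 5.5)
The plan is to obtain this as an instance of \cref{thm:b}, which is stated for an arbitrary finite field $\F$ and an arbitrary structure of abstract polynomials over it. What remains is to show that $\extK$ with the total-degree multivariate polynomials meets the hypotheses of that abstract setting. Then we read off the denominator $|\extK| = p^4$.

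First, I will equip $\extK$ with the structure of a finite field in the sense the development requires. By \cref{prop:ext-field}(b), $\extK$ is a field with $p^4$ elements. In the mechanization this means instantiating the type class of finite fields for \isaf{koala\_bear\_ext4}. That requires three things: the commutative ring axioms of the tower construction (\cref{app:ext-proofs}), the inverse formula $(a,b)^{-1} = (a/N,-b/N)$ with $N = a^2-\delta b^2\neq 0$, and finiteness with cardinality $p^4$ via the bijection $\chi$ of \cref{prop:ext-field}(e). The nonsquare facts of \cref{lem:nonsquare} enter only through $N\neq 0$ at both levels.

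Second, I will interpret the locale \isaf{gkr\_layer\_sumcheck} at the AFP multivariate-polynomial instance over $\extK$, exactly as \isaf{gkr\_mpoly} does over a generic field. The axioms of \isaf{multi\_variate\_polynomial} for this instance are proved in the AFP entry for any field. Its root bound is the univariate root bound, which needs only that $\extK$ is an integral domain, so it holds once the field instance exists. The representative hypothesis on $P$ and $\dbnd$ is carried over verbatim as a premise. The uniform distribution on $\extK^N$ is \isaf{pmf\_of\_set} (\isaf{tuples} \isaf{UNIV} $N$) over the new type, with no change.

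Third, with the interpretation in place, \cref{thm:b} specializes directly, and $|\F|$ is replaced by $\mathrm{card}(\extK) = p^4$ using the cardinality fact from the first step. The main obstacle I expect is this first step in the mechanization. The abstract theorem is generic in the field type, so everything hinges on a type-class instance for a concrete nested-pair type. In particular, the field axioms have to be discharged without a general quadratic-extension library, so the ring identities for $(a,b)(c,e) = (ac+\delta be, ae+bc)$ must be proved once generically and then instantiated twice. The cardinality computation, via the product-type cardinality $|R\times R| = |R|^2$ applied twice, is routine by comparison.
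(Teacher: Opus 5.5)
Your proposal is correct and follows the paper's own proof. \Cref{thm:b} is generic in the finite field and uses only the abstract-polynomial interface, which the AFP entry discharges for total-degree multivariate polynomials over a finite integral domain; so once $\extK$ is a finite field (\cref{prop:ext-field}(b)), the bound specializes with denominator $|\extK| = p^4$. The one cosmetic difference is that the development verifies the ring axioms of the two tower levels separately rather than from one generic quadratic construction, and this does not affect the argument.
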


\begin{proof}
The proof of \cref{thm:b} uses no property of $\F$ beyond its being a finite field, and of the polynomials only the interface of the imported sumcheck development~\cite{sumcheckafp2024}: variable sets are finite; the degree of a sum is at most the larger of the two degrees, and partial substitution does not raise the degree; evaluation is additive and compatible with partial substitution; and two distinct univariate polynomials of degree at most $d$ agree on at most $d$ points. That development proves these interface facts for multivariate polynomials over a finite integral domain, with total degree as the degree. Since $\extK$ is a finite field by \cref{prop:ext-field}(b), both requirements hold, \cref{thm:b} applies with $\F = \extK$, and $|\extK| = p^4$.
\end{proof}

\noindent Admissible representatives exist over $\extK$ with $\mathrm{db}(s) = 2s$: the multilinear interpolation of each layer integrand is constructed over an arbitrary finite field (\isaf{layer\_repr\_mpoly} in \texttt{Layer\_Representative.thy}), and these representatives discharge the premise of \cref{thm:b-ext4}. The verifier's acceptance over $\extK$ is bounded separately, on paper, by \cref{prop:b-faithful}, which uses the integrand polynomial of \cref{lem:integrand-poly} in place of a representative.

\begin{remark}[Size of the bound]
\label{rem:ext-scale}
Since $\log_2 p^4>123.95$, the bound is small at practical circuit sizes. For the verifier of \cref{def:verifier}, \cref{prop:b-faithful} over $\extK$ gives the per-layer term $8s_i+1$. For at most $2^{10}$ layers with $s_i\le 32$ and $s_{\mathrm{out}}\le 2^{10}$, as for circuits whose layers have at most $2^{32}$ wires, the numerator is at most $2^{10}(1+257)<2^{18.02}$, and the bound is below $2^{-105}$. The circuits of the constraint-level model of \cref{sec:compiler} give every entry its own width bit, so their widths are larger: at the default parameters $d = 24$ and $m = 31$ of \cref{sec:semantics}, with $W = 8$, the input and the output of the hash stack have fewer than $2^{10}$ width bits. For at most $2^{10}$ layers with $s_i < 2^{10}$ and $s_{\mathrm{out}} < 2^{10}$, the numerator is below $2^{10}(1+8\cdot 2^{10}+1)<2^{23.01}$, and the bound is below $2^{-100}$. \Cref{fig:bound} plots the bound for $s_i = s_{\mathrm{out}} = 32$ over $\F_p$ and over $\extK$, and over $\extK$ for $s_{\mathrm{out}} = 448$, the update circuit at the default parameters, with every $s_i$ at its upper bound $1023$. This is a statement about the interactive protocol over $\extK$ (\cref{sec:limitations}).
\end{remark}

\begin{figure}[t]
\centering
\begin{tikzpicture}
\begin{axis}[
  width=0.66\textwidth, height=5.4cm,
  xmode=log, log basis x=2,
  xmin=1, xmax=4096,
  ymin=-128, ymax=0,
  xtick={1,4,16,64,256,1024,4096},
  xticklabels={$2^{0}$,$2^{2}$,$2^{4}$,$2^{6}$,$2^{8}$,$2^{10}$,$2^{12}$},
  ytick={-120,-100,-80,-60,-40,-20,0},
  xlabel={number of layers $n$},
  ylabel={$\log_2$ of the bound},
  label style={font=\small}, tick label style={font=\footnotesize},
  grid=major, grid style={black!10},
  legend style={font=\footnotesize, draw=none, fill=white, fill opacity=0.85, text opacity=1, at={(0.03,0.55)}, anchor=west},
  legend cell align=left]
\addplot[domain=1:4096, samples=80, thick, black!45] {ln(32+257*x)/ln(2) - 30.98869};
\addlegendentry{$\F_p$, $s_i = 32$}
\addplot[domain=1:4096, samples=80, thick, blue!65!black] {ln(32+257*x)/ln(2) - 123.95475};
\addlegendentry{$\extK$, $s_i = 32$}
\addplot[domain=1:4096, samples=80, thick, dashed, blue!65!black] {ln(448+8185*x)/ln(2) - 123.95475};
\addlegendentry{$\extK$, $s_i = 1023$}
\addplot[only marks, mark=*, mark size=1.3pt, black!45, forget plot] coordinates {(1024,-12.98)};
\addplot[only marks, mark=*, mark size=1.3pt, blue!65!black, forget plot] coordinates {(1024,-105.95) (1024,-100.96)};
\node[font=\footnotesize, anchor=north, black!60] at (axis cs:1024,-14.5) {$2^{-13.0}$};
\node[font=\footnotesize, anchor=north, blue!65!black] at (axis cs:1024,-106.5) {$2^{-105.9}$};
\node[font=\footnotesize, anchor=south east, blue!65!black] at (axis cs:900,-100.4) {$2^{-101.0}$};
\end{axis}
\end{tikzpicture}
\caption{The bound of \cref{prop:b-faithful} against the number of layers, for $s_i = s_{\mathrm{out}} = 32$ and, over $\extK$, for the compiled update circuit ($s_{\mathrm{out}} = 448$) with every $s_i$ at its upper bound $1023$. Dots mark $n = 2^{10}$ (\cref{rem:ext-scale}).}
\label{fig:bound}
\end{figure}

\subsection{Carrying Base-Field Circuits into the Extension}

We use the circuits of \cref{def:circuit} and the multilinear extensions of \cref{def:mle}. In this subsection a table $b$ is also read as a vector with entries $b_j = b(j)$ and length $|b|$, and we write $\mathrm{eval}(L,b)_z$ for $\mathrm{eval}(L,b)(z)$. By \cref{def:circuit},
\begin{equation}
\label{eq:ext-eval}
\mathrm{eval}(L,b)_z = \mathrm{const}_L(z) \;+\sum_{g\in G,\;\mathrm{out}(g) = z} c(g)\,\mathrm{sem}\bigl(\kappa(g),\,b_{\mathrm{in}_1(g)},\,b_{\mathrm{in}_2(g)}\bigr),
\end{equation}
where both sums run over lists, so that repeated entries count repeatedly; $\mathrm{const}_L(z)$ is the \emph{constant sum} of $L$ at $z$. We say that $L$ \emph{reads inside} $b$ if $\mathrm{in}_1(g),\mathrm{in}_2(g)<|b|$ for every gate $g$ of $L$. A layer that is well formed over $|b|$ reads inside $b$.

\begin{definition}[Lifting]
\label{def:ext-lift}
The lift of a gate over $\F_p$ replaces its coefficient $c$ by $\iota(c)$ and keeps its kind and wires. The lift $\operatorname{lift}(L)$ of a layer keeps its width, lifts every gate, and replaces every constant $(o,c)$ by $(o,\iota(c))$. A vector is lifted entrywise, $\iota(b) = (\iota(b_0),\dots,\iota(b_{m-1}))$, and a chain is lifted by lifting each of its layers and vectors and keeping its widths.
\end{definition}

\noindent In the mechanization, the lifts of gates and layers are \isaf{lift\_gate} and \isaf{lift\_layer} in \texttt{KoalaBear\_Ext4\_Lift.thy}.

\begin{lemma}[Lifting one layer; partly mechanized]
\label{lem:lift-layer}
Let $L$ be a layer over $\F_p$ and $b$ a vector over $\F_p$.
\begin{enumerate}[label=(\alph*)]
\item $\iota$ commutes with finite sums, finite products and powers.
\item For $r\in\F_p^n$ and $j<2^n$, $\mathrm{eq}(\iota(r),j) = \iota(\mathrm{eq}(r,j))$. Consequently, for every table $f$ on $\{0,\dots,2^n-1\}$ over $\F_p$, $\widetilde{\iota\circ f}(\iota(r)) = \iota(\widetilde f(r))$; in particular, $\widetilde{\iota\circ f}$ takes the value $\iota(f(j))$ at the point of $\cube{n}$ given by the bits of $j$.
\item $w(\operatorname{lift}(L)) = w(L)$, and for every wire $z$ the constant sum of $\operatorname{lift}(L)$ at $z$ is $\iota$ of that of $L$.
\item If $L$ reads inside $b$, then $\mathrm{eval}(\operatorname{lift}(L),\iota(b)) = \iota(\mathrm{eval}(L,b))$.
\item For every $m$, $\operatorname{lift}(L)$ is well formed over $m$ if and only if $L$ is.
\item If $L$ reads inside $b$, $z<2^{w(L)}$, and $f(z)\neq\mathrm{eval}(L,b)_z$ for a function $f$ from wires to $\F_p$, then $\iota(f(z))\neq\mathrm{eval}(\operatorname{lift}(L),\iota(b))_z$.
\end{enumerate}
\end{lemma}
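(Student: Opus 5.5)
The plan is to prove the six parts in order, with each part resting only on the ones before it and on \cref{prop:ext-field}(c). For (a), $\iota$ is a ring homomorphism, so it preserves $0$, $1$, $+$ and $\cdot$. Induction on the length of a list then shows that $\iota$ commutes with finite sums and finite products, and powers are the special case of a product of equal factors. For (b), each factor $e_j$ of $\eqk_n(z,a)$ is either $z_j$ or $1-z_j$. Since $\iota(1-r_j) = 1-\iota(r_j)$, each factor of $\eqk_n(\iota(r),j)$ equals $\iota$ of the corresponding factor of $\eqk_n(r,j)$, and (a) for products gives the identity for $\eqk$. For the extension we expand $\widetilde{\iota\circ f}(\iota(r)) = \sum_a \iota(\eqk_n(r,a))\,\iota(f(a))$ and use (a) again to pull $\iota$ outside. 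The last clause is then \cref{lem:mle-facts}(b) applied over $\extK$, or equivalently the previous identity at the Boolean point $r = \bpt{j}{n}$ with $\iota(0)=0$ and $\iota(1)=1$.

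For (c), the width is kept by \cref{def:ext-lift}. The constant sum at $z$ is a list sum over the constant pairs with first component $z$, and lifting replaces each such $c$ by $\iota(c)$ while leaving the filter on wires unchanged, so (a) applies. Part (d) is the core. We fix a wire $z$ and apply \eqref{eq:ext-eval} to both sides. The constant term is handled by (c). For the gate sum, the list of gates with $\mathrm{out}(g)=z$ is the same before and after lifting, because wires are untouched. Each summand of the lifted layer is $\iota(c(g))\cdot\mathrm{sem}(\kappa,\iota(b)_{\mathrm{in}_1},\iota(b)_{\mathrm{in}_2})$. Here the reading-inside hypothesis is used: it guarantees that $\iota(b)_{\mathrm{in}_k} = \iota(b_{\mathrm{in}_k})$, since the entry lies within the lifted vector. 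We then check that $\mathrm{sem}$ commutes with $\iota$ for each of the three kinds: identity, product, and cube, using (a). Summing over the list with (a) completes the argument. The main thing to keep straight is that the domains of the two tables match. Both tables are indexed by $[2^{w}]$ with the same width, so this is routine.

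For (e), well-formedness constrains only widths, wire indices, kinds and constant positions, and lifting preserves all of these. Hence both directions follow by unfolding the definitions, and nothing about $\iota$ is needed. For (f), we assume the contrary, $\iota(f(z)) = \mathrm{eval}(\operatorname{lift}(L),\iota(b))_z$. By (d) the right-hand side is $\iota(\mathrm{eval}(L,b)_z)$, and injectivity of $\iota$ (\cref{prop:ext-field}(c)) then gives $f(z) = \mathrm{eval}(L,b)_z$, a contradiction. The hypothesis $z<2^{w(L)}$ ensures that the wire lies in the domain of the evaluation table on both sides. I expect no real obstacle. The step that needs the most care is (d), specifically the bookkeeping that the list filters and the vector indexing agree before and after lifting, and that the reading-inside hypothesis is applied exactly at the lookups $b_{\mathrm{in}_k(g)}$.
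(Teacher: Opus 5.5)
Your proposal is correct and follows the paper's proof essentially step for step: (a) from $\iota$ being a ring homomorphism, (b) and (c) by pushing $\iota$ through the defining products and list sums with (a), (d) wire by wire via \eqref{eq:ext-eval} with the reading-inside hypothesis used exactly at the lookups $\iota(b)_{\mathrm{in}_k(g)} = \iota(b_{\mathrm{in}_k(g)})$, (e) because lifting touches only coefficients and constant values, and (f) from (d) and the injectivity of $\iota$. The only differences are cosmetic: you prove (d) before (e), and you give two equivalent routes to the Boolean-point clause of (b).
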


\noindent Parts (a)--(c) and (e) are mechanized. Parts (d) and (f) are mechanized under the stronger hypothesis that $L$ is well formed over $|b|$; the versions stated here follow from mechanized lemmas by the argument below (\cref{tab:mechanization}).

\begin{proof}
(a) holds for every ring homomorphism, and $\iota$ is one by \cref{prop:ext-field}(c). (b) $\mathrm{eq}(r,j)$ is a product of factors $r_\ell$ and $1-r_\ell$, and $\widetilde f(r)$ is a sum of products $\mathrm{eq}(r,j)\,f(j)$, so (a) gives both identities; at the point given by the bits of $j$, $\mathrm{eq}(\cdot,j')$ is $1$ for $j' = j$ and $0$ otherwise. (c) The width is unchanged, and the constant sum of $\operatorname{lift}(L)$ at $z$ is the sum of the values $\iota(c)$ over the same constants, which is $\iota$ of the constant sum of $L$ at $z$ by (a). (e) Lifting changes coefficients and constant values only, while well-formedness refers only to the width, the wires and the kinds.

(d) Both sides have length $2^{w(L)}$ by (c). Fix $z<2^{w(L)}$ and evaluate \eqref{eq:ext-eval} for $\operatorname{lift}(L)$ on $\iota(b)$. The constant sum is $\iota$ of that of $L$ by (c). A gate $g$ of $L$ with $\mathrm{out}(g) = z$ contributes, through its lift,
\[
\iota(c(g))\,\mathrm{sem}\bigl(\kappa(g),\,\iota(b)_{\mathrm{in}_1(g)},\,\iota(b)_{\mathrm{in}_2(g)}\bigr) = \iota\bigl(c(g)\,\mathrm{sem}(\kappa(g),\,b_{\mathrm{in}_1(g)},\,b_{\mathrm{in}_2(g)})\bigr),
\]
because $\mathrm{in}_1(g),\mathrm{in}_2(g)<|b|$ gives $\iota(b)_{\mathrm{in}_k(g)} = \iota(b_{\mathrm{in}_k(g)})$ for $k = 1,2$, and $\mathrm{sem}(\kappa(g),\cdot,\cdot)$ is a monomial in its last two arguments. The lifted gates writing to $z$ are the lifts of the gates of $L$ writing to $z$, in the same order, so (a) gives $\mathrm{eval}(\operatorname{lift}(L),\iota(b))_z = \iota(\mathrm{eval}(L,b)_z)$.

(f) By (d), $\mathrm{eval}(\operatorname{lift}(L),\iota(b))_z = \iota(\mathrm{eval}(L,b)_z)$, and $\iota$ is injective by \cref{prop:ext-field}(c).
\end{proof}

\noindent Part (b) concerns points with coordinates in $\iota(\F_p)$. At other points of $\extK^n$ the extension of an embedded table in general leaves $\iota(\F_p)$: for the table $(1,2)$ on $\{0,1\}$ it takes the value $(1-\alpha)+2\alpha = 1+\alpha$ at $\alpha$ (mechanized as \isaf{lifted\_mle\_positive\_witness}). The chain over $\extK$ therefore works with the extensions over $\extK$ of embedded tables, and the transfer below concerns the tables themselves, through part (d).

We write a layer chain (\cref{def:chain}) over a field as a triple $(s,L,b)$ of lists of a common length $n$: widths $s_i$, layers $L_i$, and the tables $b_i$ that the layers read, $b_{n-1}$ being the input. A triple of lists of a common length $n\ge 1$ satisfies the mechanized chain condition \isaf{gkr\_chain\_wf} exactly when
\begin{enumerate}[label=(Ch\arabic*)]
\item every gate $g$ of $L_i$ has $\mathrm{in}_1(g),\mathrm{in}_2(g)<2^{s_i}$, for all $i<n$, and
\item $s_i = w(L_{i+1})$ and $b_i = \mathrm{eval}(L_{i+1},b_{i+1})$, for all $i<n-1$.
\end{enumerate}
The mechanized definition \isaf{gkr\_chain\_wf} states these conditions recursively along the chain, and an induction on $n$ shows that the two forms agree (\cref{app:ext-proofs}). A layer chain in the sense of \cref{def:chain} also has $|b_{n-1}| = 2^{s_{n-1}}$; conditions (Ch1) and (Ch2) alone do not bound the length of $b_{n-1}$, so the following proposition is stated for triples that satisfy them and whose last layer reads inside its input.

\begin{proposition}[Lifting a chain; paper proof]
\label{prop:lift-chain}
Let $(s,L,b)$ be a triple of lists of length $n\ge 1$ over $\F_p$ that satisfies (Ch1) and (Ch2) and whose last layer $L_{n-1}$ reads inside $b_{n-1}$; both hold for a layer chain (\cref{def:chain}). Then:
\begin{enumerate}[label=(\alph*)]
\item $\mathrm{eval}(\operatorname{lift}(L_i),\iota(b_i)) = \iota(\mathrm{eval}(L_i,b_i))$ for every $i<n$;
\item the lifted triple $(s,\operatorname{lift}(L),\iota(b))$ satisfies (Ch1) and (Ch2) over $\extK$, and it is a layer chain over $\extK$ when $(s,L,b)$ is a layer chain over $\F_p$;
\item $w(\operatorname{lift}(L_0)) = w(L_0)$, and for every function $f$ from wires to $\F_p$ and every $z<2^{w(L_0)}$ with $f(z)\neq\mathrm{eval}(L_0,b_0)_z$, we have $\iota(f(z))\neq\mathrm{eval}(\operatorname{lift}(L_0),\iota(b_0))_z$.
\end{enumerate}
In particular, if $f$ is a false output claim for $(s,L,b)$, that is, $f(z)\neq\mathrm{eval}(L_0,b_0)_z$ for some $z<2^{w(L_0)}$, then $\iota\circ f$ is a false output claim for the lifted chain.
\end{proposition}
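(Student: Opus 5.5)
The plan is to reduce every part to the one-layer statements of \cref{lem:lift-layer}, applied index by index along the chain. The key observation is that (Ch1) makes every layer $L_i$ read inside $b_i$, once we know $|b_i| = 2^{s_i}$. For $i = n-1$ we do not need this length fact, because reading inside $b_{n-1}$ is a hypothesis. For $i<n-1$ the table is $b_i = \mathrm{eval}(L_{i+1},b_{i+1})$ by (Ch2). By \cref{def:circuit} this evaluation has length $2^{w(L_{i+1})}$, which equals $2^{s_i}$ by (Ch2) again. So (Ch1) gives $\mathrm{in}_1(g),\mathrm{in}_2(g)<2^{s_i} = |b_i|$, and $L_i$ reads inside $b_i$ for every $i<n$.

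Part (a) then follows directly from \cref{lem:lift-layer}(d) for each $i$.

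For part (b), I check the two chain conditions separately. (Ch1) for the lifted triple holds because lifting keeps the kind and the wires of every gate (\cref{def:ext-lift}), so the same wire bounds hold. For (Ch2), the widths are preserved: $w(\operatorname{lift}(L_{i+1})) = w(L_{i+1}) = s_i$ by \cref{lem:lift-layer}(c). The table condition comes from applying $\iota$ to both sides of $b_i = \mathrm{eval}(L_{i+1},b_{i+1})$ and using part (a) at index $i+1$, which gives $\iota(b_i) = \mathrm{eval}(\operatorname{lift}(L_{i+1}),\iota(b_{i+1}))$. If $(s,L,b)$ is a layer chain in the sense of \cref{def:chain}, it satisfies (Ch1) and (Ch2) by the equivalence noted before the proposition and proved in \cref{app:ext-proofs}. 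In that case it also has $|b_{n-1}| = 2^{s_{n-1}}$, which is preserved because lifting is entrywise. Hence the lifted triple is a layer chain over $\extK$; the nonemptiness condition is immediate.

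For part (c), the width claim is \cref{lem:lift-layer}(c). The inequality is \cref{lem:lift-layer}(f) applied to $L_0$ and $b_0$, whose read-inside hypothesis was established in the first paragraph. The closing sentence of the proposition is (c) read with the output table $\iota\circ f$: a false claim at wire $z$ stays false at $z$ after lifting.

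The only real obstacle is bookkeeping. (Ch1) is phrased with $2^{s_i}$ rather than $|b_i|$, so the reduction depends on deriving the length $|b_i| = 2^{s_i}$ from (Ch2). That derivation is exactly why the last layer needs its separate read-inside hypothesis. I would write this length fact out carefully, by downward induction on $i$ or directly as above, since it is the one point where mechanized well-formedness and the paper's hypotheses diverge.
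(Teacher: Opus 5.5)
Your proof is correct and follows the paper's argument step for step. You first show that every $L_i$ reads inside $b_i$: at $i=n-1$ by hypothesis, and otherwise from $|\mathrm{eval}(L_{i+1},b_{i+1})| = 2^{w(L_{i+1})} = 2^{s_i}$ together with (Ch1). You then obtain (a), (b) and (c) from Lemma~\ref{lem:lift-layer}(c), (d) and (f). One small point: a layer chain satisfies (Ch1) and (Ch2) directly by Definition~\ref{def:chain}, so you do not need the appendix equivalence, which concerns the recursive mechanized form \isaf{gkr\_chain\_wf}.
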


\noindent It rests on \cref{lem:lift-layer}.

\begin{proof}
We first show that $L_i$ reads inside $b_i$ for every $i<n$. For $i = n-1$ this is the hypothesis. For $i<n-1$, (Ch2) gives $b_i = \mathrm{eval}(L_{i+1},b_{i+1})$, a vector of length $2^{w(L_{i+1})} = 2^{s_i}$, and (Ch1) bounds the input wires of every gate of $L_i$ by $2^{s_i}$.

(a) follows from \cref{lem:lift-layer}(d).

(b) We check (Ch1) and (Ch2) for the lifted triple, whose three lists again have length $n$; lifting preserves the length of $b_{n-1}$, so a layer chain lifts to a layer chain. Lifting a gate keeps its input wires, so every gate of $\operatorname{lift}(L_i)$ has input wires below $2^{s_i}$. For $i<n-1$, \cref{lem:lift-layer}(c) gives $w(\operatorname{lift}(L_{i+1})) = w(L_{i+1}) = s_i$, and (Ch2) for the original chain together with (a) at index $i+1$ gives
\[
\iota(b_i) = \iota(\mathrm{eval}(L_{i+1},b_{i+1})) = \mathrm{eval}(\operatorname{lift}(L_{i+1}),\iota(b_{i+1})).
\]

(c) This is \cref{lem:lift-layer}(c) and (f) for $L_0$ and $b_0$, which applies because $L_0$ reads inside $b_0$.
\end{proof}

\section{Sparse-Merkle Semantics and Leaf Binding}
\label{sec:semantics}

This section fixes the meaning against which the compiled circuit is verified: the tree, the leaf encoding, the two hash functions and the validity predicate of the three operations (Definition~\ref{def:smt}). It then proves two facts about how a root binds the leaf stored at a key: the pre-image layout of the leaf hash is injective (Lemma~\ref{lem:leaffold}), and two different leaves with valid paths to one root at one key yield a collision of the leaf hash or of the node hash (Proposition~\ref{prop:binding}). Because the node hash of the implementation is not collision resistant, the section ends with the property that is assumed instead, position binding of the tree (Definition~\ref{def:pb}).

\begin{definition}[Sparse-Merkle semantics]
\label{def:smt}
Fix a depth $d \in \mathbb{N}$, a bound $m \geq 1$ on the length of leaf encodings, a set $\mathcal{U}$ of payloads, a set $\mathcal{D}$ of digests and the field $\F$.
\begin{enumerate}[label=(\alph*),leftmargin=*]
\item \emph{Keys.} A key is a natural number $k$; it is \emph{in range} if $k < 2^d$. Its bits are $\smBit_j(k) = \lfloor k/2^j \rfloor \bmod 2$ for $0 \leq j < d$. Bit $0$ belongs to the level of the leaf.
\item \emph{Leaf states.} The set $\mathcal{L}$ of leaf states consists of $\smEmpty$ (never populated), $\smOcc(v)$ for $v \in \mathcal{U}$ (populated with payload $v$) and $\smTomb$ (populated, then retired).
\item \emph{Leaf encoding.} Given a payload encoder $\smPenc \colon \mathcal{U} \to \F^{\ast}$, the encoding $\smEncode \colon \mathcal{L} \to \F^{\ast}$ is
\[
\smEncode(\smEmpty) = (0), \qquad \smEncode(\smOcc(v)) = (1) \,\|\, \smPenc(v), \qquad \smEncode(\smTomb) = (2),
\]
where $\|$ denotes concatenation. Its first entry is the \emph{tag} $\smTag(\ell) \in \{0, 1, 2\}$ of the state $\ell$.
\item \emph{Fold.} Let $\nu(n) = n \cdot 1_{\F}$ and $w_m = 8 \lceil (m+1)/8 \rceil$. For $e = (e_0, \ldots, e_{n-1}) \in \F^{n}$ with $1 \leq n \leq m$,
\[
\smFold_m(e) = \bigl(e_0,\ \nu(n),\ e_1, \ldots, e_{n-1},\ 0, \ldots, 0\bigr) \in \F^{w_m},
\]
that is, the tag lane, the length lane, the remaining entries verbatim, and zeros up to the width $w_m$. The width satisfies $w_m \geq m + 1$ and is a multiple of the sponge rate $8$.
\item \emph{Hashes.} A leaf hash $\hleaf \colon \mathcal{L} \to \mathcal{D}$ and a node hash $\hnode \colon \mathcal{D} \times \mathcal{D} \to \mathcal{D}$.
\item \emph{Paths.} For $a \in \mathcal{D}$, a key $k$ and a sibling list $\vec{s} = (s_0, \ldots, s_{n-1}) \in \mathcal{D}^{n}$, the \emph{accumulator chain} is $a_0 = a$ and, for $0 \leq j < n$,
\[
a_{j+1} = \begin{cases} \hnode(a_j, s_j) & \text{if } \smBit_j(k) = 0,\\ \hnode(s_j, a_j) & \text{if } \smBit_j(k) = 1; \end{cases}
\]
the path root is $\smPathRoot(a, k, \vec{s}) = a_n$. For a leaf state $\ell$ and a digest $r$, $\smPathOK(\ell, k, \vec{s}, r)$ holds iff $n = d$, $k < 2^d$ and $\smPathRoot(\hleaf(\ell), k, \vec{s}) = r$.
\item \emph{Operations and validity.} An operation is $\smMem(k, v)$, $\smNon(k)$ or $\smUpd(k, \ell_{\mathrm{o}}, \ell_{\mathrm{n}})$ for a key $k$ (the key of the operation), a payload $v$ and leaf states $\ell_{\mathrm{o}}, \ell_{\mathrm{n}}$. A witness is a pair $w = (\ell, \vec{s})$ of a leaf state and a sibling list. For digests $r$ (the claimed root) and $r'$ (the claimed post-root),
\[
\begin{aligned}
\smtvalid(\smMem(k,v), r, r', (\ell,\vec{s})) &\iff r' = r \,\wedge\, \ell = \smOcc(v) \,\wedge\, \smPathOK(\ell, k, \vec{s}, r),\\
\smtvalid(\smNon(k), r, r', (\ell,\vec{s})) &\iff r' = r \,\wedge\, \ell \in \{\smEmpty, \smTomb\} \,\wedge\, \smPathOK(\ell, k, \vec{s}, r),\\
\smtvalid(\smUpd(k,\ell_{\mathrm{o}},\ell_{\mathrm{n}}), r, r', (\ell,\vec{s})) &\iff \ell = \ell_{\mathrm{o}} \,\wedge\, \smPathOK(\ell_{\mathrm{o}}, k, \vec{s}, r) \,\wedge\, \smPathOK(\ell_{\mathrm{n}}, k, \vec{s}, r').
\end{aligned}
\]
\end{enumerate}
\end{definition}

The definition transcribes \isaf{encode}, \isaf{path\_root}, \isaf{path\_ok} and \isaf{smt\_valid} of \texttt{SMT\_Semantics.thy} and \isaf{leaf\_fold} and \isaf{leaf\_pre\_width} of \texttt{Leaf\_Fold.thy}. The formal \isaf{path\_root} consumes the low bit of $k$ at each level and recurses on $\lfloor k/2 \rfloor$, which gives the chain of item~(f) (Lemma~\ref{lem:smt-acc-chain}). An update uses one sibling list for both roots, so it changes the leaf at $k$ and nothing else. In the formal semantics $\hleaf$ and $\hnode$ are parameters with no assumption on them. In the implementation, digests are vectors of $8$ field elements, $\hnode$ is a truncated Poseidon2 permutation~\cite{poseidon2} with no feed-forward, and $\hleaf(\ell) = \smSponge(\smFold_m(\smEncode(\ell)))$ for an overwrite-mode sponge $\smSponge$ of rate $8$ over the same permutation; the default parameters are $d = 24$ and $m = 31$. This factorization of $\hleaf$ is used only in Lemma~\ref{lem:leaffold} and Proposition~\ref{prop:binding}. The results of Section~\ref{sec:compiler} use no property of either hash.

\begin{lemma}[Leaf-fold injectivity and the leaf-side reduction; mechanized]
\label{lem:leaffold}
Assume that $\nu$ is injective on $\{0, 1, \ldots, m\}$.
\begin{enumerate}[label=(\roman*),leftmargin=*]
\item If $e, e' \in \F^{\ast}$ satisfy $1 \leq |e| \leq m$, $1 \leq |e'| \leq m$ and $\smFold_m(e) = \smFold_m(e')$, then $e = e'$.
\item Let $H$ be a function on $\F^{\ast}$ and $\smSponge$ a function on $\F^{w_m}$ such that $H(e) = \smSponge(\smFold_m(e))$ whenever $1 \leq |e| \leq m$. If $e \neq e'$ are two such encodings with $H(e) = H(e')$, then $p = \smFold_m(e)$ and $p' = \smFold_m(e')$ satisfy $p \neq p'$ and $\smSponge(p) = \smSponge(p')$.
\end{enumerate}
\end{lemma}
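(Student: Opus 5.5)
Part (i) is a recovery argument: the fold determines its argument. Let $e$ have length $n$ and $e'$ length $n'$, with $1 \le n,n' \le m$, and suppose $\smFold_m(e) = \smFold_m(e')$. Lane $0$ of both folds is the tag, so $e_0 = e'_0$. Lane $1$ holds $\nu(n)$ and $\nu(n')$. Since $n,n'\in\{0,\dots,m\}$ and $\nu$ is injective there, equality of lane $1$ gives $n = n'$. For $1\le j\le n-1$, lane $j+1$ holds $e_j$ in one fold and $e'_j$ in the other, so $e_j = e'_j$. The two tuples have the same length and the same entries, hence $e = e'$. The zero padding plays no role, except that both folds have the common width $w_m \ge m+1$, so lanes $0,\dots,n$ lie inside the vector.

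To make this airtight I would define a decoder. Given $p\in\F^{w_m}$ and $n\in\{1,\dots,m\}$ with $\nu(n) = p_1$, it returns $(p_0,p_2,\dots,p_n)$. I would show that the decoder inverts $\smFold_m$ on encodings of length between $1$ and $m$; this is the form I expect the mechanization to take, by list manipulations on take, drop and the padding. The main obstacle is the index bookkeeping of the layout: lane $j+1$ for entry $j\ge 1$, and the bound $n \le m < w_m$ that keeps every entry clear of the padding. The hypothesis on $\nu$ is used exactly once, at lane $1$. It is needed because in a field of small characteristic two lengths could fold to the same length lane; over $\F_p$ with $m<p$ it holds.

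Part (ii) follows from (i). Let $e\neq e'$ be encodings of admissible length with $H(e) = H(e')$, and put $p=\smFold_m(e)$ and $p'=\smFold_m(e')$. If $p = p'$, then (i) gives $e = e'$, a contradiction, so $p\neq p'$. The factorization hypothesis gives $\smSponge(p) = H(e) = H(e') = \smSponge(p')$, which is the claimed sponge collision.
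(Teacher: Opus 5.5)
Your proof is correct and follows the paper's argument. You compare the folds lane by lane; lane~$1$ together with the injectivity of $\nu$ on $\{0,\dots,m\}$ forces $|e| = |e'|$, and the verbatim lanes then give $e = e'$, while (ii) is (i) combined with the factorization $H(e) = \smSponge(\smFold_m(e))$, exactly as in the paper (the decoder you sketch only repackages the same recovery step and is not needed).
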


\noindent The formal statement of (ii) asserts that some $p \neq p'$ with $\smSponge(p) = \smSponge(p')$ exist; that the pair is $(\smFold_m(e), \smFold_m(e'))$ is read off its proof and follows from (i).

\begin{proof}
(i) Write $n = |e|$ and $n' = |e'|$, and compare the two vectors lane by lane. Lane $0$ gives $e_0 = e'_0$. Lane $1$ gives $\nu(n) = \nu(n')$; since $n, n' \leq m$, the hypothesis on $\nu$ gives $n = n'$. The remaining $w_m - 2$ lanes are $(e_1, \ldots, e_{n-1})$ followed by $w_m - n - 1$ zeros, and the same for $e'$ with the same $n$, so $e_j = e'_j$ for $1 \leq j < n$. Hence $e = e'$.

(ii) By (i), $e \neq e'$ implies $p \neq p'$, and $\smSponge(p) = H(e) = H(e') = \smSponge(p')$.
\end{proof}

In a field of positive characteristic $c$, $\nu(a) = \nu(b)$ holds iff $c$ divides $a - b$, so the hypothesis on $\nu$ is equivalent to $m < c$; in characteristic zero it always holds. For KoalaBear ($c = p = 2^{31} - 2^{24} + 1$) and $m = 31$ it is mechanized as \isaf{kb\_of\_nat\_inj\_on\_bound} in \texttt{KoalaBear\_Field.thy}. The condition $m < |\F|$ would not suffice in an extension field, whose characteristic is smaller than its size. The length lane is needed: the encodings $(1, 7)$ and $(1, 7, 0)$ agree on every lane of their folds except the length lane (\isaf{zero\_extension\_class\_separated}). The lemma assumes nothing about $\smSponge$; it converts a collision of $H$ into an explicit collision of $\smSponge$.

The next proposition is the standard Merkle-tree argument~\cite{merkle1987} for one key, followed by Lemma~\ref{lem:leaffold} on the leaf side.

\begin{proposition}[Binding of a root to the leaf at a key; paper proof]
\label{prop:binding}
Let $k$ be a key, $r$ a digest, and $(\ell, \vec{s})$, $(\ell', \vec{s}\,')$ two witnesses with $\ell \neq \ell'$, $\smPathOK(\ell, k, \vec{s}, r)$ and $\smPathOK(\ell', k, \vec{s}\,', r)$. Then one of the following holds, and which one, together with the colliding inputs, is computed from $(k, \ell, \ell', \vec{s}, \vec{s}\,')$ by $2$ evaluations of $\hleaf$ and $2d$ evaluations of $\hnode$:
\begin{enumerate}[label=(\alph*),leftmargin=*]
\item $\hleaf(\ell) = \hleaf(\ell')$;
\item there are pairs $(u, t) \neq (u', t')$ in $\mathcal{D}^2$ with $\hnode(u, t) = \hnode(u', t')$.
\end{enumerate}
Assume moreover that $\hleaf(\lambda) = \smSponge(\smFold_m(\smEncode(\lambda)))$ for every $\lambda \in \mathcal{L}$, that $\smPenc$ is injective with $1 + |\smPenc(v)| \leq m$ for every $v$, that $2 \neq 0$ in $\F$, and that $\nu$ is injective on $\{0, \ldots, m\}$. Then in case~(a), $p = \smFold_m(\smEncode(\ell))$ and $p' = \smFold_m(\smEncode(\ell'))$ satisfy $p \neq p'$ and $\smSponge(p) = \smSponge(p')$.
\end{proposition}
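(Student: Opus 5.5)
The plan is the standard Merkle argument run on the two accumulator chains, followed by \cref{lem:leaffold} on the leaf side.

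From $\smPathOK$ both sibling lists have length $d$ and $k<2^d$. Let $a_j$ and $a'_j$ be the accumulator chains of \cref{def:smt}(f), started at $a_0=\hleaf(\ell)$ and $a'_0=\hleaf(\ell')$, for $0\le j\le d$. Both chains use the same key $k$, so at level $j$ the same bit $\smBit_j(k)$ decides whether the accumulator is the left or the right argument of $\hnode$. Computing both chains costs exactly $2$ leaf-hash and $2d$ node-hash evaluations, and everything below is read off these values.

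\emph{Case (a).} If $a_0=a'_0$, we are in case~(a).

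\emph{Case (b).} Otherwise $a_0\neq a'_0$, while $a_d=r=a'_d$. Take the least $j$ with $a_{j+1}=a'_{j+1}$; it exists because $j=d-1$ qualifies, and minimality gives $a_j\neq a'_j$. If $\smBit_j(k)=0$, set $(u,t)=(a_j,s_j)$ and $(u',t')=(a'_j,s'_j)$; otherwise swap each pair to $(s_j,a_j)$ and $(s'_j,a'_j)$. The two pairs differ in the accumulator coordinate, and $\hnode(u,t)=a_{j+1}=a'_{j+1}=\hnode(u',t')$, which is case~(b). The index $j$ is found by scanning the stored chains, so no further hash evaluations are needed. To make the chain formula match the formal \isaf{path\_root}, I would cite the accumulator-chain lemma (\cref{lem:smt-acc-chain}) mentioned after \cref{def:smt}.

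\emph{Sponge collision in case (a).} For the second part, I apply \cref{lem:leaffold}(ii) with $H=\smSponge\circ\smFold_m$ to $e=\smEncode(\ell)$ and $e'=\smEncode(\ell')$. Two facts need checking.
\begin{itemize}[leftmargin=*,nosep]
\item \emph{Length bounds.} We need $1\le|\smEncode(\lambda)|\le m$. Both $\smEmpty$ and $\smTomb$ have encodings of length $1\le m$, and $\smOcc(v)$ has length $1+|\smPenc(v)|\le m$ by hypothesis.
\item \emph{Injectivity of $\smEncode$.} The main point is $\smEncode(\ell)\neq\smEncode(\ell')$. I would compare tags. The tags are the field elements $0,1,2$, and these are pairwise distinct because $1\neq0$ in any field and $2\neq0$ is assumed, so $2\neq1$ follows from $1\neq0$ as well. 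Hence different constructors give encodings with different first entries. Between two $\smEmpty$ or two $\smTomb$ states there is nothing to prove, since then $\ell=\ell'$. For $\smOcc(v)\neq\smOcc(v')$ we have $v\neq v'$, so injectivity of $\smPenc$ makes the tails differ.
\end{itemize}
With both facts in place, \cref{lem:leaffold}(ii) gives $p\neq p'$ and $\smSponge(p)=\smSponge(p')$, with $p=\smFold_m(e)$ and $p'=\smFold_m(e')$ as in the remark after that lemma.

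I expect the only delicate step to be the injectivity of $\smEncode$, in particular the distinctness of the tags, which is where the assumption $2\neq0$ enters. The rest is bookkeeping.
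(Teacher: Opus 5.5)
Your proof is correct and matches the paper's argument. The paper builds the two accumulator chains, takes the colliding node step where they meet (you pick the least $j$ with $a_{j+1}=a'_{j+1}$, the paper the largest $j<d$ with $a_j\neq a'_j$; either choice gives case~(b)), and applies \cref{lem:leaffold}(ii) after checking the length bounds and that the encodings differ, using the distinct tags $0,1,2$ and the injectivity of $\smPenc$.
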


\noindent Its last step, from equal leaf hashes of two leaf states with different in-bound encodings to a collision of the sponge, is mechanized, in the existential form of Lemma~\ref{lem:leaffold}~(ii), as \isaf{encoded\_state\_collision\_to\_sponge\_collision} in \texttt{Leaf\_Fold.thy}. The factorization of $\hleaf$ through $\smSponge$, $\smFold_m$ and $\smEncode$ is a hypothesis of the proposition and is not part of the formal semantics.

\begin{proof}
Both sibling lists have length $d$. Let $(a_0, \ldots, a_d)$ and $(a'_0, \ldots, a'_d)$ be the accumulator chains of $(\hleaf(\ell), k, \vec{s})$ and $(\hleaf(\ell'), k, \vec{s}\,')$; computing them takes the stated hash evaluations, and $a_d = r = a'_d$. If $a_0 = a'_0$, then $\hleaf(\ell) = \hleaf(\ell')$, which is case~(a). Otherwise let $j < d$ be the largest index with $a_j \neq a'_j$, so that $a_{j+1} = a'_{j+1}$. Both chains use the same bit $\beta = \smBit_j(k)$ at level $j$ because the key is the same. If $\beta = 0$, then $\hnode(a_j, s_j) = a_{j+1} = a'_{j+1} = \hnode(a'_j, s'_j)$ and the pairs $(a_j, s_j)$ and $(a'_j, s'_j)$ differ in the first component. If $\beta = 1$, then $\hnode(s_j, a_j) = \hnode(s'_j, a'_j)$ and the pairs differ in the second component. This is case~(b).

For the second statement, the encodings $\smEncode(\ell)$ and $\smEncode(\ell')$ are different. The tags $0, 1, 2$ are pairwise distinct in $\F$: $1 \neq 0$ in a field, $2 \neq 0$ by hypothesis, and $2 = 1$ would give $1 = 0$. Hence states of different kinds have different first entries, and $\smOcc(v) \neq \smOcc(v')$ gives $\smPenc(v) \neq \smPenc(v')$ by injectivity. Both encodings have length between $1$ and $m$: the empty and retired states have length $1 \leq m$, and $|\smEncode(\smOcc(v))| = 1 + |\smPenc(v)| \leq m$. In case~(a), Lemma~\ref{lem:leaffold}~(ii) with $H(e) = \smSponge(\smFold_m(e))$ gives $p \neq p'$ and $\smSponge(p) = \smSponge(p')$.
\end{proof}

Proposition~\ref{prop:binding} is a security argument only for a node hash that is collision resistant, and the node hash of the implementation is not. With $\mathsf{Perm}$ the public Poseidon2 permutation of $\F^{16}$, $\hnode(a, b)$ is the first half of $\mathsf{Perm}(a \| b)$, with no feed-forward, so for any $z \in \F^{8}$ and distinct $u, u' \in \F^{8}$ the vectors $\mathsf{Perm}^{-1}(z \| u)$ and $\mathsf{Perm}^{-1}(z \| u')$ are two different inputs with node hash $z$, found with two calls to the inverse permutation. Such a collision does not give two authentication paths, because a path must start at a leaf hash and its accumulators must connect to that start. The property a system needs is the following one, which case~(b) of the proposition does not reduce to anything simpler for this hash.

\begin{definition}[Position binding]
\label{def:pb}
An algorithm $\mathcal{A}$ \emph{breaks position binding} of the tree of depth $d$ if it outputs $(k, r, \ell, \vec{s}, \ell', \vec{s}\,')$ with $\ell \neq \ell'$, $\smPathOK(\ell, k, \vec{s}, r)$ and $\smPathOK(\ell', k, \vec{s}\,', r)$. Its advantage $\mathsf{Adv}^{\mathrm{pb}}(\mathcal{A})$ is the probability of this event.
\end{definition}

\noindent Two leaf states with equal leaf hashes break position binding with any sibling list, so the definition includes the collision resistance of $\hleaf$ on leaf states. The hash is fixed and unkeyed, so an algorithm that breaks position binding exists, as for any fixed compressing hash; the assumption is read through the explicit reduction of Corollary~\ref{cor:one-leaf}. We do not prove position binding. It is the one cryptographic assumption of this paper and it enters only Corollary~\ref{cor:one-leaf}; the results of Section~\ref{sec:compiler} and the single-run bounds of Section~\ref{sec:composition} use no cryptographic property of the hashes.

Support for the assumption comes from Coratger, Khovratovich, Mennink and Wagner~\cite{coratger2026merkle}, who analyze Merkle trees with a truncated permutation as compression function and, in one of their results, an overwrite sponge over the same permutation as leaf hash. In the ideal-permutation model they prove strong position binding for a tree of fixed depth, with advantage at most $(4q^2 + 2q)/(|\mathcal{D}| - 1)$ for $q \leq |\mathcal{D}|$ queries to the permutation and its inverse (their Theorem~3 with Lemma~1). Under the hypotheses of Proposition~\ref{prop:binding}, different leaf states have different inputs to the sponge, so their notion implies Definition~\ref{def:pb} up to the order of the key bits. For $|\mathcal{D}| = p^{8}$ with the KoalaBear prime $p$, that bound reaches $1$ at about $2^{123}$ queries, the birthday limit of a $248$-bit digest. Their leaf hash starts with a public key block and the implementation's does not, and their model idealizes Poseidon2, so we cite their theorem as support and do not apply it. Section~\ref{sec:composition} uses position binding to read an accepted run.

\section{Exact Arithmetization}
\label{sec:compiler}

This section defines the input vector that encodes a witness (Definition~\ref{def:encoding}), the constraint system and the circuit compiled from it (Definition~\ref{def:residuals}), and the verifier model (Definition~\ref{def:verifier-model}). It then proves that the verifier model accepts exactly the valid transitions (Theorems~\ref{thm:a-sound} and~\ref{thm:a-complete}, Corollary~\ref{cor:a-iff}). Circuits are layered circuits as in Definition~\ref{def:circuit}. A circuit $C$ \emph{accepts} an input vector $x$, written $\smAcc(C, x)$, when every entry of its output vector $V_0$ is zero (\isaf{circuit\_accept} in \texttt{Layered\_Circuit.thy}). The results of this section hold over any integral domain $\F$; from Section~\ref{sec:composition} on, $\F$ is a finite field.

The compiled circuit depends only on the kind of the operation: $\smKind(\smMem(k,v)) = \smKM$, $\smKind(\smNon(k)) = \smKN$ and $\smKind(\smUpd(k,\ell_{\mathrm o},\ell_{\mathrm n})) = \smKU$. We call $\smKM$ and $\smKN$ the read-only kinds. Digests and leaf states reach the circuit through two representation maps: an injective map $\smRep \colon \mathcal{D} \to \F^{W}$ that writes a digest as $W$ field elements, and a map $\smPre \colon \mathcal{L} \to \F^{\smWl}$ with $\smWl \geq 1$, the leaf pre-image, whose first entry is the tag, $\smPre(\lambda)_0 = \smTag(\lambda)$. In the implementation $W = 8$, $\smWl = w_m$ and $\smPre(\lambda) = \smFold_m(\smEncode(\lambda))$. The results below use only the injectivity of $\smRep$ and the tag entry of $\smPre$.

\begin{definition}[Witness encoder]
\label{def:encoding}
Let $\ell, \ell'$ be leaf states, $k$ a key, $\vec{s} = (s_0, \ldots, s_{d-1})$ a sibling list of length $d$, and $r, r', \mathit{vd}$ digests. Write $(a_0, \ldots, a_d)$ and $(a'_0, \ldots, a'_d)$ for the accumulator chains of $(\hleaf(\ell), k, \vec{s})$ and $(\hleaf(\ell'), k, \vec{s})$, $\vec{b} = (\smBit_0(k), \ldots, \smBit_{d-1}(k))$, and $0^{t}$ for $t$ zeros.
\begin{enumerate}[label=(\alph*),leftmargin=*]
\item For a read-only kind $\tau$, let $n_{\mathrm{ro}} = \smWl + W(2d+3) + d$ and
\[
x^{\circ} = \smPre(\ell) \,\|\, \smRep(a_0) \| \cdots \| \smRep(a_d) \,\|\, \smRep(s_0) \| \cdots \| \smRep(s_{d-1}) \,\|\, \vec{b} \,\|\, \smRep(r) \,\|\, \smRep(\mathit{vd}),
\]
a vector of length $n_{\mathrm{ro}}$. Then $\smEnc_{\tau}(\ell, k, \vec{s}, r, \mathit{vd}) = x^{\circ} \,\|\, 0^{2^{n_{\mathrm{ro}}} - n_{\mathrm{ro}}}$.
\item For updates, let $n_{\mathrm{up}} = 2\smWl + W(3d+5) + d$ and
\[
x^{\circ} = \smPre(\ell) \| \smPre(\ell') \,\|\, \smRep(a_0) \| \cdots \| \smRep(a_d) \,\|\, \smRep(a'_0) \| \cdots \| \smRep(a'_d) \,\|\, \smRep(s_0) \| \cdots \| \smRep(s_{d-1}) \,\|\, \vec{b} \,\|\, \smRep(r) \| \smRep(r') \| \smRep(\mathit{vd}).
\]
Then $\smEnc_{\smKU}(\ell, \ell', k, \vec{s}, r, r', \mathit{vd}) = x^{\circ} \,\|\, 0^{2^{n_{\mathrm{up}}} - n_{\mathrm{up}}}$.
\item For an operation $\mathit{op}$ and a witness $w = (\ell, \vec{s})$ with $|\vec{s}| = d$, the input vector $\smEnc(\mathit{op}, r, r', \mathit{vd}, w)$ is $\smEnc_{\smKM}(\ell, k, \vec{s}, r, \mathit{vd})$ for $\mathit{op} = \smMem(k, v)$, $\smEnc_{\smKN}(\ell, k, \vec{s}, r, \mathit{vd})$ for $\mathit{op} = \smNon(k)$, and $\smEnc_{\smKU}(\ell_{\mathrm o}, \ell_{\mathrm n}, k, \vec{s}, r, r', \mathit{vd})$ for $\mathit{op} = \smUpd(k, \ell_{\mathrm o}, \ell_{\mathrm n})$.
\end{enumerate}
We write $n_\tau$ for $n_{\mathrm{ro}}$ or $n_{\mathrm{up}}$; every encoded vector has length $2^{n_\tau}$.
\end{definition}

\noindent The encoder computes the accumulator chains itself from the leaf states and the siblings, writes the key bits of $k$ and the claimed digests $r, r', \mathit{vd}$, and, for an update, encodes the old leaf $\ell_{\mathrm o}$ of the operation in place of the witness leaf. It is \isaf{encode\_witness} of \texttt{Compiler\_Model.thy}, built from \isaf{enc\_ro} and \isaf{enc\_up}; the chain is \isaf{acc\_list}.

\begin{definition}[Residual constraint system and compiled circuit]
\label{def:residuals}
\emph{Hash sub-circuits.} Let $\hwl \colon \F^{\smWl} \to \F^{W}$ and $\hwn \colon \F^{W} \times \F^{W} \to \F^{W}$ be the functions computed by the hash sub-circuits. They are related to the semantics only by
\begin{equation}
\hwl(\smPre(\lambda)) = \smRep(\hleaf(\lambda)), \qquad \hwn(\smRep(a), \smRep(b)) = \smRep(\hnode(a, b)) \qquad (\lambda \in \mathcal{L},\ a, b \in \mathcal{D}). \tag{H}
\end{equation}

\emph{Segments.} For a vector $x$ of length at least $n_\tau$, let $P$, $A_0, \ldots, A_d$, $S_0, \ldots, S_{d-1}$, $\beta_0, \ldots, \beta_{d-1}$, $R$ and $V$ be its segments at the positions of $\smPre(\ell)$, $\smRep(a_0), \ldots, \smRep(a_d)$, $\smRep(s_0), \ldots, \smRep(s_{d-1})$, $\smBit_0(k), \ldots, \smBit_{d-1}(k)$, $\smRep(r)$ and $\smRep(\mathit{vd})$ in Definition~\ref{def:encoding}~(a) or~(b); each $\beta_j$ is a single entry. For updates let also $P'$, $A'_0, \ldots, A'_d$ and $R'$ be the segments at the positions of $\smPre(\ell')$, $\smRep(a'_0), \ldots, \smRep(a'_d)$ and $\smRep(r')$. The \emph{tag entry} is $x_0$, the first entry of $P$.

\emph{Multiplexer.} For $\beta \in \F$ and $u, t \in \F^{W}$, lane by lane,
\[
\smMuxL(\beta, u, t) = u + \beta (t - u), \qquad \smMuxR(\beta, u, t) = t + \beta (u - t),
\]
so that $\beta = 0$ selects $(u, t)$ and $\beta = 1$ selects $(t, u)$. Let $M_j = \hwn\bigl(\smMuxL(\beta_j, A_j, S_j), \smMuxR(\beta_j, A_j, S_j)\bigr)$ for $j < d$, and let $M'_j$ be the same expression with $A'_j$ in place of $A_j$.

\emph{Constraints.} For a read-only kind $\tau$, $\smCons_\tau(x)$ is the conjunction of
\begin{center}
\begin{tabular}{@{}l@{\qquad}l@{\qquad}l@{}}
(C1) & $\hwl(P) = A_0$ & leaf hash\\
(C2) & $M_j = A_{j+1}$ for all $j < d$ & compression chain\\
(C3) & $A_d = R$ & root binding\\
(C4) & $A_0 = V$ & value-digest binding\\
(C5) & $\beta_j (\beta_j - 1) = 0$ for all $j < d$ & key-bit booleanity\\
(C6) & $x_0 (x_0 - 2) = 0$, only for $\tau = \smKN$ & non-membership tag
\end{tabular}
\end{center}
For updates, $\smCons_{\smKU}(x)$ is the conjunction of (C1) for $(P, A_0)$ and for $(P', A'_0)$; (C2) for both chains, $M_j = A_{j+1}$ and $M'_j = A'_{j+1}$, with the same $S_j$ and $\beta_j$; (C3) for $A_d = R$ and for $A'_d = R'$; (C4$'$) $A'_0 = V$; and (C5).

\emph{Residuals.} Each constraint is checked by residuals, field elements that vanish exactly when it holds: $u_i - t_i$ for each lane $i < W$ of a vector equation $u = t$ in (C1) to (C4$'$), $\beta_j^2 - \beta_j$ for (C5), and $x_0^2 - 2x_0$ for (C6). Let $\Delta_\tau$ be the list of these residuals and $\smRout = |\Delta_\tau|$.

\emph{Compiled circuit.} The circuit $C_\tau$ has $n_\tau$ input width bits and layers $L_0, L_1, \ldots, L_{n-1}$ with $L_0 = \smRes(\Delta_\tau)$ and $(L_1, \ldots, L_{n-1}) = \smStack_\tau$.
\begin{itemize}[leftmargin=*]
\item The \emph{evaluation stack} $\smStack_\tau$ is a list of layers that computes the vector $\smEv_\tau(x)$ listing $\hwl(P)$, $M_0, \ldots, M_{d-1}$, $A_0, \ldots, A_d$, $\beta_0, \ldots, \beta_{d-1}$, $R$, $V$ and $x_0$ (for updates: $\hwl(P)$, $\hwl(P')$, the $M_j$, the $M'_j$, the $A_j$, the $A'_j$, the $\beta_j$, $R$, $R'$ and $V$), padded with zeros to length $2^{e_\tau}$, where $e_\tau$ is the length of the unpadded list:
\[
V_n = x \ \text{ and } \ V_i = \mathrm{eval}(L_i, V_{i+1}) \ (1 \leq i < n) \quad \Longrightarrow \quad V_1 = \smEv_\tau(x), \qquad \text{for every } x \in \F^{2^{n_\tau}}. \tag{S}
\]
\item The \emph{residual layer} $\smRes(\Delta)$ has $|\Delta|$ width bits and no constants. For the $z$-th residual it has two gates that write wire $z$ and read entries of $\smEv_\tau(x)$: for $u_i - t_i$, two linear gates with coefficients $1$ and $-1$ reading $u_i$ and $t_i$; for $\beta^2 - \beta$, a product gate with coefficient $1$ reading $\beta$ twice and a linear gate with coefficient $-1$ reading $\beta$; for $x_0^2 - 2x_0$, the same with coefficient $-2$ on the linear gate.
\end{itemize}
\end{definition}

\noindent This is the locale \isaf{compiler\_model} of \texttt{Compiler\_Model.thy}: (H) is \isaf{hl\_correct} and \isaf{hn\_correct}, (S) is \isaf{stack\_spec\_ro} and \isaf{stack\_spec\_up}, the residual lists are \isaf{descs\_ro} and \isaf{descs\_up} with gates \isaf{res\_gates}, the conjunctions are \isaf{cons\_ro} and \isaf{cons\_up}, and $C_\tau$ is \isaf{compile}.

\Cref{fig:arith} shows how the verifier model of Definition~\ref{def:verifier-model} uses the compiled circuit.

\begin{figure}[t]
\centering
\begin{tikzpicture}[font=\footnotesize,
  bx/.style={draw, rounded corners=2pt, align=center, inner sep=3pt, minimum height=9mm, fill=black!3},
  hs/.style={draw, rounded corners=2pt, align=center, inner sep=3pt, minimum height=9mm, fill=black!12},
  arr/.style={-{Stealth[length=1.8mm]}, semithick},
  note/.style={align=left, inner sep=1pt, text=black!75}]
\node[bx] (enc) {encoder};
\node[bx, right=6mm of enc] (inp) {input $x$\\ $P \mid A_0,\dots,A_d \mid S_j \mid \beta_j \mid R \mid V$};
\node[hs, right=6mm of inp] (hs) {hash sub-circuits\\ $\hwl$, $\hwn$: (H), (S)};
\node[bx, right=6mm of hs] (res) {residual layer $L_0$:\\ all outputs $0$\\ (by GKR in the protocol)};
\node[bx, above=6mm of enc.north west, anchor=south west] (pub) {public checks: $r' = r$ for reads, $k < 2^d$, $\mathit{vd}$ matches $\mathit{op}$, $|\vec{s}| = d$, echo};
\node[bx, anchor=east] (in) at ($(enc.west)!0.5!(enc.west |- pub.west) + (-8mm,0)$) {claim\\ $(\mathit{op}, r, r', \mathit{vd})$\\ and witness\\ $w = (\ell, \vec{s})$};
\node[bx, anchor=west] (acc) at ($(res.east)!0.5!(res.east |- pub.east) + (5mm,0)$) {verifier\\ model\\ accepts};
\node[note, below=2mm of res.south east, anchor=north east] (cons) {(C1) leaf hash, (C2) chain, (C3) root,\\ (C4) value digest, (C5) key bits, (C6) tag};
\node[bx, dashed, fill=white, below=5mm of inp] (prv) {or sent by the prover, with $\beta_j$, $R$, $V$\\ checked against $k$, $r$, $\mathit{vd}$ (\cref{thm:a-vector})};
\draw[arr] (in.east |- enc) -- (enc.west);
\draw[arr] (in.east |- pub) -- (pub.west);
\draw[arr] (enc) -- (inp);
\draw[arr] (inp) -- (hs);
\draw[arr] (hs) -- (res);
\draw[arr, dashed] (prv) -- (inp);
\draw[arr] (pub.east) -- (acc);
\draw[arr] (res.east) -- (acc);
\end{tikzpicture}
\caption{The verifier model with the compiled circuit, for read-only kinds. Updates add $P'$, $A'_j$ and $R'$ and use (C4$'$) in place of (C4); (C6) applies to non-membership only. The dashed input is the setting of \cref{thm:a-vector}.}
\label{fig:arith}
\end{figure}

\paragraph{Abstract hash sub-circuits.} The model fixes the residual layer, the multiplexer arithmetic and the layout, and treats the layers that compute $\hwl$ and $\hwn$ as an abstract list of layers specified by (S); the hash functions enter only through (H). The gate-level arithmetization of Poseidon2 inside the stack is outside the formal development, and the results of this section hold for every stack that satisfies (S). (H) is a correctness specification of the sub-circuits: no collision resistance or injectivity of the hashes is assumed in this section. The assumptions are satisfiable. \texttt{Compiler\_Instance.thy} builds a one-layer stack over KoalaBear for depth $1$ with a linear hash pair, proves (H) and (S) for it (\isaf{compiler\_model\_kb}), and applies Theorems~\ref{thm:a-complete} and~\ref{thm:a-sound} below to a membership claim (\isaf{instance\_theorem\_A\_activation}, \isaf{instance\_theorem\_A\_roundtrip}).

\paragraph{Widths in the model.} The residual layer has $\smRout = |\Delta_\tau|$ width bits, hence $2^{\smRout}$ wires. Wire $z < \smRout$ carries the $z$-th residual; the other wires have no gates, evaluate to $0$ and never prevent acceptance. Taking $\smRout$ bits for $\smRout$ residuals uses $\smRout < 2^{\smRout}$ and avoids index arithmetic. For the residual layer, this convention affects only the first term of the bound in Theorem~\ref{thm:c}, which equals $\smRout$; a layer of $\lceil \log_2 \smRout \rceil$ bits would make that term $\lceil \log_2 \smRout \rceil$. Counting the residuals of Definition~\ref{def:residuals},
\[
|\Delta_{\smKM}| = W(d+3) + d, \qquad |\Delta_{\smKN}| = W(d+3) + d + 1, \qquad |\Delta_{\smKU}| = W(2d+5) + d,
\]
so that for $W = 8$ these are $240$, $241$ and $448$ at $d = 24$, and $312$, $313$ and $584$ at $d = 32$. The other widths follow the same convention. $C_\tau$ has $n_\tau$ input width bits, so an input vector has $2^{n_\tau}$ entries, and the encoder writes zeros after position $n_\tau$. This length is a convention of the model and not a message size: the constraints read only the first $n_\tau$ entries, so the verifier of \cref{sec:smt-vector} can receive those entries and take the rest to be zero. The target $\smEv_\tau(x)$ of (S) is padded to $2^{e_\tau}$ entries because layer evaluation produces vectors of power-of-two length, so an unpadded target could not be met. These widths enter the bound of Theorem~\ref{thm:c} through the widths of the layers below $L_0$.

\begin{lemma}[Acceptance is the constraint conjunction; mechanized]
\label{lem:circuit-cons}
Assume (S). For a read-only kind $\tau$ and every $x \in \F^{2^{n_{\mathrm{ro}}}}$, $\smAcc(C_\tau, x) \iff \smCons_\tau(x)$. For every $x \in \F^{2^{n_{\mathrm{up}}}}$, $\smAcc(C_{\smKU}, x) \iff \smCons_{\smKU}(x)$.
\end{lemma}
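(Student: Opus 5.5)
The plan is to unfold $\smAcc(C_\tau,x)$ into a statement about the residual layer alone, and then match the residuals one by one with the constraints. First, by Definition~\ref{def:circuit} and the circuit-value recurrence, the output is $V_0 = \mathrm{eval}(L_0, V_1)$ with $L_0 = \smRes(\Delta_\tau)$. Specification (S), applied to the input $x$ of length $2^{n_\tau}$, gives $V_1 = \smEv_\tau(x)$. Hence $\smAcc(C_\tau,x)$ says exactly that $\mathrm{eval}(\smRes(\Delta_\tau),\smEv_\tau(x))(z) = 0$ for every $z < 2^{\smRout}$.

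Second, I compute this evaluation wire by wire. $\smRes(\Delta_\tau)$ has no constants. Wires $z \geq \smRout$ carry no gates, so they evaluate to $0$ and impose nothing. On a wire $z < \smRout$ exactly two gates write, so the evaluation is their sum, read from the entries of $\smEv_\tau(x)$:
\begin{itemize}
\item for a lane residual, it is $1\cdot u_i + (-1)\cdot t_i = u_i - t_i$;
\item for key-bit booleanity, it is $\beta_j\beta_j - \beta_j$;
\item for the tag, it is $x_0^2 - 2x_0$.
\end{itemize}
This needs the positions of $\hwl(P)$, $M_j$, $A_j$, $\beta_j$, $R$, $V$, $x_0$ (and the primed segments for updates) in the list $\smEv_\tau(x)$. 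It also needs these indices to lie below $2^{e_\tau}$, which holds because they index the unpadded prefix. So the $z$-th output is exactly the $z$-th residual of $\Delta_\tau$ evaluated at $x$.

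Third, I group the residuals by constraint. A vector equation $u = t$ in $\F^W$ holds iff all its $W$ lane residuals vanish. $\beta_j^2-\beta_j = 0$ is literally $\beta_j(\beta_j-1)=0$, and $x_0^2-2x_0=0$ is $x_0(x_0-2)=0$, so (C5) and (C6) match by ring identities alone. No integral-domain property is needed here: the constraints are stated in factored form and the residuals are their expansions. The conjunction over all residuals in $\Delta_\tau$ is therefore (C1)–(C6) for read-only kinds, with (C6) present exactly for $\smKN$. For $\smKU$ it is the listed conjunction, including (C4$'$) and the two copies of (C1)–(C3).

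The main obstacle is the bookkeeping in the second step. The residual descriptors must address the entries of $\smEv_\tau(x)$ consistently with its layout, for both chains in the update case and with the shared $S_j,\beta_j$. In the mechanization I would prove a generic lemma first: for any descriptor list whose gates read indices below the length of the evaluated vector, the residual layer's output at $z$ equals the $z$-th descriptor's value. Then I would show, by induction over the concatenated descriptor lists (\isaf{descs\_ro}, \isaf{descs\_up}), that “all descriptor values are zero” is equivalent to the corresponding conjunction (\isaf{cons\_ro}, \isaf{cons\_up}).
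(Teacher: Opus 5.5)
Your proposal is correct and takes essentially the paper's route. You use (S) to reduce acceptance to $\mathrm{eval}(\smRes(\Delta_\tau),\smEv_\tau(x))$, evaluate each wire as the sum of its two residual gates (zero on the gate-free padding wires), and match lane residuals and the factored forms $\beta_j(\beta_j-1)$, $x_0(x_0-2)$ against the constraints. You also push the index bookkeeping into a generic residual-layer lemma plus the layout of $\smEv_\tau(x)$, which is how the paper's appendix splits it (the lemmas on the residual layer and on the entries of the stack output).
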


\noindent In the development, \isaf{A\_i\_ro} and \isaf{A\_i\_up} are proved inside the locale \isaf{compiler\_model}, whose assumptions include (H) as well as (S). Their proofs use only (S), and the constraints $\smCons$ refer to the functions that the sub-circuits compute, not to $\hleaf$ or $\hnode$. When $\mathcal{D} = \F^{W}$ and $\smRep$ is the identity, as in \cref{sec:smt-vector}, interpreting the locale with these functions as the hashes makes (H) hold by definition, so the lemma applies there as stated.

\begin{proof}
Let $x$ have length $2^{n_\tau}$. By (S), the vector below the residual layer is $\smEv_\tau(x)$, so the output of $C_\tau$ on $x$ is $\mathrm{eval}(\smRes(\Delta_\tau), \smEv_\tau(x))$. For $z < \smRout$, wire $z$ is written by exactly the two gates of the $z$-th residual and has no constant. A linear gate with coefficient $c$ reading $u$ contributes $c u$, and a product gate with coefficient $c$ reading $\beta$ twice contributes $c \beta^2$, so wire $z$ takes the value of the $z$-th residual at $\smEv_\tau(x)$. For $\smRout \leq z < 2^{\smRout}$ no gate writes wire $z$, so its value is $0$. Hence $C_\tau$ accepts $x$ iff every residual vanishes at $\smEv_\tau(x)$. The residuals read the entries of $\smEv_\tau(x)$ that hold $\hwl(P)$, $M_j$, $A_j$, $\beta_j$, $R$, $V$ and $x_0$, and their update counterparts. The lane residuals of a vector equation all vanish iff the equation holds; moreover $\beta^2 - \beta = \beta(\beta - 1)$ and $x_0^2 - 2x_0 = x_0(x_0 - 2)$. So the residuals vanish iff $\smCons_\tau(x)$. Appendix~\ref{app:smt-proofs} gives the positions of these entries and the gate count per wire.
\end{proof}

\begin{lemma}[The constraints on encoded inputs; mechanized]
\label{lem:smt-honest-enc}
Assume (H), let $\vec{s}$ have length $d$, and let $(a_j)$, $(a'_j)$ be as in Definition~\ref{def:encoding}.
\begin{enumerate}[label=(\roman*),leftmargin=*]
\item For a read-only kind $\tau$ and $x = \smEnc_\tau(\ell, k, \vec{s}, r, \mathit{vd})$, the constraints (C1), (C2) and (C5) hold, and
\[
\smCons_\tau(x) \iff \smPathRoot(\hleaf(\ell), k, \vec{s}) = r \;\wedge\; \mathit{vd} = \hleaf(\ell) \;\wedge\; \bigl(\tau = \smKN \Rightarrow \ell \in \{\smEmpty, \smTomb\}\bigr).
\]
\item For $x = \smEnc_{\smKU}(\ell, \ell', k, \vec{s}, r, r', \mathit{vd})$, the constraints (C1) and (C2) for both chains and (C5) hold, and
\[
\smCons_{\smKU}(x) \iff \smPathRoot(\hleaf(\ell), k, \vec{s}) = r \;\wedge\; \smPathRoot(\hleaf(\ell'), k, \vec{s}) = r' \;\wedge\; \mathit{vd} = \hleaf(\ell').
\]
\end{enumerate}
\end{lemma}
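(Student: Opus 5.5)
The plan is to read off, segment by segment, what the encoder writes into $x$ and substitute it into each constraint of Definition~\ref{def:residuals}, using (H) to turn the sub-circuit functions into $\smRep$ of the semantic hashes and the injectivity of $\smRep$ to turn vector equalities into digest equalities. By Definition~\ref{def:encoding}, the segments of $x = \smEnc_\tau(\ell,k,\vec s,r,\mathit{vd})$ are $P = \smPre(\ell)$, $A_j = \smRep(a_j)$, $S_j = \smRep(s_j)$, $\beta_j = \smBit_j(k)$ (embedded in $\F$), $R = \smRep(r)$ and $V = \smRep(\mathit{vd})$, with $x_0 = \smTag(\ell)$.

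\emph{Constraints that hold by construction.} (C5) holds because each $\beta_j \in \{0,1\}$ and $\beta(\beta-1)=0$ for both values. (C1) holds because (H) gives $\hwl(\smPre(\ell)) = \smRep(\hleaf(\ell)) = \smRep(a_0)$. For (C2), a case split on $\beta_j$ shows that the multiplexer selects $(A_j,S_j)$ when $\beta_j = 0$ and $(S_j,A_j)$ when $\beta_j = 1$: for instance $\smMuxL(0,u,t) = u$ and $\smMuxL(1,u,t) = t$. So $M_j$ is $\hwn(\smRep(a_j),\smRep(s_j))$ or $\hwn(\smRep(s_j),\smRep(a_j))$. By (H) this is $\smRep$ of the correctly ordered node hash, which is $\smRep(a_{j+1})$ by the accumulator chain of Definition~\ref{def:smt}(f). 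The only care needed here is the bit convention: the encoder's $\vec b$ and the chain both use $\smBit_j(k)$ at level $j$, and Lemma~\ref{lem:smt-acc-chain} identifies the formal recursive $\smPathRoot$ with $a_d$.

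\emph{The remaining constraints.} (C3) reads $\smRep(a_d) = \smRep(r)$, which is equivalent to $a_d = r$ by injectivity of $\smRep$, and $a_d = \smPathRoot(\hleaf(\ell),k,\vec s)$. (C4) reads $\smRep(a_0) = \smRep(\mathit{vd})$, which is equivalent to $\mathit{vd} = \hleaf(\ell)$. For $\tau = \smKN$, (C6) reads $t(t-2)=0$ with $t = \smTag(\ell)\in\{0,1,2\}$. In an integral domain this holds iff $t = 0$ or $t = 2$. These are the tags of $\smEmpty$ and $\smTomb$, so (C6) holds iff $\ell\in\{\smEmpty,\smTomb\}$. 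This step needs $1\ne 0$ and $1 \ne 2$, hence $1\cdot(1-2) = -1 \neq 0$. Conjoining these equivalences with the constraints that always hold gives (i).

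For (ii), the same argument applies to both chains, which share $S_j$ and $\beta_j$. (C1) and (C2) hold for $(P,A)$ and for $(P',A')$, and (C5) holds. The two instances of (C3) give the two path-root equalities. (C4$'$) reads $\smRep(a'_0) = \smRep(\mathit{vd})$, that is, $\mathit{vd} = \hleaf(\ell')$. The main obstacle is bookkeeping rather than mathematics: in the formalization one must match segment offsets to the entries and recognise the chain \isaf{acc\_list} as \isaf{path\_root}. The algebraic content is just the two-case multiplexer computation and the factorisation of the tag constraint.
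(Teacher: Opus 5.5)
Your proposal is correct and follows the paper's proof essentially step by step. You read off the encoder's segments; you get (C1), (C2) and (C5) from (H), the two-case multiplexer computation and the Boolean key bits; you get (C3) and (C4)/(C4$'$) from the injectivity of $\smRep$ together with $a_d = \smPathRoot(\hleaf(\ell),k,\vec s)$; you get (C6) from the absence of zero divisors and $1 \notin \{0,2\}$; and you treat the update case by running the same argument on both chains with the shared $S_j$ and $\beta_j$.
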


\noindent The constraints that hold unconditionally are intermediate steps of these two proofs; for update chains the compression step is the separate lemma \isaf{up\_chain\_holds}.

\begin{proof}
(i) The segments of $x$ are the blocks written by the encoder: $P = \smPre(\ell)$, $A_j = \smRep(a_j)$, $S_j = \smRep(s_j)$, $\beta_j = \smBit_j(k)$, $R = \smRep(r)$, $V = \smRep(\mathit{vd})$ and $x_0 = \smTag(\ell)$. The zero padding lies after position $n_{\mathrm{ro}}$ and does not affect them.
(C1) holds by (H): $\hwl(\smPre(\ell)) = \smRep(\hleaf(\ell)) = \smRep(a_0)$.
For (C2), if $\smBit_j(k) = 0$ the multiplexer selects $(\smRep(a_j), \smRep(s_j))$, and by (H) and the definition of the chain, $M_j = \smRep(\hnode(a_j, s_j)) = \smRep(a_{j+1})$; if $\smBit_j(k) = 1$ it selects $(\smRep(s_j), \smRep(a_j))$ and $M_j = \smRep(\hnode(s_j, a_j)) = \smRep(a_{j+1})$.
(C5) holds because each $\beta_j \in \{0, 1\}$.
Since $\smRep$ is injective and $a_d = \smPathRoot(\hleaf(\ell), k, \vec{s})$, (C3) holds iff $\smPathRoot(\hleaf(\ell), k, \vec{s}) = r$, and (C4) holds iff $\hleaf(\ell) = \mathit{vd}$.
For (C6), $\F$ has no zero divisors, so $\smTag(\ell)(\smTag(\ell) - 2) = 0$ iff $\smTag(\ell) \in \{0, 2\}$. The tag of an occupied state is $1$, and $1 \notin \{0, 2\}$ since $1 = 2$ would give $1 = 0$. Hence (C6) holds iff $\ell \in \{\smEmpty, \smTomb\}$.

(ii) The same argument applies to both chains, which share $\vec{s}$ and $k$: (C1), (C2) and (C5) hold, (C3) holds iff both path roots equal $r$ and $r'$ respectively, and (C4$'$) holds iff $\hleaf(\ell') = \mathit{vd}$.
\end{proof}

\begin{definition}[Verifier model and canonical witness]
\label{def:verifier-model}
Let $\mathit{op}$ be an operation with key $k$, let $r, r', \mathit{vd}$ be digests and $w = (\ell, \vec{s})$ a witness.
\begin{enumerate}[label=(\alph*),leftmargin=*]
\item The value-digest check and the witness echo are
\[
\smVd(\mathit{op}, \mathit{vd}) \iff \begin{cases} \mathit{vd} = \hleaf(\smOcc(v)) & \text{if } \mathit{op} = \smMem(k, v),\\ \mathit{vd} \in \{\hleaf(\smEmpty), \hleaf(\smTomb)\} & \text{if } \mathit{op} = \smNon(k),\\ \mathit{vd} = \hleaf(\ell_{\mathrm n}) & \text{if } \mathit{op} = \smUpd(k, \ell_{\mathrm o}, \ell_{\mathrm n}), \end{cases}
\]
\[
\smEcho(\mathit{op}, w) \iff \begin{cases} \text{true} & \text{if } \mathit{op} = \smMem(k, v),\\ \ell \in \{\smEmpty, \smTomb\} & \text{if } \mathit{op} = \smNon(k),\\ \ell = \ell_{\mathrm o} & \text{if } \mathit{op} = \smUpd(k, \ell_{\mathrm o}, \ell_{\mathrm n}). \end{cases}
\]
\item The public-side checks $\smPub(\mathit{op}, r, r', \mathit{vd})$ hold iff $r' = r$ whenever $\smKind(\mathit{op}) \neq \smKU$, $k < 2^d$, and $\smVd(\mathit{op}, \mathit{vd})$.
\item The verifier model accepts, $\smVAcc(\mathit{op}, r, r', \mathit{vd}, w)$, iff $\smPub(\mathit{op}, r, r', \mathit{vd})$, $|\vec{s}| = d$, $\smEcho(\mathit{op}, w)$ and $\smAcc\bigl(C_{\smKind(\mathit{op})}, \smEnc(\mathit{op}, r, r', \mathit{vd}, w)\bigr)$; the last condition is evaluated only when $|\vec{s}| = d$.
\item The canonical witness and the canonical value digest are
\[
\smCanon(\mathit{op}, w) = \begin{cases} (\smOcc(v), \vec{s}) & \text{if } \mathit{op} = \smMem(k, v),\\ w & \text{otherwise}, \end{cases}
\]
\[
\smVdc(\mathit{op}, w) = \begin{cases} \hleaf(\smOcc(v)) & \text{if } \mathit{op} = \smMem(k, v),\\ \hleaf(\ell) & \text{if } \mathit{op} = \smNon(k),\\ \hleaf(\ell_{\mathrm n}) & \text{if } \mathit{op} = \smUpd(k, \ell_{\mathrm o}, \ell_{\mathrm n}). \end{cases}
\]
\end{enumerate}
\end{definition}

\noindent These are \isaf{vd\_ok}, \isaf{pub\_ok}, \isaf{echo\_ok}, \isaf{verifier\_accept}, \isaf{canon\_witness} and \isaf{wit\_vd} of \texttt{Compiler\_Correctness.thy}. The public-side checks read only the operation, the claimed roots and the claimed value digest; the witness enters through its length, the echo and the encoder. The verifier model evaluates $C_{\smKind(\mathit{op})}$ on the encoded input. In the protocol of Section~\ref{sec:gkr} the verifier performs the same checks and replaces this evaluation by an interactive proof that the output of the circuit on the encoded input is the zero vector; Section~\ref{sec:composition} connects the two.

\begin{theorem}[Compiler soundness; mechanized]
\label{thm:a-sound}
Assume (H) and (S). If $\smVAcc(\mathit{op}, r, r', \mathit{vd}, w)$, then $\smtvalid(\mathit{op}, r, r', \smCanon(\mathit{op}, w))$.
\end{theorem}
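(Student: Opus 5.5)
The plan is to unfold $\smVAcc$ and combine Lemma~\ref{lem:circuit-cons} with Lemma~\ref{lem:smt-honest-enc}. Since $|\vec{s}| = d$, the encoded input is well defined and has length $2^{n_\tau}$. Lemma~\ref{lem:circuit-cons}, which needs (S) only, turns $\smAcc(C_{\smKind(\mathit{op})},\smEnc(\mathit{op},r,r',\mathit{vd},w))$ into $\smCons_\tau$ of that vector. Lemma~\ref{lem:smt-honest-enc}, which needs (H), then turns $\smCons_\tau$ into statements about path roots, $\mathit{vd}$ and the leaf. What remains is a case split on the three kinds of operation, in which the public checks $\smPub$ and the echo supply the parts of $\smtvalid$ that the circuit does not see.

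\emph{Membership} $\mathit{op} = \smMem(k,v)$, $w = (\ell,\vec{s})$. Lemma~\ref{lem:smt-honest-enc}(i) gives $\smPathRoot(\hleaf(\ell),k,\vec{s}) = r$ and $\mathit{vd} = \hleaf(\ell)$, while $\smVd$ gives $\mathit{vd} = \hleaf(\smOcc(v))$. Hence $\smPathRoot(\hleaf(\smOcc(v)),k,\vec{s}) = r$. The canonical witness is $(\smOcc(v),\vec{s})$, and $\smPub$ supplies $k<2^d$ and $r' = r$, so $\smPathOK(\smOcc(v),k,\vec{s},r)$ and $\smtvalid$ follow.

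This is the one place where canonicalization matters. The witness leaf $\ell$ itself need not equal $\smOcc(v)$, since only its hash is pinned. The proof must therefore substitute the hash and avoid asserting $\ell = \smOcc(v)$, which would require injectivity of $\hleaf$.

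\emph{Non-membership.} The echo gives $\ell\in\{\smEmpty,\smTomb\}$, which also matches the (C6) clause of Lemma~\ref{lem:smt-honest-enc}(i). The path-root equation comes from the same lemma, and $r' = r$ and $k<2^d$ come from $\smPub$. Here $\smCanon$ is $w$.

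\emph{Update} $\smUpd(k,\ell_{\mathrm o},\ell_{\mathrm n})$. The encoder uses $\ell_{\mathrm o},\ell_{\mathrm n}$, so Lemma~\ref{lem:smt-honest-enc}(ii) gives both path roots, $r$ and $r'$, for the shared $\vec{s}$. The echo gives $\ell = \ell_{\mathrm o}$, and $\smPub$ gives $k<2^d$; the equation $\mathit{vd} = \hleaf(\ell_{\mathrm n})$ is not needed for validity.

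The main obstacle is bookkeeping rather than ideas. One has to check that the encoder of Definition~\ref{def:encoding}(c) feeds the operation's leaves, $\ell_{\mathrm o}$ and $\ell_{\mathrm n}$, rather than the witness leaf into Lemma~\ref{lem:smt-honest-enc}(ii). One also has to check that the length condition $|\vec{s}| = d$ and the range condition $k<2^d$ in $\smPathOK$ come from $\smVAcc$ and not from the circuit.
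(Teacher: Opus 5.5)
Your proposal is correct and follows the paper's proof essentially step for step. You unfold $\smVAcc$, pass from acceptance to $\smCons_\tau$ by Lemma~\ref{lem:circuit-cons} and then to path-root equations by Lemma~\ref{lem:smt-honest-enc}, and close the three cases with $\smPub$, the length check and the echo, substituting $\hleaf(\smOcc(v))$ for $\hleaf(\ell)$ in the membership case; the only deviation is that for non-membership you take $\ell\in\{\smEmpty,\smTomb\}$ from the echo rather than from the (C6) clause of Lemma~\ref{lem:smt-honest-enc}, and either source works.
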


\begin{proof}
Let $w = (\ell, \vec{s})$, let $k$ be the key of $\mathit{op}$, $\tau = \smKind(\mathit{op})$ and $x = \smEnc(\mathit{op}, r, r', \mathit{vd}, w)$. Acceptance gives $\smPub(\mathit{op}, r, r', \mathit{vd})$, $|\vec{s}| = d$, $\smEcho(\mathit{op}, w)$ and $\smAcc(C_\tau, x)$; in particular $k < 2^d$. The vector $x$ has length $2^{n_\tau}$, so Lemma~\ref{lem:circuit-cons} gives $\smCons_\tau(x)$, and Lemma~\ref{lem:smt-honest-enc} applies because $|\vec{s}| = d$.
\begin{itemize}[leftmargin=*]
\item $\mathit{op} = \smMem(k, v)$. Lemma~\ref{lem:smt-honest-enc} gives $\smPathRoot(\hleaf(\ell), k, \vec{s}) = r$ and $\mathit{vd} = \hleaf(\ell)$, and the public-side checks give $r' = r$ and $\mathit{vd} = \hleaf(\smOcc(v))$. Hence $\smPathRoot(\hleaf(\smOcc(v)), k, \vec{s}) = r$, so $\smPathOK(\smOcc(v), k, \vec{s}, r)$, which together with $r' = r$ is $\smtvalid(\mathit{op}, r, r', (\smOcc(v), \vec{s}))$.
\item $\mathit{op} = \smNon(k)$. Lemma~\ref{lem:smt-honest-enc} gives $\smPathRoot(\hleaf(\ell), k, \vec{s}) = r$ and $\ell \in \{\smEmpty, \smTomb\}$, and the public-side checks give $r' = r$. Hence $\smtvalid(\mathit{op}, r, r', w)$, and $\smCanon(\mathit{op}, w) = w$.
\item $\mathit{op} = \smUpd(k, \ell_{\mathrm o}, \ell_{\mathrm n})$. The echo gives $\ell = \ell_{\mathrm o}$, and Lemma~\ref{lem:smt-honest-enc} gives $\smPathRoot(\hleaf(\ell_{\mathrm o}), k, \vec{s}) = r$ and $\smPathRoot(\hleaf(\ell_{\mathrm n}), k, \vec{s}) = r'$. Hence $\smPathOK(\ell_{\mathrm o}, k, \vec{s}, r)$ and $\smPathOK(\ell_{\mathrm n}, k, \vec{s}, r')$, which with $\ell = \ell_{\mathrm o}$ is $\smtvalid(\mathit{op}, r, r', w)$. \qedhere
\end{itemize}
\end{proof}

For membership the conclusion is about $(\smOcc(v), \vec{s})$. Acceptance of the encoded input depends on the witness leaf only through $\hleaf(\ell)$, which the value-digest constraint and the public-side checks equate with $\mathit{vd} = \hleaf(\smOcc(v))$; so the verifier model cannot separate $\ell$ from $\smOcc(v)$ when the two have the same leaf hash. Whether $\ell = \smOcc(v)$ is a question about collisions of $\hleaf$, which Lemma~\ref{lem:leaffold} reduces to the sponge; leaving it out of Theorem~\ref{thm:a-sound} keeps the theorem free of hash assumptions.

\begin{theorem}[Compiler completeness; mechanized]
\label{thm:a-complete}
Assume (H) and (S). If $\smtvalid(\mathit{op}, r, r', w)$, then $\smVAcc(\mathit{op}, r, r', \smVdc(\mathit{op}, w), w)$.
\end{theorem}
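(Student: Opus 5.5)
We unfold $\smVAcc$ into its four conjuncts and discharge each one from $\smtvalid(\mathit{op}, r, r', w)$. The split is by the kind of $\mathit{op}$. Throughout, write $w = (\ell, \vec{s})$, let $k$ be the key of $\mathit{op}$, and put $\mathit{vd} = \smVdc(\mathit{op}, w)$.

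\emph{Length and public-side checks.} In every case validity contains $\smPathOK$ for some leaf with the sibling list $\vec{s}$. This gives $|\vec{s}| = d$ and $k < 2^d$. For the read-only kinds validity also asserts $r' = r$, so $\smPub$ reduces to $\smVd(\mathit{op}, \mathit{vd})$, which we check by cases.
For $\smMem(k,v)$, the canonical digest is $\hleaf(\smOcc(v))$, which is exactly what $\smVd$ requires.
For $\smNon(k)$, the canonical digest is $\hleaf(\ell)$, and validity gives $\ell \in \{\smEmpty, \smTomb\}$, so $\mathit{vd}$ lies in the required set.
For $\smUpd(k,\ell_{\mathrm o},\ell_{\mathrm n})$, the canonical digest is $\hleaf(\ell_{\mathrm n})$, which is again what $\smVd$ requires.

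\emph{Echo.} The echo is trivial for membership. For non-membership it is the clause $\ell \in \{\smEmpty,\smTomb\}$ of validity, and for updates it is the clause $\ell = \ell_{\mathrm o}$.

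\emph{Circuit acceptance.} This is the substantive step. By \cref{lem:circuit-cons}, which needs (S) and the fact that the encoded vector has length $2^{n_\tau}$, it suffices to show $\smCons_\tau(x)$ for $x = \smEnc(\mathit{op}, r, r', \mathit{vd}, w)$. We obtain this from the right-hand sides of \cref{lem:smt-honest-enc}, which needs (H) and $|\vec{s}| = d$.

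For membership, $x = \smEnc_{\smKM}(\ell, k, \vec{s}, r, \mathit{vd})$ encodes the witness leaf $\ell$, while validity states $\ell = \smOcc(v)$ and $\smPathOK(\smOcc(v), k, \vec{s}, r)$. Substituting, we get $\smPathRoot(\hleaf(\ell),k,\vec{s}) = r$ and $\mathit{vd} = \hleaf(\smOcc(v)) = \hleaf(\ell)$. The clause for $\smKN$ is vacuous.

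For non-membership, the path clause comes from $\smPathOK(\ell,k,\vec{s},r)$. The digest clause $\mathit{vd} = \hleaf(\ell)$ holds by the choice of $\mathit{vd}$, and the tag clause comes from validity.

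For updates, \cref{def:encoding}(c) encodes $\ell_{\mathrm o}$ and $\ell_{\mathrm n}$. The two path clauses are the two $\smPathOK$ conjuncts of validity, and $\mathit{vd} = \hleaf(\ell_{\mathrm n})$ holds by definition.

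The main point of care is bookkeeping. For membership the encoder uses the witness leaf $\ell$ rather than $\smOcc(v)$, so the equality $\ell = \smOcc(v)$ from validity must be invoked. For updates the encoder uses $\ell_{\mathrm o}$, not $\ell$. Beyond that, nothing goes past the earlier lemmas: no hash property other than (H) is used.
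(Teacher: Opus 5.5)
Your proposal is correct and follows the paper's proof. You derive $|\vec{s}| = d$ and $k < 2^d$ from the $\smPathOK$ clause of validity, check the public-side checks and the echo directly, and obtain circuit acceptance from \cref{lem:smt-honest-enc} followed by \cref{lem:circuit-cons}; the only difference is that you organize the argument by conjunct of $\smVAcc$ rather than by kind of operation, which changes nothing of substance.
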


\begin{proof}
Let $w = (\ell, \vec{s})$, $k$ the key of $\mathit{op}$ and $\mathit{vd} = \smVdc(\mathit{op}, w)$. Validity contains $\smPathOK$ for the leaf state on the root $r$, so $|\vec{s}| = d$ and $k < 2^d$.
\begin{itemize}[leftmargin=*]
\item $\mathit{op} = \smMem(k, v)$. Validity gives $r' = r$, $\ell = \smOcc(v)$ and $\smPathRoot(\hleaf(\ell), k, \vec{s}) = r$, and $\mathit{vd} = \hleaf(\smOcc(v)) = \hleaf(\ell)$. Lemma~\ref{lem:smt-honest-enc} gives $\smCons_{\smKM}$ on the encoded input, and Lemma~\ref{lem:circuit-cons} gives acceptance by $C_{\smKM}$. The public-side checks hold ($r' = r$, $k < 2^d$, $\mathit{vd} = \hleaf(\smOcc(v))$) and the echo is trivial.
\item $\mathit{op} = \smNon(k)$. Validity gives $r' = r$, $\ell \in \{\smEmpty, \smTomb\}$ and $\smPathRoot(\hleaf(\ell), k, \vec{s}) = r$, and $\mathit{vd} = \hleaf(\ell) \in \{\hleaf(\smEmpty), \hleaf(\smTomb)\}$. As before, the circuit accepts, the public-side checks hold, and the echo is $\ell \in \{\smEmpty, \smTomb\}$.
\item $\mathit{op} = \smUpd(k, \ell_{\mathrm o}, \ell_{\mathrm n})$. Validity gives $\ell = \ell_{\mathrm o}$, $\smPathRoot(\hleaf(\ell_{\mathrm o}), k, \vec{s}) = r$ and $\smPathRoot(\hleaf(\ell_{\mathrm n}), k, \vec{s}) = r'$, and $\mathit{vd} = \hleaf(\ell_{\mathrm n})$. Lemmas~\ref{lem:smt-honest-enc} and~\ref{lem:circuit-cons} give acceptance by $C_{\smKU}$, the public-side checks hold ($k < 2^d$, $\mathit{vd} = \hleaf(\ell_{\mathrm n})$, no root equality for updates), and the echo is $\ell = \ell_{\mathrm o}$. \qedhere
\end{itemize}
\end{proof}

\begin{corollary}[Acceptance equals validity; mechanized]
\label{cor:a-iff}
Assume (H) and (S). Then $\bigl(\exists\, \mathit{vd}, w.\ \smVAcc(\mathit{op}, r, r', \mathit{vd}, w)\bigr) \iff \bigl(\exists\, w.\ \smtvalid(\mathit{op}, r, r', w)\bigr)$.
\end{corollary}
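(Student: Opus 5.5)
The corollary follows by combining the two preceding theorems, one for each direction of the equivalence.

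For the direction from left to right, suppose $\smVAcc(\mathit{op}, r, r', \mathit{vd}, w)$ holds for some $\mathit{vd}$ and $w$. Theorem~\ref{thm:a-sound} gives $\smtvalid(\mathit{op}, r, r', \smCanon(\mathit{op}, w))$. The required witness is then $\smCanon(\mathit{op}, w)$: it equals $(\smOcc(v), \vec{s})$ for membership and $w$ itself otherwise, so in every case it is a witness in the sense of Definition~\ref{def:smt}.

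For the direction from right to left, suppose $\smtvalid(\mathit{op}, r, r', w)$ holds for some $w$. Theorem~\ref{thm:a-complete} gives $\smVAcc(\mathit{op}, r, r', \smVdc(\mathit{op}, w), w)$. We take $\mathit{vd} = \smVdc(\mathit{op}, w)$ together with the same $w$.

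Neither direction needs a further case analysis, since both theorems already treat the three operation kinds. The only point to check is that the existential on the left ranges over value digests $\mathit{vd} \in \mathcal{D}$ and witnesses $w$ in the same sense as in the two theorems. This holds because $\smVdc$ takes values in $\mathcal{D}$ and $\smCanon$ returns a pair of a leaf state and the original sibling list. The mechanized proof should accordingly be a two-line combination of \isaf{verifier\_accept} soundness and completeness, and I expect no real obstacle.
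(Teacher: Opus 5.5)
Your proof is correct and is exactly the paper's argument: Theorem~\ref{thm:a-sound} makes $\smCanon(\mathit{op}, w)$ the valid witness for the forward direction, and Theorem~\ref{thm:a-complete} gives acceptance with $\mathit{vd} = \smVdc(\mathit{op}, w)$ for the converse. One naming slip: the mechanized corollary is \isaf{theorem\_A\_compilation\_soundness}, obtained from \isaf{theorem\_A\_soundness} and \isaf{theorem\_A\_completeness}, while \isaf{verifier\_accept} is the definition of the verifier model and not a lemma.
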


\begin{proof}
From left to right, Theorem~\ref{thm:a-sound} makes $\smCanon(\mathit{op}, w)$ valid. From right to left, Theorem~\ref{thm:a-complete} gives acceptance with $\mathit{vd} = \smVdc(\mathit{op}, w)$.
\end{proof}

\paragraph{Inputs produced by the encoder.} Theorems~\ref{thm:a-sound} and~\ref{thm:a-complete} concern input vectors that the encoder of Definition~\ref{def:encoding} produces from a witness and the public values. On such inputs the leaf-hash, compression-chain and booleanity constraints hold by construction (Lemma~\ref{lem:smt-honest-enc}), so soundness rests on the constraints that compare the computed chains with the public digests ((C3), (C4) and (C4$'$)), on the tag constraint (C6) and on the public-side checks. The verifier of the implementation also computes the input vector it uses from the witness and the public inputs before it runs the protocol. A verifier that receives the input vector in place of the witness needs soundness for arbitrary input vectors, which the theorems of this section do not give; \cref{sec:smt-vector} proves it on paper. Lemma~\ref{lem:circuit-cons} holds for every input vector, but the step from the constraints to validity uses facts that the encoder supplies and no residual checks: the key-bit entries carry the bits of the public key, the root entries carry the public roots, the value-digest entries carry the public value digest, and, for an update, the first leaf pre-image is $\smPre(\ell_{\mathrm o})$ for the public old leaf. \Cref{sec:smt-vector} therefore adds checks on the public segments of the input. It takes a digest to be the vector that represents it, assumes that the node sub-circuit computes the compression function on every pair of vectors, and uses the leaf part of (H) only at the old leaf of an update.

\section{Composition and Per-Input Soundness}
\label{sec:composition}

This section joins the deterministic equivalence of Section~\ref{sec:compiler} with the probabilistic bound of Theorem~\ref{thm:b}. From here on $\F$ is a finite field, and the circuit and the challenges live in the same field: the compiler model of Section~\ref{sec:compiler} and the layer-sumcheck setting of Section~\ref{sec:gkr} are fixed over one $\F$; a representative family and a degree-bound function $\mathrm{db}$ are hypotheses of the theorems below, not fixed parameters. Theorem~\ref{thm:c} is the mechanized composition for claims that no witness makes valid, and Corollary~\ref{cor:perinput} is its per-input form.

Fix an operation $\mathit{op}$ of kind $\tau = \smKind(\mathit{op})$, digests $r, r', \mathit{vd}$ and a witness $w = (\ell, \vec{s})$, and let $x = \smEnc(\mathit{op}, r, r', \mathit{vd}, w)$. Let $L_0 = \smRes(\Delta_\tau), L_1, \ldots, L_{n-1}$ be the layers of $C_\tau$ (Definition~\ref{def:residuals}), and let $V_n = x$ and $V_i = \mathrm{eval}(L_i, V_{i+1})$ for $1 \leq i < n$ be the true value vectors below the output layer. The claimed output is the all-zero table on $\cube{\smRout}$, where $\smRout = |\Delta_\tau|$ is the number of width bits of $L_0$.

\begin{theorem}[Composition soundness; mechanized]
\label{thm:c}
Assume (H) and (S) of Definition~\ref{def:residuals}, and
\begin{enumerate}[label=(\roman*),leftmargin=*]
\item the chain hypothesis of Theorem~\ref{thm:b} for the layers $L_0, \ldots, L_{n-1}$, widths $s_0, \ldots, s_{n-1}$ and below-vectors $V_1, \ldots, V_n$;
\item the hypothesis of Theorem~\ref{thm:b} on the representative family and the degree-bound function $\mathrm{db}$;
\item $\smPub(\mathit{op}, r, r', \mathit{vd})$, $|\vec{s}| = d$ and $\smEcho(\mathit{op}, w)$;
\item no witness $w'$ satisfies $\smtvalid(\mathit{op}, r, r', w')$.
\end{enumerate}
Then for every prover strategy, the chain event of Definition~\ref{def:chainbad} for these layers and below-vectors and the all-zero claimed output has probability at most
\[
\frac{\smRout + \sum_{i=0}^{n-1} \bigl( 2 s_i \cdot \mathrm{db}(s_i) + 1 \bigr)}{|\F|}
\]
over challenges drawn uniformly and independently from $\F$, $\smRout + \sum_{i=0}^{n-1}(2 s_i + 1)$ of them.
\end{theorem}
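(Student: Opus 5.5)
The plan is to reduce the statement to Theorem~\ref{thm:b}. The only premise of that theorem not handed to us directly by (i) and (ii) is that the claimed output is false. Here the claimed output is the all-zero table on $\cube{\smRout}$, and the true output is $V_0 = \mathrm{eval}(L_0, V_1)$. So the key step is to show that $V_0(u) \neq 0$ for some $u < 2^{\smRout}$, that is, $\neg\smAcc(C_\tau, x)$. Once that is established, Theorem~\ref{thm:b} applied to the chain $((s_i, L_i, V_{i+1}))_{i\in[n]}$, with $\sout = w(L_0) = \smRout$, gives exactly the displayed bound over $\smRout + \sum_i(2s_i+1)$ challenges. For a circuit chain, $\mathrm{eval}(L_0, b_0) = V_0$ is the true output, as noted after that theorem.

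I will argue for non-acceptance by contradiction. Suppose every entry of $V_0$ is zero, so $\smAcc(C_\tau, x)$ holds. By (iii) we have $\smPub(\mathit{op}, r, r', \mathit{vd})$, $|\vec{s}| = d$ and $\smEcho(\mathit{op}, w)$. Together with $\smAcc(C_{\smKind(\mathit{op})}, \smEnc(\mathit{op}, r, r', \mathit{vd}, w))$, these are exactly the conjuncts of $\smVAcc(\mathit{op}, r, r', \mathit{vd}, w)$ in Definition~\ref{def:verifier-model}. Theorem~\ref{thm:a-sound}, under (H) and (S), then gives $\smtvalid(\mathit{op}, r, r', \smCanon(\mathit{op}, w))$. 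This exhibits a valid witness $w' = \smCanon(\mathit{op}, w)$, contradicting (iv). Hence some output wire $u < 2^{\smRout}$ is nonzero.

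The main obstacle is bookkeeping rather than mathematics. The difficulty is identifying the circuit values of the chain in (i) with the evaluation that $\smAcc$ refers to. By Definition~\ref{def:circuit}, $\smAcc(C_\tau, x)$ is defined through the circuit values $V_0, \dots, V_n$ with $V_n = x$. I must check that the below-vectors of (i) are the same $V_1, \dots, V_n$ and that the formalization's list order (output first, recurrence \isaf{circuit\_values\_step}) matches. I must also confirm that the top table of the chain, $b_0 = V_1$, satisfies $\mathrm{eval}(L_0, b_0) = V_0$. Finally, I must translate ``not all entries zero'' into a witness index $u$ below $2^{w(L_0)}$, using that $V_0$ has length exactly $2^{\smRout}$.

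With the false output established, the remaining step is a direct instantiation of Theorem~\ref{thm:b} for an arbitrary adversary, which stands in for every prover strategy. No probabilistic argument beyond that theorem is needed, since the challenge count $N = \smRout + \sum_i(2s_i+1)$ matches the one in the theorem.
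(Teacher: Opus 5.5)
Your proposal is correct and follows the paper's proof step for step. You derive $\neg\smAcc(C_\tau,x)$ by contradiction from (iii), \cref{thm:a-sound} and (iv), read off a nonzero entry of $V_0=\mathrm{eval}(L_0,V_1)$ on $\cube{\smRout}$, and instantiate \cref{thm:b} with output width $w(L_0)=\smRout$; the bookkeeping you flag is what the paper's step (2) settles in one line, since $\smAcc$ is defined through the same value vectors $V_n=x$, $V_i=\mathrm{eval}(L_i,V_{i+1})$.
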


\begin{proof}
(1) $C_\tau$ does not accept $x$. Otherwise, with (iii), $\smVAcc(\mathit{op}, r, r', \mathit{vd}, w)$ would hold, and Theorem~\ref{thm:a-sound} would make $\smCanon(\mathit{op}, w)$ a valid witness, contradicting (iv).

(2) The output of $C_\tau$ on $x$ is $V_0 = \mathrm{eval}(L_0, V_1)$, a vector of length $2^{\smRout}$. By (1) some entry of $V_0$ is nonzero, so the all-zero claimed output differs from the true output at some point of $\cube{\smRout}$.

(3) This is the false-output hypothesis of Theorem~\ref{thm:b} for the output layer $L_0$, whose width is $\smRout$ bits. Together with (i) and (ii), Theorem~\ref{thm:b} gives the bound; its output-width term is $\smRout$.
\end{proof}

Hypotheses (i) and (ii) are structural. (i) states fan-in ranges and width agreement of the compiled layers and depends on the concrete stack; (ii) holds with $\mathrm{db}(s) = 2s$ for the representative used in the formal instances (Section~\ref{sec:gkr}). All hypotheses are discharged together for a concrete update over KoalaBear at depth $1$, with the linear hash pair of Section~\ref{sec:compiler}, whose two path equations are incompatible (\isaf{instance\_theorem\_C\_activation} in \texttt{Composition\_Instance.thy}); so they are jointly satisfiable. The specialization $\mathrm{db} = 4$ is mechanized as \isaf{theorem\_C\_composition\_deg4} and has the status of Remark~\ref{cor:b-deg4}.

\paragraph{The per-input form.} Hypothesis (iv) quantifies over all witnesses, and for a compressing node hash it is strong. A sibling list has $dW$ free field elements and a root only $W$, so heuristically, if $\hnode$ behaves like a random function, most roots are reached from a given leaf by some sibling list, and a false claim then has a valid witness. The proof of Theorem~\ref{thm:c} uses (iv) only in step (1), to rule out validity of the single witness $\smCanon(\mathit{op}, w)$. The same proof therefore gives a statement about the witness at hand.

\begin{corollary}[Per-input soundness; paper proof]
\label{cor:perinput}
Assume (H), (S) and hypotheses (i) to (iii) of Theorem~\ref{thm:c}. If $\smtvalid(\mathit{op}, r, r', \smCanon(\mathit{op}, w))$ does not hold, then for every prover strategy the chain event of Theorem~\ref{thm:c} has probability at most the bound of Theorem~\ref{thm:c}.
\end{corollary}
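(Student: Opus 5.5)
The plan is to rerun the proof of Theorem~\ref{thm:c} unchanged, with hypothesis (iv) replaced by the per-witness hypothesis at the one place where (iv) is used. That place is step (1). There, (iv) rules out validity of the single witness $\smCanon(\mathit{op}, w)$, and the corollary assumes exactly this. The hash sub-circuits enter only through Theorem~\ref{thm:a-sound} and Lemma~\ref{lem:circuit-cons}, so no cryptographic property is needed.

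First, $C_\tau$ does not accept $x = \smEnc(\mathit{op}, r, r', \mathit{vd}, w)$. Suppose it did. Hypothesis (iii) supplies $\smPub(\mathit{op}, r, r', \mathit{vd})$, $|\vec{s}| = d$ and $\smEcho(\mathit{op}, w)$. Together with $\smAcc(C_\tau, x)$ this is $\smVAcc(\mathit{op}, r, r', \mathit{vd}, w)$ by Definition~\ref{def:verifier-model}(c). Theorem~\ref{thm:a-sound}, under (H) and (S), then gives $\smtvalid(\mathit{op}, r, r', \smCanon(\mathit{op}, w))$, which contradicts the assumption. The condition $|\vec{s}| = d$ from (iii) ensures that $x$ is defined and has length $2^{n_\tau}$, as Definition~\ref{def:encoding} requires.

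Second, by (S) the below-vector of $L_0$ is $V_1$, so the true output is $V_0 = \mathrm{eval}(L_0, V_1)$. This is a table on $\cube{\smRout}$ with at least one nonzero entry, because acceptance means that all entries vanish. The all-zero claimed output is therefore a false output claim in the sense of Theorem~\ref{thm:b}, and the output width is $\sout = w(L_0) = \smRout$.

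Third, Theorem~\ref{thm:b} applies with the chain hypothesis (i), the representative hypothesis (ii) and the false claim from the second step. It gives the stated bound for every adversary, with $\smRout + \sum_i (2s_i+1)$ challenges. Every prover strategy induces such an adversary, as noted after Definition~\ref{def:chainbad}, so the bound holds for every strategy.

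No step is a real obstacle. The only point to check is that the proof of Theorem~\ref{thm:c} uses (iv) nowhere except in step (1), in particular not in verifying (i) and (ii). Those are structural hypotheses that do not depend on validity, so the substitution goes through.
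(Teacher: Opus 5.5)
Your proposal is correct and matches the paper's proof: hypothesis (iv) enters only in step (1), where it shows that $C_\tau$ rejects $x$. You replace it with the per-witness assumption via (iii) and Theorem~\ref{thm:a-sound}, after which steps (2) and (3) of Theorem~\ref{thm:c} apply unchanged.
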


\begin{proof}
If $C_\tau$ accepted $x$, then (iii) would give $\smVAcc(\mathit{op}, r, r', \mathit{vd}, w)$, and Theorem~\ref{thm:a-sound} would make $\smCanon(\mathit{op}, w)$ valid, against the assumption. So $C_\tau$ does not accept $x$, and steps (2) and (3) of the proof of Theorem~\ref{thm:c} apply unchanged.
\end{proof}

\begin{corollary}[Per-input soundness of the verifier; paper proof]
\label{cor:perinput-verifier}
Assume (H), (S) and hypotheses (i) and (iii) of Theorem~\ref{thm:c}, with the claim, the value digest and the witness fixed before the challenges. Let the verifier of Definition~\ref{def:verifier} run on the layers of $C_\tau$ with the input table $x$ and the all-zero claimed output. If $\smtvalid(\mathit{op}, r, r', \smCanon(\mathit{op}, w))$ does not hold, then for every prover strategy whose coins are independent of the challenges, the verifier accepts with probability at most
\[
\frac{\smRout + \sum_{i=0}^{n-1}(8 s_i + 1)}{|\F|}.
\]
\end{corollary}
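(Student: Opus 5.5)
The plan is to combine the deterministic step (1) of the proof of Corollary~\ref{cor:perinput} with Proposition~\ref{prop:b-faithful} in place of Theorem~\ref{thm:b}. Hypothesis (ii) of Theorem~\ref{thm:c} is dropped here because the faithful bound needs no representative family; it uses the integrand polynomial $\widehat{P}$ directly.

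\emph{Step 1: the claimed output is false.} Suppose $C_\tau$ accepted $x$. Then (iii) would give $\smVAcc(\mathit{op}, r, r', \mathit{vd}, w)$, since $x = \smEnc(\mathit{op}, r, r', \mathit{vd}, w)$ is defined because $|\vec{s}| = d$. Theorem~\ref{thm:a-sound} would then make $\smCanon(\mathit{op}, w)$ valid, against the hypothesis. Hence some entry of $V_0 = \mathrm{eval}(L_0, V_1)$ is nonzero. Since $L_0 = \smRes(\Delta_\tau)$ has $\smRout$ width bits, the all-zero table on $\cube{\smRout}$ differs from $\mathrm{eval}(L_0, V_1)$ at some point $u < 2^{\smRout}$, so $\sout = \smRout$.

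\emph{Step 2: the layer chain.} By (i), the triples $(s_i, L_i, V_{i+1})$ satisfy (Ch1) and (Ch2). Because $x$ has length $2^{n_\tau}$ and $s_{n-1}$ is the number of input width bits, they form a layer chain in the sense of Definition~\ref{def:chain}, with $b_{n-1} = x$. This is exactly the table that the verifier of Definition~\ref{def:verifier} holds in the clear.

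\emph{Step 3: apply the faithful bound.} The claim, the value digest and the witness are fixed before the challenges, so $x$ and the chain are fixed before the challenges as well. This places us in the probability space of Section~\ref{sec:probspace}. Proposition~\ref{prop:b-faithful}, through Lemma~\ref{lem:accept-bad} for deterministic strategies and averaging for strategies with independent coins, bounds the acceptance probability by $\bigl(\smRout + \sum_i(8s_i+1)\bigr)/|\F|$.

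The main obstacle is ensuring that the chain premise (i) really yields a chain whose input table matches what the verifier evaluates, namely the length of $b_{n-1}$. I would check this against the convention that $\smEnc$ pads to $2^{n_\tau}$. Everything else is a direct substitution of Proposition~\ref{prop:b-faithful} for Theorem~\ref{thm:b} in the proof of Corollary~\ref{cor:perinput}.
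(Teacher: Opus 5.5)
Your proposal is correct and follows the paper's proof. The route is the same: non-acceptance of $x$ by $C_\tau$ via Theorem~\ref{thm:a-sound}, then the layer chain from hypothesis (i) with last table $x$ and $w(L_0) = \smRout$, then Proposition~\ref{prop:b-faithful}. The length issue you flag as the main obstacle is already settled by (i): a layer chain in the sense of Definition~\ref{def:chain} has $b_{n-1}\colon[2^{s_{n-1}}]\to\F$, which with $b_{n-1} = x$ forces $s_{n-1} = n_\tau$.
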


\begin{proof}
As in the proof of Corollary~\ref{cor:perinput}, $C_\tau$ does not accept $x$, so the all-zero claimed output differs from the true output $V_0$ at some point of $\cube{\smRout}$. By hypothesis (i), the triples $(s_i, L_i, V_{i+1})$ form a layer chain whose last table is the input $x$ that the verifier holds, and $w(L_0) = \smRout$. Proposition~\ref{prop:b-faithful} bounds the probability that the verifier accepts this false output claim by the stated quantity.
\end{proof}

\paragraph{What an accepted run establishes.} Corollary~\ref{cor:perinput-verifier} is the statement of Section~\ref{sec:goal}. Fix the claim, the value digest and the witness before the challenges, with the public-side checks satisfied, and let $\varepsilon$ be its bound. Except with probability at most $\varepsilon$, an accepted run has a valid canonical witness: the root $r$ authenticates $\smOcc(v)$ at the key for a membership claim, an empty or retired leaf for a non-membership claim, and, for an update, the old leaf under $r$ and the new leaf under $r'$ with one sibling list. A second leaf state with a valid path at the same key and root breaks position binding (Definition~\ref{def:pb}). In particular, valid canonical witnesses for $\smMem(k, v)$ and $\smMem(k, v')$ with $v \neq v'$ under one root, or for $\smMem(k, v)$ and $\smNon(k)$ under one root, break it; Corollary~\ref{cor:one-leaf} turns this into a bound on two accepted runs. Corollary~\ref{cor:perinput-ext} states the same bound for a circuit over KoalaBear with challenges drawn from its quartic extension.

\subsection{Per-Input Soundness across the Two Fields}
\label{sec:perinput-ext}

The composition above uses one field for circuit values and challenges, while the implementation evaluates the compiled circuit over $\F_p$ and draws challenges from $\extK$. The following corollary joins \cref{thm:a-sound} over $\F_p$ to the bounds over $\extK$ through \cref{prop:lift-chain}.

\begin{corollary}[Per-input soundness over the quartic extension; paper proof]
\label{cor:perinput-ext}
Instantiate the compiler model of \cref{sec:compiler} with $\F = \F_p$ and assume (H) and (S). Let $\mathit{op}$, $r$, $r'$, $\mathit{vd}$ and a witness $w$ with $d$ siblings satisfy the public-side checks and the echo (\cref{def:verifier-model}), all fixed before the challenges. Let $x = \smEnc(\mathit{op},r,r',\mathit{vd},w)$, let $L_0,\dots,L_{n-1}$ be the layers of $C_\tau$, $\tau = \smKind(\mathit{op})$, with $L_0 = \smRes(\Delta_\tau)$, and let $b_{n-1} = x$ and $b_i = \mathrm{eval}(L_{i+1},b_{i+1})$ for $i<n-1$. Assume that $(s,L,b)$ is a layer chain over $\F_p$ (\cref{def:chain}) for some widths $s_0,\dots,s_{n-1}$; then $s_{n-1} = n_\tau$ and $L_{n-1}$ reads inside $x$. If $\smtvalid(\mathit{op},r,r',\smCanon(\mathit{op},w))$ does not hold, then:
\begin{enumerate}[label=(\alph*),leftmargin=*]
\item the verifier of \cref{def:verifier}, run over $\extK$ on the lifted layers $\operatorname{lift}(L_i)$ with the input table $\iota(x)$ and the all-zero claimed output, accepts with probability at most
\[
\frac{\smRout+\sum_{i\in[n]}(8s_i+1)}{p^4}
\]
over challenges drawn independently and uniformly from $\extK$, for every prover strategy whose coins are independent of the challenges;
\item for every degree-bound function $\dbnd$ and every representative family over $\extK$ admissible for it, the chain event of \cref{def:chainbad} for the lifted chain $(s,\operatorname{lift}(L),\iota(b))$ and the all-zero claimed output has probability at most $\bigl(\smRout+\sum_{i\in[n]}(2s_i\,\dbnd(s_i)+1)\bigr)/p^4$.
\end{enumerate}
\end{corollary}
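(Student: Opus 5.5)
The plan is to first obtain a false output claim over $\F_p$ exactly as in the proof of \cref{cor:perinput}, then carry it into $\extK$ with \cref{prop:lift-chain}, and finally apply the bounds over $\extK$: \cref{prop:b-faithful} for (a) and \cref{thm:b-ext4} for (b).

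\emph{Step 1 (base field).} Suppose $\smAcc(C_\tau,x)$ held. The public-side checks, $|\vec{s}| = d$ and the echo give $\smVAcc(\mathit{op},r,r',\mathit{vd},w)$. \Cref{thm:a-sound} over $\F_p$, using (H) and (S), then makes $\smCanon(\mathit{op},w)$ valid, which is excluded. Hence $C_\tau$ rejects $x$. By (S), $b_0 = \mathrm{eval}(L_1,b_1) = V_1$, and the output $\mathrm{eval}(L_0,b_0)$ of $C_\tau$ on $x$ has a nonzero entry at some $z < 2^{\smRout}$. The all-zero function $f = 0$ therefore satisfies $f(z)\neq\mathrm{eval}(L_0,b_0)_z$, with $w(L_0) = \smRout$.

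For the side remark in the statement, (Ch2) ties $s_i$ to $w(L_{i+1})$ only for $i<n-1$. The value $s_{n-1}$ is pinned by \cref{def:chain}, which requires $|b_{n-1}| = 2^{s_{n-1}}$. Since $|x| = 2^{n_\tau}$, this gives $s_{n-1} = n_\tau$. (Ch1) then bounds the input wires of $L_{n-1}$ by $|x|$, so $L_{n-1}$ reads inside $x$.

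\emph{Step 2 (lifting).} The triple $(s,L,b)$ is a layer chain whose last layer reads inside $b_{n-1} = x$. \Cref{prop:lift-chain}(b) therefore makes $(s,\operatorname{lift}(L),\iota(b))$ a layer chain over $\extK$ with the same widths, whose input table is $\iota(x)$. By \cref{prop:lift-chain}(c), the claim $\iota\circ f$ is false for the lifted chain at $z$. Since $\iota(0) = 0$ by \cref{prop:ext-field}(c), this claim is again the all-zero table, and $w(\operatorname{lift}(L_0)) = \smRout$, so $\sout = \smRout$.

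\emph{Step 3 (bounds over $\extK$).} $\extK$ is a finite field with $p^4$ elements by \cref{prop:ext-field}(b). For (a), apply \cref{prop:b-faithful} with $\F = \extK$ to the lifted chain and the all-zero claim. Its proof uses only that $\F$ is a finite field, through \cref{thm:sumcheck-std}, the Schwartz--Zippel lemma and \cref{lem:split}. The verifier of \cref{def:verifier} on $\operatorname{lift}(L_i)$, $s_i$, $\iota(x)$ and the zero output is then exactly the verifier whose acceptance that proposition bounds. This gives the denominator $p^4$, and averaging over independent coins handles randomized strategies as in \cref{sec:probspace}. For (b), apply \cref{thm:b-ext4} with the given admissible family.

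The main obstacle I expect is bookkeeping rather than mathematics. I have to check that the lifted triple satisfies the layer-chain conditions that the two bounds over $\extK$ actually consume: the gate fan-in below $2^{s_i}$, the width agreement, and, for the mechanized chain event, \isaf{gkr\_chain\_wf}, whose equivalence with (Ch1) and (Ch2) is recorded in \cref{app:ext-proofs}. I also have to check that "challenges uniform on $\extK$" is the distribution of both theorems, which \cref{lem:pushforward} and the remark after it supply if challenges are sampled as base-field quadruples.
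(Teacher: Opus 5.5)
Your proposal is correct and follows the paper's proof. You obtain a false all-zero claim over $\F_p$ via \cref{thm:a-sound}, carry it over with \cref{prop:lift-chain} for $f=0$ and $\iota(0)=0$, and then apply \cref{prop:b-faithful} over $\extK$ for (a) and \cref{thm:b-ext4} for (b). Two small points: $b_0 = V_1$ follows from the definition of the $b_i$, not from (S); and \cref{lem:pushforward} is not needed, since the statement already takes the challenges uniform on $\extK$.
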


\noindent It composes the mechanized \cref{thm:a-sound,thm:b-ext4} with \cref{prop:lift-chain,prop:b-faithful}; no single formal statement covers both fields.

\begin{proof}
Over $\F_p$: the output of $C_\tau$ on $x$ is $\mathrm{eval}(L_0,b_0)$. If all its entries were zero, the circuit would accept $x$, the verifier model would accept because its other checks pass, and \cref{thm:a-sound} would make $\smCanon(\mathit{op},w)$ valid. Hence $\mathrm{eval}(L_0,b_0)_z\neq 0$ for some $z<2^{\smRout}$, and the all-zero claimed output is false for $(s,L,b)$.

Lifting: \cref{prop:lift-chain} with $f = 0$ shows that $(s,\operatorname{lift}(L),\iota(b))$ is a layer chain over $\extK$, that $w(\operatorname{lift}(L_0)) = \smRout$, and that $0 = \iota(0)\neq\mathrm{eval}(\operatorname{lift}(L_0),\iota(b_0))_z$. The injectivity of $\iota$ keeps the nonzero output wire nonzero, so the all-zero claimed output is false for the lifted chain at the same wire. Its last table is $\iota(x)$, the input that the verifier holds.

Over $\extK$: \cref{prop:b-faithful}, which holds over every finite field, applied to the lifted chain gives (a), and \cref{thm:b-ext4} gives (b); in both the output term is $\smRout$.
\end{proof}

\noindent By \cref{prop:lift-chain}(a), the layer values of the lifted chain are the embedded base-field values of the compiled circuit on $x$, and the extension of $\iota(x)$ at a point of $\extK^{s_{n-1}}$ is the value that a verifier holding $x$ computes from its entries.

\subsection{Input Vectors Supplied by the Prover}
\label{sec:smt-vector}

Theorems~\ref{thm:a-sound} and~\ref{thm:a-complete} concern input vectors produced by the encoder, which computes the accumulator chains itself. This subsection treats a verifier that computes no node hash: it receives the input vector from the prover, checks the entries that carry public values, and leaves the hash chains to the circuit. It still compares the claimed value digest with leaf hashes of public leaf states (Definition~\ref{def:verifier-model}). The results are proved on paper from Lemma~\ref{lem:circuit-cons}, which is mechanized for every input vector of the right length.

Throughout this subsection a digest is the vector that represents it: $\mathcal{D} = \F^{W}$ and $\smRep$ is the identity. The node part of (H) then reads $\hwn(a, b) = \hnode(a, b)$ for all $a, b \in \F^{W}$, so the node sub-circuit is assumed to compute $\hnode$ on every pair of vectors; the leaf part of (H) is used only at the old leaf of an update. This convention matches the implementation with $W = 8$: a digest there is an array of exactly $8$ base-field elements, the encoder writes its entries in order, and the compression function is defined on every pair of such arrays. That the node sub-circuit computes the compression function on every input is a correctness property of the sub-circuit, assumed here as in (H) and outside the formal development.

\begin{definition}[Input-vector verifier]
\label{def:vector-verifier}
Let $\mathit{op}$ be an operation of kind $\tau$ with key $k$, let $r, r', \mathit{vd} \in \F^{W}$, and let $x \in \F^{2^{n_\tau}}$ be arbitrary, with the segments $P, A_j, S_j, \beta_j, R, V$ (and $A'_j, R'$ for updates) of Definition~\ref{def:residuals}.
\begin{enumerate}[label=(\alph*),leftmargin=*]
\item The \emph{public segment checks} hold iff $\beta_j = \smBit_j(k)$ for all $j < d$, $R = r$ and $V = \mathit{vd}$, and, if $\mathit{op} = \smUpd(k, \ell_{\mathrm o}, \ell_{\mathrm n})$, also $R' = r'$ and $P = \smPre(\ell_{\mathrm o})$.
\item The input-vector verifier accepts, $\smVecAcc(\mathit{op}, r, r', \mathit{vd}, x)$, iff $\smPub(\mathit{op}, r, r', \mathit{vd})$ and the public segment checks hold and $\smAcc(C_\tau, x)$.
\item The \emph{witness read from $x$} is $\smWit(\mathit{op}, \mathit{vd}, x) = (\ell^{\ast}, (S_0, \ldots, S_{d-1}))$, where $\ell^{\ast} = \smOcc(v)$ if $\mathit{op} = \smMem(k, v)$; $\ell^{\ast} = \smEmpty$ if $\mathit{op} = \smNon(k)$ and $\mathit{vd} = \hleaf(\smEmpty)$, and $\ell^{\ast} = \smTomb$ for other non-membership cases; and $\ell^{\ast} = \ell_{\mathrm o}$ if $\mathit{op} = \smUpd(k, \ell_{\mathrm o}, \ell_{\mathrm n})$.
\end{enumerate}
\end{definition}

\noindent Entries of $x$ outside these segments, such as the zero padding, are not checked; by Lemma~\ref{lem:circuit-cons}, acceptance by $C_\tau$ does not depend on them. For a non-membership claim that passes $\smPub$, the leaf $\ell^{\ast}$ is the empty or retired state whose leaf hash is $\mathit{vd}$.

\begin{theorem}[Soundness on input vectors; paper proof]
\label{thm:a-vector}
Assume (S), $\mathcal{D} = \F^{W}$ with $\smRep$ the identity, and two instances of (H): $\hwn(a, b) = \hnode(a, b)$ for all $a, b \in \F^{W}$, and, if $\mathit{op} = \smUpd(k, \ell_{\mathrm o}, \ell_{\mathrm n})$, $\hwl(\smPre(\ell_{\mathrm o})) = \hleaf(\ell_{\mathrm o})$. If $\smVecAcc(\mathit{op}, r, r', \mathit{vd}, x)$, then $\smtvalid(\mathit{op}, r, r', \smWit(\mathit{op}, \mathit{vd}, x))$.
\end{theorem}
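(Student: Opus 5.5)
We unfold $\smVecAcc(\mathit{op}, r, r', \mathit{vd}, x)$ and reduce to the constraint conjunction. Acceptance gives $\smPub(\mathit{op}, r, r', \mathit{vd})$, the public segment checks and $\smAcc(C_\tau, x)$. Since $x$ has length $2^{n_\tau}$ and (S) holds, Lemma~\ref{lem:circuit-cons} gives $\smCons_\tau(x)$ for this arbitrary vector. As remarked after that lemma, with $\smRep$ the identity we may interpret the locale with $\hwl,\hwn$ as the hashes, so only (S) is used at this step. From $\smPub$ we keep $k<2^d$, $r'=r$ for read-only kinds, and the value-digest check $\smVd(\mathit{op},\mathit{vd})$.

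The core step is an induction along the chain. Set $\vec{s}=(S_0,\dots,S_{d-1})$, which has length $d$ by construction. We show by induction on $j\le d$ that $A_j=a_j$, where $(a_j)$ is the accumulator chain of $(A_0,k,\vec{s})$ from Definition~\ref{def:smt}(f). The case $j=0$ holds by definition. For the step, the public segment check gives $\beta_j=\smBit_j(k)\in\{0,1\}$, so the multiplexer selects $(A_j,S_j)$ or $(S_j,A_j)$ exactly according to the key bit. Constraint (C2) with $\hwn=\hnode$ on all inputs then gives $A_{j+1}=\hnode(A_j,S_j)$ or $\hnode(S_j,A_j)$, which is $a_{j+1}$. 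Booleanity (C5) is not needed, because the public check already fixes $\beta_j$. Hence $\smPathRoot(A_0,k,\vec{s})=A_d$, and by (C3) together with $R=r$ this root equals $r$. For updates, the same argument applied to $A'_j$ with the same $S_j,\beta_j$ gives $\smPathRoot(A'_0,k,\vec{s})=A'_d=R'=r'$.

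It remains to identify $A_0$ (and $A'_0$) with the leaf hash of the leaf in $\smWit(\mathit{op},\mathit{vd},x)$. We argue by cases on the kind.
\begin{itemize}
\item \emph{Membership.} (C4) and $V=\mathit{vd}$ give $A_0=\mathit{vd}$, and $\smVd$ gives $\mathit{vd}=\hleaf(\smOcc(v))$. Hence $\smPathOK(\smOcc(v),k,\vec{s},r)$, and together with $r'=r$ this is validity.
\item \emph{Non-membership.} Again $A_0=\mathit{vd}$, and $\smVd$ puts $\mathit{vd}$ in $\{\hleaf(\smEmpty),\hleaf(\smTomb)\}$. The choice of $\ell^{\ast}$ in Definition~\ref{def:vector-verifier}(c) makes $\hleaf(\ell^{\ast})=\mathit{vd}$: if $\mathit{vd}=\hleaf(\smEmpty)$ we take $\smEmpty$, and otherwise $\mathit{vd}$ must be $\hleaf(\smTomb)$. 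The leaf $\ell^{\ast}$ lies in $\{\smEmpty,\smTomb\}$, so (C6) is not needed here.
\item \emph{Update.} We use (C1) on $P$: $A_0=\hwl(P)=\hwl(\smPre(\ell_{\mathrm o}))=\hleaf(\ell_{\mathrm o})$, by the public check $P=\smPre(\ell_{\mathrm o})$ and the assumed instance of (H). (C4$'$) with $V=\mathit{vd}$ gives $A'_0=\mathit{vd}=\hleaf(\ell_{\mathrm n})$ by $\smVd$. Together with the two path roots and $\ell^{\ast}=\ell_{\mathrm o}$, this gives validity.
\end{itemize}

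The main obstacle is the chain induction. Its bookkeeping is routine, but it is the step that relies on the unconditional node hypothesis and on the key-bit check replacing what the encoder supplied before. The point to verify is that the segment indexing of Definition~\ref{def:residuals} is independent of how $x$ was produced, so that Lemma~\ref{lem:circuit-cons} really talks about the same $A_j,S_j$.
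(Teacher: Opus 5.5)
Your proposal is correct and follows the paper's proof essentially step for step. Both use Lemma~\ref{lem:circuit-cons} on the arbitrary vector, then a chain induction in which the key-bit check and the unconditional node instance of (H) take the place of (C5). Both then identify $A_0$ (and $A'_0$) case by case from (C4) or (C4$'$) and $\smPub$, and for updates from (C1) with the check $P = \smPre(\ell_{\mathrm o})$. The only cosmetic difference is that the paper states the chain claim once for an arbitrary sequence $Y_0,\dots,Y_d$ and applies it to both update chains, whereas you induct on the segments $A_j$ directly and repeat the argument for $A'_j$.
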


\begin{proof}
Write $\vec{s} = (S_0, \ldots, S_{d-1})$; each $S_j$ is a digest since $\mathcal{D} = \F^{W}$, and $|\vec{s}| = d$. The vector $x$ has length $2^{n_\tau}$ and $C_\tau$ accepts it, so Lemma~\ref{lem:circuit-cons} gives $\smCons_\tau(x)$; this step is mechanized (\isaf{A\_i\_ro}, \isaf{A\_i\_up}, read as explained after \cref{lem:circuit-cons}) and holds for every input vector. The public checks give $k < 2^d$ and $\beta_j = \smBit_j(k) \in \{0, 1\}$ for all $j < d$.

\emph{Chain claim.} If $Y_0, \ldots, Y_d \in \F^{W}$ satisfy $Y_{j+1} = \hwn\bigl(\smMuxL(\beta_j, Y_j, S_j), \smMuxR(\beta_j, Y_j, S_j)\bigr)$ for all $j < d$, then $Y_d = \smPathRoot(Y_0, k, \vec{s})$. Let $(a_0, \ldots, a_d)$ be the accumulator chain of $(Y_0, k, \vec{s})$ (Definition~\ref{def:smt}~(f)); we show $Y_j = a_j$ by induction on $j$. For $j = 0$ both are $Y_0$. Assume $Y_j = a_j$. If $\smBit_j(k) = 0$, the multiplexer selects $(Y_j, S_j)$ (\isaf{mux\_zero}), and the node part of (H) gives $Y_{j+1} = \hwn(Y_j, S_j) = \hnode(a_j, S_j) = a_{j+1}$. If $\smBit_j(k) = 1$, it selects $(S_j, Y_j)$ (\isaf{mux\_one}), and $Y_{j+1} = \hnode(S_j, a_j) = a_{j+1}$. Hence $Y_d = a_d = \smPathRoot(Y_0, k, \vec{s})$.

\emph{Read-only kinds.} (C4) and the check $V = \mathit{vd}$ give $A_0 = \mathit{vd}$. By (C2) the segments $A_0, \ldots, A_d$ satisfy the hypothesis of the chain claim, so $A_d = \smPathRoot(\mathit{vd}, k, \vec{s})$, and (C3) with the check $R = r$ gives $\smPathRoot(\mathit{vd}, k, \vec{s}) = r$.
\begin{itemize}[leftmargin=*]
\item $\mathit{op} = \smMem(k, v)$. $\smPub$ gives $r' = r$ and $\mathit{vd} = \hleaf(\smOcc(v))$. Hence $\smPathOK(\smOcc(v), k, \vec{s}, r)$, and with $r' = r$ this is $\smtvalid(\mathit{op}, r, r', (\smOcc(v), \vec{s}))$.
\item $\mathit{op} = \smNon(k)$. $\smPub$ gives $r' = r$ and $\mathit{vd} \in \{\hleaf(\smEmpty), \hleaf(\smTomb)\}$, so $\ell^{\ast} \in \{\smEmpty, \smTomb\}$ and $\hleaf(\ell^{\ast}) = \mathit{vd}$. Hence $\smPathOK(\ell^{\ast}, k, \vec{s}, r)$ and $\smtvalid(\mathit{op}, r, r', (\ell^{\ast}, \vec{s}))$.
\end{itemize}

\emph{Updates}, $\mathit{op} = \smUpd(k, \ell_{\mathrm o}, \ell_{\mathrm n})$. The check $P = \smPre(\ell_{\mathrm o})$, (C1) for the first chain and the leaf part of (H) give $A_0 = \hwl(\smPre(\ell_{\mathrm o})) = \hleaf(\ell_{\mathrm o})$. The chain claim with (C2), then (C3) with $R = r$, give $\smPathRoot(\hleaf(\ell_{\mathrm o}), k, \vec{s}) = A_d = r$. For the second chain, (C4$'$) with $V = \mathit{vd}$ and $\smPub$ give $A'_0 = \mathit{vd} = \hleaf(\ell_{\mathrm n})$, and the chain claim with (C2), then (C3) with $R' = r'$, give $\smPathRoot(\hleaf(\ell_{\mathrm n}), k, \vec{s}) = A'_d = r'$. With $k < 2^d$ and $|\vec{s}| = d$ these are $\smPathOK(\ell_{\mathrm o}, k, \vec{s}, r)$ and $\smPathOK(\ell_{\mathrm n}, k, \vec{s}, r')$; since $\ell^{\ast} = \ell_{\mathrm o}$, $\smtvalid(\mathit{op}, r, r', (\ell_{\mathrm o}, \vec{s}))$ holds.
\end{proof}

\begin{remark}[Which constraints carry Theorem~\ref{thm:a-vector}]
\label{rem:vector-constraints}
The proof uses (C2) at every level of each chain, (C3) for each root, (C4) for read-only kinds and (C4$'$) for updates, and (C1) only for the first chain of an update, at $P = \smPre(\ell_{\mathrm o})$. The key-bit check replaces (C5); reading the non-membership leaf from $\mathit{vd}$, which $\smPub$ confines to the leaf hashes of the empty and retired states, replaces (C6); and (C1) is not needed for chains that start at $V$. Theorem~\ref{thm:a-sound} uses only (C3), (C4) or (C4$'$), and (C6) with the public-side checks: on encoder inputs (C1), (C2) and (C5) hold by construction (Lemma~\ref{lem:smt-honest-enc}), so that theorem would hold even for a circuit without them. Here the chain constraint is indispensable: without (C2), for $d \geq 1$ and any membership claim that passes $\smPub$, the vector with public segments $\beta_j = \smBit_j(k)$, $R = r$, $V = \mathit{vd}$ and with $P = \smPre(\smOcc(v))$, $A_0 = \mathit{vd}$, $A_1 = \cdots = A_d = r$ and all $S_j = 0$ satisfies every remaining constraint, (C1) by the leaf part of (H) at $\smOcc(v)$, while the read witness $(\smOcc(v), (0, \ldots, 0))$ is valid only if $r = \smPathRoot(\hleaf(\smOcc(v)), k, (0, \ldots, 0))$. The check $P = \smPre(\ell_{\mathrm o})$ is needed for the same reason: without it, the first chain of an update could start from the pre-image of another leaf. The hash assumption also differs: Theorem~\ref{thm:a-sound} allows any injective $\smRep$, whereas Theorem~\ref{thm:a-vector} needs $\smRep$ to be the identity on $\F^{W}$, so that the node part of (H) covers every pair of vectors.
\end{remark}

\begin{corollary}[Per-input soundness for prover-supplied vectors; paper proof]
\label{cor:perinput-vector}
Let $\F$ be a finite field, and assume the hypotheses of Theorem~\ref{thm:a-vector}. Let $L_0 = \smRes(\Delta_\tau), L_1, \ldots, L_{n-1}$ be the layers of $C_\tau$, and assume the chain hypothesis (i) of Theorem~\ref{thm:c} with widths $s_0, \ldots, s_{n-1}$, so that $s_{n-1} = n_\tau$; with the true value vectors below $L_0$, its conditions concern fan-in ranges and widths only and do not depend on the input. Consider the verifier that receives $x \in \F^{2^{n_\tau}}$ as the prover's first message, rejects unless $\smPub(\mathit{op}, r, r', \mathit{vd})$ and the public segment checks hold, and otherwise runs the verifier of Definition~\ref{def:verifier} on $C_\tau$ with input table $x$ and the all-zero claimed output, drawing its challenges uniformly and independently after $x$ is sent. Fix $x$. If $\smtvalid(\mathit{op}, r, r', \smWit(\mathit{op}, \mathit{vd}, x))$ does not hold, then for every prover strategy for the remaining messages this verifier accepts with probability at most
\[
\frac{\smRout + \sum_{i=0}^{n-1} (8 s_i + 1)}{|\F|}.
\]
Consequently, for every prover strategy, the probability that the verifier accepts while the witness read from the first message is not valid is at most the same bound.
\end{corollary}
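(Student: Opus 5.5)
The plan is to argue by contraposition on the fixed first message, reducing to \cref{thm:a-vector} and \cref{prop:b-faithful} in the same way that \cref{cor:perinput-verifier} reduces to \cref{thm:a-sound}. Fix $x$ with $\smtvalid(\mathit{op}, r, r', \smWit(\mathit{op}, \mathit{vd}, x))$ false. If $\smPub(\mathit{op}, r, r', \mathit{vd})$ or a public segment check fails, the verifier rejects deterministically and the probability is $0$. Otherwise we show that $C_\tau$ does not accept $x$: if it did, $\smVecAcc(\mathit{op}, r, r', \mathit{vd}, x)$ would hold, and \cref{thm:a-vector} would make $\smWit(\mathit{op}, \mathit{vd}, x)$ valid, a contradiction. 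Hence the true output $V_0 = \mathrm{eval}(L_0, V_1)$, with $V_n = x$, has a nonzero entry at some $z < 2^{\smRout}$, and the all-zero claimed output is false.

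Next we check that $(s_i, L_i, V_{i+1})_{i\in[n]}$ is a layer chain in the sense of \cref{def:chain} for this particular $x$. The fan-in and width conditions of hypothesis (i) do not depend on the input, and the value equations $V_i = \mathrm{eval}(L_{i+1}, V_{i+2})$ hold by definition of the true value vectors. The last table is $x$, which has length $2^{n_\tau} = 2^{s_{n-1}}$, and $w(L_0) = \smRout$. \Cref{prop:b-faithful} then applies to this chain and the false output claim. Since $x$ is fixed before any challenge, the remaining prover messages form a strategy against the verifier of \cref{def:verifier} run on a fixed input table, and this strategy adapts only to challenges drawn after $x$. So the acceptance probability is at most $\bigl(\smRout + \sum_i (8 s_i + 1)\bigr)/|\F|$.

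For the consequence, a strategy whose first message depends on its coins is handled by conditioning on the first message. Let $E$ be the event that the verifier accepts and the witness read from $x$ is invalid. Then $\Pr[E] = \sum_{x} \Pr[x]\cdot\Pr[E \mid x]$. For invalid $x$, $\Pr[E \mid x]$ is bounded by the fixed-$x$ statement; this uses that the challenges are independent of the prover's coins and of $x$, together with averaging over the remaining coins (\cref{sec:probspace}). For valid $x$, $\Pr[E \mid x] = 0$. Summing gives the bound.

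The main obstacle is the bookkeeping in this last step. The first message precedes all challenges, so conditioning on it leaves the challenge distribution uniform, but the continuation strategy must be well defined for each value of $x$. I would make this explicit by writing a deterministic strategy as a function of $x$ and the challenge prefix, and then randomizing. Everything else is a direct combination of \cref{thm:a-vector}, \cref{lem:circuit-cons} and \cref{prop:b-faithful}.
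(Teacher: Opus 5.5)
Your proposal is correct and follows the paper's proof: contraposition through \cref{thm:a-vector} makes the all-zero output claim false, \cref{prop:b-faithful} (which uses \cref{lem:accept-bad}) is then applied to the chain $(s_i,L_i,V_{i+1})$, and the joint statement follows by averaging over the prover's coins. One small index slip: the chain condition you need is $V_{i+1} = \mathrm{eval}(L_{i+1},V_{i+2})$, not $V_i = \mathrm{eval}(L_{i+1},V_{i+2})$.
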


\begin{proof}
If the public checks fail, the verifier rejects. Otherwise $C_\tau$ does not accept $x$: if it did, $\smVecAcc(\mathit{op}, r, r', \mathit{vd}, x)$ would hold, and Theorem~\ref{thm:a-vector} would make $\smWit(\mathit{op}, \mathit{vd}, x)$ valid. So the true output $V_0 = \mathrm{eval}(L_0, V_1)$ of $C_\tau$ on $x$ has a nonzero entry, and the all-zero claimed output is false. By Lemma~\ref{lem:accept-bad}, acceptance of this false claim by the verifier of Definition~\ref{def:verifier} implies the chain event for the integrand itself, and Proposition~\ref{prop:b-faithful} bounds its probability by the displayed quantity. There $\smRout$ is the output width of $C_\tau$ in bits, and $8 s_i = 2 s_i \cdot 4$ accounts for the $2 s_i$ variables of the sumcheck of layer $i$, in each of which the integrand has degree at most $4$. For the last statement, the challenges are independent of the first message; the bound holds for each fixed $x$ and each fixed choice of the prover's internal randomness, and averaging over them gives the bound for the joint event.
\end{proof}

\begin{corollary}[Per-input soundness for prover-supplied vectors over the quartic extension; paper proof]
\label{cor:perinput-vector-ext}
Instantiate the compiler model of \cref{sec:compiler} with $\F = \F_p$, and assume the hypotheses of Theorem~\ref{thm:a-vector} and the chain hypothesis of Corollary~\ref{cor:perinput-vector} with widths $s_0, \ldots, s_{n-1}$. Consider the verifier that receives $x \in \F_p^{2^{n_\tau}}$ as the prover's first message, rejects unless $\smPub(\mathit{op}, r, r', \mathit{vd})$ and the public segment checks hold, and otherwise runs the verifier of Definition~\ref{def:verifier} over $\extK$ on the lifted layers $\operatorname{lift}(L_i)$ with the input table $\iota(x)$ and the all-zero claimed output, drawing its challenges independently and uniformly from $\extK$ after $x$ is sent. Fix $x$. If $\smtvalid(\mathit{op}, r, r', \smWit(\mathit{op}, \mathit{vd}, x))$ does not hold, then for every prover strategy for the remaining messages this verifier accepts with probability at most
\[
\frac{\smRout + \sum_{i=0}^{n-1} (8 s_i + 1)}{p^4}.
\]
Consequently, for every prover strategy, the probability that the verifier accepts while the witness read from the first message is not valid is at most the same bound.
\end{corollary}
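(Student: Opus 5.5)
The proof combines the argument of \cref{cor:perinput-vector} over $\F_p$ with the lifting of \cref{cor:perinput-ext}. Fix $x \in \F_p^{2^{n_\tau}}$ and assume that $\smtvalid(\mathit{op}, r, r', \smWit(\mathit{op}, \mathit{vd}, x))$ fails. If $\smPub$ or a public segment check fails, the verifier rejects and there is nothing to prove. Otherwise we work entirely over $\F_p$ first.

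\emph{Step 1: the output is false over $\F_p$.} Suppose $C_\tau$ accepted $x$. Then $\smVecAcc(\mathit{op}, r, r', \mathit{vd}, x)$ would hold, and \cref{thm:a-vector} (over $\F_p$, from the mechanized \cref{lem:circuit-cons}) would make $\smWit(\mathit{op}, \mathit{vd}, x)$ valid, a contradiction. Let $b_{n-1} = x$ and $b_i = \mathrm{eval}(L_{i+1}, b_{i+1})$. Then $\mathrm{eval}(L_0, b_0)_z \neq 0$ for some $z < 2^{\smRout}$, so the all-zero output is a false claim for the layer chain $(s, L, b)$ over $\F_p$. By the chain hypothesis, that chain has $s_{n-1} = n_\tau$, and its last layer reads inside $x$.

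\emph{Step 2: lifting.} We apply \cref{prop:lift-chain} with $f = 0$. It gives three facts. First, $(s, \operatorname{lift}(L), \iota(b))$ is a layer chain over $\extK$ whose last table is $\iota(x)$. Second, $w(\operatorname{lift}(L_0)) = \smRout$. Third, $\iota(0) = 0 \neq \mathrm{eval}(\operatorname{lift}(L_0), \iota(b_0))_z$, so the all-zero claimed output is false for the lifted chain. The widths $s_i$ are unchanged by lifting, so the bound is expressed in the same $s_i$.

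\emph{Step 3: the bound over $\extK$.} Over the finite field $\extK$, \cref{lem:accept-bad} shows that acceptance implies $\cBad^{\star,\mathrm{out}}$ for the adversary induced by a deterministic strategy. \cref{prop:b-faithful} then bounds that event by $(\smRout + \sum_i (8 s_i + 1))/|\extK|$, and \cref{prop:ext-field}(b) gives $|\extK| = p^4$. Strategies with independent coins are handled by averaging.

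\emph{Consequence.} The $\extK$-challenges are drawn after $x$ is sent, so they are independent of $x$ and of the prover's coins. Conditioning on each fixed $x$ and each fixed coin value, and averaging, bounds the joint event in which the verifier accepts and the read witness is invalid.

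The main obstacle is Step 2. Step 1 produces a false output over $\F_p$, but the verifier runs over $\extK$, so we must check that falsity, the chain conditions and the input table $\iota(x)$ that the verifier evaluates all transfer to the lifted circuit. This is exactly what \cref{prop:lift-chain} provides, using the injectivity of $\iota$ from \cref{prop:ext-field}(c) and \cref{lem:lift-layer}(d).
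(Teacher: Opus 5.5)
Your proposal is correct and follows the paper's proof essentially step for step. You reject on failed public checks, derive from \cref{thm:a-vector} that the all-zero output is false for the chain over $\F_p$, lift it by \cref{prop:lift-chain} with $f = 0$ to a chain over $\extK$ with last table $\iota(x)$, apply \cref{lem:accept-bad,prop:b-faithful} over $\extK$, and obtain the final statement by averaging.
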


\begin{proof}
If the public checks fail, the verifier rejects. Otherwise, as in the proof of Corollary~\ref{cor:perinput-vector}, $C_\tau$ does not accept $x$ over $\F_p$, so its output on $x$ has a nonzero entry. Let $b_{n-1} = x$ and $b_i = \mathrm{eval}(L_{i+1}, b_{i+1})$ for $i < n-1$; the chain hypothesis makes $(s, L, b)$ a layer chain over $\F_p$, and the all-zero claimed output is false for it. From here the argument is that of part (a) of Corollary~\ref{cor:perinput-ext}: \cref{prop:lift-chain} with $f = 0$ carries the chain and the false output to $\extK$, the last table of the lifted chain is $\iota(x)$, and \cref{lem:accept-bad,prop:b-faithful} over $\extK$ give the bound. The last statement follows by averaging, as in Corollary~\ref{cor:perinput-vector}.
\end{proof}

\subsection{One Leaf per Key and Root}
\label{sec:one-leaf}

The corollaries above bound one run. A system also relies on accepted runs for a key and a root asserting one leaf, and this needs the assumption of \cref{sec:semantics}. Call the \emph{asserted witness} of a run the canonical witness $\smCanon(\mathit{op}, w)$ for the verifiers of \cref{cor:perinput-verifier,cor:perinput-ext} and the read witness $\smWit(\mathit{op}, \mathit{vd}, x)$ for the verifiers of \cref{cor:perinput-vector,cor:perinput-vector-ext}. For every kind of operation, validity of the asserted witness contains $\smPathOK$ for its leaf at the key of the operation under the claimed root (\cref{def:smt}).

\begin{corollary}[One leaf per key and root; paper proof]
\label{cor:one-leaf}
Fix one of the verifiers of \cref{cor:perinput-verifier,cor:perinput-ext,cor:perinput-vector,cor:perinput-vector-ext}, the first two of which are taken to reject a first message that fails the public-side checks, the length check or the echo. Assume the other hypotheses of that corollary for all three circuits $C_\tau$ and for every operation that $\mathcal{A}$ may claim, and let $\varepsilon$ be the largest of its bounds over the three kinds. Let $\mathcal{A}$ play the prover in two runs of this verifier with independent challenges, fixing the claim, the value digest and the witness or input vector of each run before the challenges of that run are drawn. Let $E$ be the event that both runs accept, that their claims have the same key and the same root, and that their asserted witnesses have different leaves. Then
\[
\Pr[E] \leq 2\varepsilon + \mathsf{Adv}^{\mathrm{pb}}(\mathcal{B}),
\]
where $\mathcal{B}$ runs $\mathcal{A}$, simulates the two verifiers and outputs the key, the root and the leaves and sibling lists of the two asserted witnesses.
\end{corollary}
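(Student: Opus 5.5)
We split the event $E$ according to whether the two asserted witnesses are both valid. Let $G_1$ be the event that run~1 accepts while its asserted witness is not valid, and define $G_2$ the same way for run~2. Let $F$ be the event that both asserted witnesses are valid, their claims have the same key $k$ and the same root $r$, and their leaves differ. On $E$ both runs accept, so if neither $G_1$ nor $G_2$ occurs then both asserted witnesses are valid and $F$ holds. Hence
\[
E\subseteq G_1\cup G_2\cup F
\quad\text{and}\quad
\Pr[E]\le\Pr[G_1]+\Pr[G_2]+\Pr[F].
\]

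To bound $\Pr[G_j]$, we condition on everything $\mathcal{A}$ fixes before the challenges of run~$j$ are drawn. This includes $\mathcal{A}$'s coins, the whole of the other run if it comes first, and the claim, value digest and witness or input vector of run~$j$. Since the two runs use independent challenges, the challenges of run~$j$ stay uniform and independent under this conditioning. For each fixed prefix there are three cases. If the first message fails the public-side checks, the length check or the echo, the verifier rejects by hypothesis. If the asserted witness is valid, $G_j$ cannot occur. Otherwise the corresponding corollary (\cref{cor:perinput-verifier}, \cref{cor:perinput-ext}, \cref{cor:perinput-vector} or \cref{cor:perinput-vector-ext}) applies, since its hypotheses are assumed for every claimable operation and all three kinds. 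It bounds the acceptance probability by the bound for that kind, which is at most $\varepsilon$. Averaging over the prefixes with \cref{lem:split} or plain conditioning gives $\Pr[G_j]\le\varepsilon$.

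The main step, though a simple one, is to show $\Pr[F]\le\mathsf{Adv}^{\mathrm{pb}}(\mathcal{B})$. We do this by checking that $\mathcal{B}$'s output breaks position binding whenever $F$ occurs. By \cref{def:smt}, validity of the asserted witness $(\ell,\vec s)$ contains $\smPathOK(\ell,k,\vec s,r)$ for each kind of operation:
\begin{itemize}
\item for membership, $\ell=\smOcc(v)$ under the claimed root;
\item for non-membership, $\ell\in\{\smEmpty,\smTomb\}$ under $r$;
\item for an update, $\ell=\ell_{\mathrm o}$ under the pre-root $r$.
\end{itemize}
In every case the root appearing in $\smPathOK$ is the claimed (pre-)root, and that is the root matched by $E$. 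We interpret "the same root" in $E$ as equality of these pre-roots. For an update whose post-root equals the other claim's root, we would have to read $E$ with $r'$ and use the new leaf $\ell_{\mathrm n}$. We will fix this convention explicitly.

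With this reading, $\mathcal{B}$ outputs $(k,r,\ell,\vec s,\ell',\vec s\,')$ with $\ell\neq\ell'$ and both path checks true, which is exactly the event of \cref{def:pb}. To compute this output, $\mathcal{B}$ runs $\mathcal{A}$, samples the verifiers' challenges itself, and reads the asserted witnesses. It gets them from the witness together with $\smCanon$, or from the input vector together with $\smWit$; the latter needs one comparison $\mathit{vd}=\hleaf(\smEmpty)$. Because $\mathcal{B}$ simulates the two runs with the correct distribution, $\Pr[F]$ under $\mathcal{B}$'s simulation equals $\Pr[F]$ in the real experiment, and this probability is at most $\mathsf{Adv}^{\mathrm{pb}}(\mathcal{B})$. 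Adding the three terms gives $\Pr[E]\le 2\varepsilon+\mathsf{Adv}^{\mathrm{pb}}(\mathcal{B})$.
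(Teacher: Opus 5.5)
Your proposal is correct and matches the paper's proof. Both split $E$ into a position-binding break for $\mathcal{B}$ and the two events ``run $i$ accepts while its asserted witness is invalid'', bound each of the latter by $\varepsilon$ by fixing $\mathcal{A}$'s coins and the other run's randomness and invoking the per-run corollary, and finish with a union bound; your reading of ``same root'' as the claimed pre-root also agrees with the paper's. The one refinement is that the paper fixes all challenges of the other run regardless of their order, which covers interleaved runs directly rather than only the case where the other run comes first.
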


\begin{proof}
Let $G_i$ be the event that the asserted witness of run $i$ makes its transition valid. On $E$, either $G_1$ and $G_2$ both hold or some run $i$ accepts without $G_i$. In the first case both asserted witnesses satisfy $\smPathOK$ at the common key and root, and their leaves differ, so $\mathcal{B}$ breaks position binding. For the second case fix $i$, the coins of $\mathcal{A}$ and the challenges of the other run; these are independent of the challenges of run $i$. They determine the claim, the value digest and the first message of run $i$ and leave $\mathcal{A}$ a prover strategy for the remaining messages. If the checks that precede the protocol fail on that first message, the run rejects. If they pass and the asserted witness is not valid, the corollary for the chosen verifier bounds acceptance by $\varepsilon$. Averaging gives $\Pr[\text{run } i \text{ accepts} \wedge \neg G_i] \leq \varepsilon$, and a union bound gives the claim.
\end{proof}

\noindent The verifier reads the asserted witness in the clear, so the reduction needs no extractor. The same proof covers the new leaf of an update under its post-root, since validity also contains $\smPathOK(\ell_{\mathrm n}, k, \vec{s}, r')$.

\section{What Is Machine-Checked}
\label{sec:mechanization}

The results of \cref{sec:gkr,sec:smallfield,sec:semantics,sec:compiler,sec:composition} are proved in this paper. This section states which of them are also checked by Isabelle/HOL, so that a reader can separate the two; \cref{tab:mechanization} in \cref{app:mechanization} lists every numbered result with its status and the corresponding Isabelle theorems.

\paragraph{The development.} The formalization is written in Isabelle/HOL~\cite{isabellehol2002} (Isabelle2025-2) on top of the sumcheck entry of the Archive of Formal Proofs~\cite{sumcheckafp2024}, release \texttt{afp-2026-08-05}. It has six sessions with twenty theories (11,112 lines, 418 named statements): one for layered circuits, multilinear extensions and the chain bound; one for the sparse-Merkle semantics, the compiler model, the arithmetization theorem and the composition; and four that build the quartic extension and instantiate the chain bound over it. No theory contains an unfinished proof (\isaf{sorry}, \isaf{oops}), an axiom (\isaf{axiomatization}) or an oracle.

\paragraph{Status marks.} Each theorem, lemma, proposition and corollary of ours carries a mark in its heading. \emph{Mechanized} means that an Isabelle theorem states the result up to notation, with the same or weaker hypotheses; where the formal statement differs in another way, the text after the result says how. \emph{Paper proof} marks a result proved only in this paper, from mechanized results and standard facts, and \emph{partly mechanized} results say which parts are which.

\paragraph{Nonvacuity.} A theorem proved inside an Isabelle locale is an implication from the locale's assumptions. Every locale used above is therefore interpreted by a concrete instance over the KoalaBear field, and the arithmetization and composition theorems are applied, for a tree of depth one with a linear hash pair, to inputs whose premises they satisfy.

\paragraph{What is not used.} The development also contains a degree-four specialization of the chain bound, which holds only vacuously (\cref{cor:b-deg4}), and conditional theorems about the verifier that ships with the artifact (\isaf{verifier\_acceptance\_implies\_gkr\_chain\_bad} and its corollary \isaf{production\_verifier\_acceptance\_refines\_gkr\_chain\_bad} in \texttt{Verifier\_Acceptance\_Refinement.thy}), whose premises, a per-layer bridge in the theorem and the relation \isaf{rust\_hol\_value\_trace\_relation} between the Rust values and the model in the corollary, are not discharged and, with the multilinear representative, fail on honest runs in general (\cref{sec:gkr-faithful}), so these theorems do not apply to the shipped verifier. Contracts checked with Creusot and Why3 relate parts of the Rust code to model clauses; this relation is partial. No result of this paper uses these.

\paragraph{Artifact.} The Isabelle sessions, the Rust workspace and the verification scripts are available at \url{https://github.com/Oraclizer/statesync-gkr} and archived as StateSync-GKR v1.1.1, \href{https://doi.org/10.5281/zenodo.23280552}{doi:10.5281/zenodo.23280552}. A clean build of the six sessions is the reproduction check:
\begin{center}
\isaf{isabelle build -c -d /path/to/afp/thys -d formal/isabelle SESSION}
\end{center}
The results were obtained on the signed release v1.1.0, commit \texttt{1b8d2f829792172b347dedbfde969016c2c05789}, recorded as \href{https://doi.org/10.5281/zenodo.23136385}{doi:10.5281/zenodo.23136385} with restricted file access; v1.1.1 changes the package license metadata and retains the theory files of v1.1.0.

\section{Limitations}
\label{sec:limitations}

This section collects the boundaries of the results in one place.

\paragraph{Interactive model.} All probability statements concern the interactive public-coin protocol with challenges drawn uniformly and independently from the field, matching the imported sumcheck development. The Fiat--Shamir transformation~\cite{fiatshamir1986} that makes the running protocol non-interactive is outside the analysis. Fiat--Shamir is delicate for GKR-based arguments: for every choice of the Fiat--Shamir hash function, Khovratovich, Rothblum and Soukhanov construct a circuit, depending on that hash, for which the compiled argument accepts a false statement~\cite{khovratovich2025fs}. A security claim about the non-interactive protocol would need a multi-round reduction with explicit assumptions and quantitative loss; we make no such claim. The uniform distribution of an extension challenge drawn as four uniform base-field coordinates is proved (\cref{lem:pushforward}); whether the implementation's transcript-derived samples are uniform and independent, conditioned on earlier messages and on the input, is a property of its Fiat--Shamir construction and is not established here.

\paragraph{Circuit inputs.} The mechanized arithmetization theorem (\cref{thm:a-sound,thm:a-complete}) concerns circuit inputs produced by the witness encoder of \cref{def:encoding}, which computes the hash accumulators from the witness; the verifier of \cref{sec:goal} computes that input itself, as the shipped verifier does. On such inputs the hash-chain and bit residuals hold by construction, so soundness rests on the boundary residuals and the public checks. For a verifier that receives the input vector from the prover, \cref{thm:a-vector,cor:perinput-vector,cor:perinput-vector-ext} are paper proofs. They assume in addition that a digest is represented by its own field elements, that the node sub-circuit computes the compression function on every pair of vectors and, for updates, that the leaf sub-circuit computes the leaf hash of the old leaf, and they bind the key bits, the roots, the value digest and, for updates, the pre-image of the old leaf by checks on public segments of the input (\cref{rem:vector-constraints}). A verifier that receives only a commitment to the input vector is not treated.

\paragraph{Paper steps.} Every result marked P in \cref{tab:mechanization} is proved on paper only, from mechanized results and standard facts. The main statements rely on the following ones: the degree of the layer integrand (\cref{lem:integrand-poly}) and the application of the chain bound to the integrand that the verifier evaluates (\cref{lem:accept-bad,prop:b-faithful}); the lifting of a whole chain into the extension (\cref{prop:lift-chain}), the extension of \cref{lem:pushforward} to several independent challenges, and the equivalence of the two forms of chain well-formedness (\cref{app:ext-proofs}); the arithmetization theorem for inputs supplied by the prover (\cref{thm:a-vector}); the per-input forms over one field and across the two fields (\cref{cor:perinput,cor:perinput-verifier,cor:perinput-ext,cor:perinput-vector,cor:perinput-vector-ext}); root binding and the bound on two accepted runs (\cref{prop:binding,cor:one-leaf}); and the parts of \cref{prop:ext-field,lem:lift-layer} marked as such. The representative with which the development instantiates the mechanized chain bound is the multilinear interpolation of the layer integrand, which agrees with the integrand on the Boolean hypercube but in general not at the random point of the verifier's final check; this is why the faithful application is a separate step.

\paragraph{Hash functions.} The leaf and node hashes are abstract functions in the compiler model. Their in-circuit realization by Poseidon2 sub-circuits is not verified. The node hash of the implementation is not collision resistant (\cref{sec:semantics}); position binding of the tree (\cref{def:pb}) is an assumption of this paper, enters only \cref{cor:one-leaf}, and is outside the formal development. \Cref{lem:leaffold} reduces leaf-hash collisions on in-bound encodings to sponge collisions; the reduction from sponge collisions to properties of the permutation is outside the development.

\paragraph{Implementation.} No result of this paper reaches the Rust verifier: the conditional theorems about it do not apply to the shipped verifier, and the contract layer covers the code only in part (\cref{sec:mechanization}). A use of these results for the Rust verifier would also trust the underlying library's field arithmetic, challenger and Poseidon2 implementation, the Rust compiler and the operating system.

\paragraph{Other properties.} The core protocol provides no zero knowledge, and none is claimed. The per-input bounds read the witness from the prover's messages (\cref{sec:goal}), so no extractor is constructed, and no knowledge soundness is claimed beyond them. Completeness is proved for the arithmetization (\cref{thm:a-complete}) and for each layer step (\cref{lem:layer-identity}); an assembled completeness theorem for a full protocol run is not part of the development. The semantics covers the three sparse-Merkle operations over three-state leaves at a fixed but arbitrary depth. This paper makes no performance claim.

\section{Related Work}
\label{sec:related}

\paragraph{The GKR lineage.} The protocol originates with Goldwasser, Kalai, and Rothblum~\cite{gkr2015}; Thaler~\cite{thaler2013} gives time-optimal provers for regular wiring, the regime targeted by our compilation. The sumcheck protocol descends from Lund, Fortnow, Karloff, and Nisan~\cite{lfkn1992}, with polynomial identity bounds from Schwartz~\cite{schwartz1980}. This paper proves, and machine-checks the core of, the soundness of the protocol with the gate set and the field of a concrete system, composed with a constraint-level model of its arithmetization.

\paragraph{Computer-aided cryptography for proof systems.} The AFP sumcheck entry~\cite{sumcheckafp2024} and its accompanying analysis~\cite{bosshard2024} provide our per-round foundation, and their conclusion names GKR among the sumcheck-based protocols left for analysis. In Lean, complexitylib~\cite{complexitylib2026} proves perfect completeness and soundness of the GKR protocol for layered circuits with addition and multiplication gates over an arbitrary finite field, composing round-by-round soundness of its stages, with error $\bigl(w_{\mathrm{out}} + \sum_i 5 w_i\bigr)/|\F|$ for layers of $2^{w_i}$ gates; round-by-round soundness is stronger than the soundness bounds of this paper. The Lean library of the GordieHowe proof system~\cite{sharpgrubbs2026} machine-checks the security theorems of a GKR-based argument, from round-by-round knowledge soundness to the non-interactive argument, and, for ML-DSA-44, proves the source program complete and round-by-round knowledge sound for the FIPS 204 verification relation, with ExpandA bounded as FIPS 204 Appendix C permits, and carries this to the non-interactive argument. The theorems of its applications do not take as hypotheses that each compiled circuit enforces the checks assigned to it, accepts the honest input and stays within its listed layer and wire counts: these facts are proved from certificates over the exported circuits, which the Lean kernel checks. leanerVM~\cite{leanervm2026} proves round-by-round knowledge soundness of the oracle protocol of leanVM, in which a grand-product GKR checks the table bus, at every instance meeting the side conditions of that protocol, and proves that, on admissible sizes, a stack satisfying the protocol's relation for its instruction set yields a witness of the constraint system. At the time of writing, open pull requests prove that every satisfying witness of that constraint system represents a valid execution and, under a size hypothesis, the converse, and add knowledge soundness of a Lean implementation of the non-interactive verifier in the random-oracle model. That end-to-end statement, once merged, is stronger than ours, for a virtual machine and with GKR in a different role. Longfellow~\cite{longfellowformal2026} proves in Lean, for a model of its version-7 verifier, conditional extraction and false-statement soundness with error $(1+N+t_A)/2^{116}$ for $N$ transcript-oracle and $t_A$ Merkle-hash queries, along a chain from a trusted circuit of quadratic gates through a padded layer-by-layer sumcheck, Ligero, Fiat--Shamir, a two-field MAC and a Merkle extraction reduction. That statement is also stronger than ours in reaching the non-interactive verifier and the commitment. Its repository also holds a separate Lean soundness proof for the constraints of an ECDSA circuit, outside that chain. Our development proves the checks of the compiled circuit exact, a step that the GKR development of complexitylib does not include and that the main chain of Longfellow replaces by a trusted circuit. The GordieHowe library machine-checks certificates for the fixed circuits of its applications; our proof, on paper from a mechanized characterization of the constraints, covers sparse-Merkle circuits of every depth in a constraint-level model with only the hash sub-circuits abstract, and stops at the interactive protocol, where the GordieHowe library reaches the non-interactive argument. It differs from complexitylib also in its gate set, which includes the cube gate, and in carrying base-field circuits into the exact extension field from which a small-field prover draws its challenges, a step proved in part on paper (\cref{lem:lift-layer,prop:lift-chain}). ArkLib~\cite{arklib} develops interactive oracle reductions in Lean~4, with sumcheck and further protocols under active formalization, and Lochbihler and Mari\'c~\cite{lochbihlermaric2020} formalize authenticated data structures as functors in Isabelle/HOL.

\paragraph{Correctness of arithmetizations.} Under-constrained circuits are the most prevalent vulnerability class within ZK circuits in the systematization of Chaliasos et al.~\cite{chaliasos2024sok}. Coda~\cite{coda2024} certifies reimplementations of Circom circuits with refinement types, and Picus~\cite{picus2023} searches for under-constrained circuits automatically; both consider arbitrary assignments to the circuit's witness. Our mechanized arithmetization theorem is narrower than a statement about arbitrary assignments: it concerns the input that the verifier encodes from the witness. For input vectors supplied by the prover we give a paper proof (\cref{thm:a-vector}) from a mechanized characterization of acceptance that holds for every input vector (\cref{lem:circuit-cons}). This paper composes such a theorem with a GKR bound, machine-checked in the existential form and proved on paper per input, and analyzes a verifier that leaves the node hashes to the circuit.

\paragraph{Finite fields in proof assistants.} The Archive of Formal Proofs contains a classification of finite fields up to isomorphism~\cite{karayel2022finitefields}. The tower of \cref{sec:smallfield} is a single concrete construction, the binomial extension used by the implementation in its exact coordinates, built as a type so that the general theorems can be instantiated at it.

\section{Conclusion}
\label{sec:conclusion}

A GKR verifier's acceptance is a statement about a circuit, and a system that acts on it needs a statement about an update. For sparse-Merkle membership, non-membership and update we have proved the route between the two: a soundness bound for the GKR protocol with a cube gate, its instantiation over the exact quartic extension of a 31-bit field, the exactness of a verifier model that encodes the witness, with a soundness theorem for inputs supplied by the prover, and their composition into a per-input bound, to which position binding of the tree adds that accepted runs for a key and a root assert one leaf. The chain bound, its extension-field instance, the arithmetization theorem and, over one field, the existential composition are machine-checked in Isabelle/HOL; the remaining steps are proved on paper in this text. The general results are independent of the workload: another application inherits the composition by proving its own arithmetization theorem in the form of \cref{thm:a-sound}, and another challenge field needs a field construction and the lifting lemmas of \cref{sec:smallfield}, with no second probabilistic proof.

\section*{Acknowledgments}
We thank Bernhard M{\"u}ller for his independent review of the verifier's challenge-generation assumptions and the implementation-facing interpretation of the extension-field soundness bound. His analysis clarified the distinction between the mechanized field and model results and the remaining multi-round Fiat--Shamir and input-binding arguments. Responsibility for the implementation and the remaining security arguments rests with the author.

\section*{Declarations}
\paragraph{Use of AI tools.} AI-use designation under the IACR policy on generative AI: \AIdesignation. Claude (Anthropic) and GPT (OpenAI) drafted the manuscript, including the paper proofs, wrote the Isabelle proof scripts, the Rust code and its contracts, and were used to check the paper proofs against the formal development and to review the manuscript. All AI-assisted work was directed by the author, who defined the research questions, the system design and the scope of the verification, decided which statements are claimed, and reviewed the manuscript. The Isabelle kernel checks every mechanized proof independently of how the proof was written. The AI tools are not authors, and the author takes full responsibility for the work.
\paragraph{Competing interests.} Jinwook Kim is a co-founder, an officer (Chief Product and Technology Officer) and a shareholder of Oraclizer Labs, Inc., which develops the StateSync-GKR software studied here and offers commercial licenses for it under the Business Source License~1.1.
\paragraph{Funding.} No external funding was received for this work. It was carried out with internal funds of Oraclizer Labs, Inc.
\paragraph{Code availability.} The Isabelle sessions, the Rust workspace and the verification scripts are available as described in \cref{sec:mechanization}, under the Business Source License~1.1, which permits non-production research and reproduction; rights previously granted under MIT or Apache-2.0 for v1.1.0 remain valid.

\bibliographystyle{plain}
\bibliography{references}

\appendix
\crefalias{section}{appendix}

\section{Proofs for the GKR Soundness Bound}
\label{app:gkr-proofs}

This appendix gives the complete paper proofs of Theorems~\ref{thm:sumcheck-afp} and~\ref{thm:sumcheck-std}, Lemma~\ref{lem:chain-bound} and Theorem~\ref{thm:b}. In the last three, $q = |\F|$.

\subsection{Proof of \texorpdfstring{Theorem~\ref{thm:sumcheck-afp}}{the sumcheck bound in the form of the AFP}}

The mechanized proof is \isaf{soundness\_inductive} of the AFP entry~\cite{sumcheckafp2024}, which our development imports. We give a proof from the axioms of abstract polynomials listed in Section~\ref{sec:prelim}.

\begin{proof}[Proof of Theorem~\ref{thm:sumcheck-afp}]
For a finite set $U$ of variables and $p$ with $\mathrm{vars}(p)\subseteq U$ write $\Sigma_U(p) = \sum_{\sigma\in H^U}p(\sigma)$. We show by induction on $m$ that for all $p$, $v$, $\mathrm{pr}$, $\mathit{st}$ and $r'$ with $\mathrm{vars}(p)\subseteq\{x_1,\dots,x_m\}$, $\deg p\le d$ and $v\neq\Sigma_{\{x_1,\dots,x_m\}}(p)$, the acceptance event has probability at most $md/|\F|$.

For $m = 0$, $\mathrm{vars}(p) = \emptyset$ and $H^{\emptyset} = \{\emptyset\}$, so $v\neq p(\emptyset)$ and acceptance, which requires $v = p(\emptyset)$, is impossible.

For $m\geq 1$, let $U' = \{x_2,\dots,x_m\}$ and $(q,\mathit{st}') = \mathrm{pr}((H,p,v),x_1,(x_2,\dots,x_m),r',\mathit{st})$, and let $g = \sum_{\sigma\in H^{U'}}p[\sigma]$ be the true round polynomial. By the axioms on variables, $\mathrm{vars}(g)\subseteq\{x_1\}$; by the axioms on degree, $\deg g\le\deg p\le d$; and by the axioms on evaluation, $g([x_1\mapsto h]) = \Sigma_{U'}(p[x_1\mapsto h])$ for every $h\in\F$, so that $\sum_{h\in H}g([x_1\mapsto h]) = \Sigma_{\{x_1,\dots,x_m\}}(p)\neq v$. If the run accepts, the first round gives $\mathrm{vars}(q)\subseteq\{x_1\}$, $\deg q\le\deg p\le d$ and $\sum_{h\in H}q([x_1\mapsto h]) = v$, hence $q\neq g$; and the remaining rounds accept the instance $(H,p[x_1\mapsto r_1],q([x_1\mapsto r_1]))$, whose polynomial has variables in $U'$, degree at most $d$ and true sum $\Sigma_{U'}(p[x_1\mapsto r_1]) = g([x_1\mapsto r_1])$. So acceptance implies one of two events. In the first, $q([x_1\mapsto r_1]) = g([x_1\mapsto r_1])$ with $q\neq g$; $q$ and $g$ are fixed before $r_1$ is drawn, and by the root bound this event has probability at most $d/|\F|$ over $r_1$. In the second, $q([x_1\mapsto r_1])\neq g([x_1\mapsto r_1])$ and the remaining rounds accept a false claim; for each fixed $r_1$ the induction hypothesis bounds its probability over $r_2,\dots,r_m$ by $(m-1)d/|\F|$, and averaging over $r_1$ by Lemma~\ref{lem:split}(a) preserves the bound. The union bound gives $md/|\F|$.
\end{proof}

\subsection{Proof of \texorpdfstring{Theorem~\ref{thm:sumcheck-std}}{the per-variable sumcheck bound}}

This is the standard argument~\cite{lfkn1992,thaler2013}, recalled for completeness; it is not mechanized in our development.

\begin{proof}[Proof of Theorem~\ref{thm:sumcheck-std}]
Fix the strategy $\psi$. For $\mathbf{r}\in\F^m$ let $g_t = \psi(\mathbf{r}_{[0,t)})$ and $v_t$ be as in Definition~\ref{def:sumcheck-D}, and define the true round polynomials and true partial sums
\[
H_t(X) = \sum_{\mathbf{b}\in\cube{m-t-1}} g(\mathbf{r}_0,\dots,\mathbf{r}_{t-1},X,\mathbf{b})\quad (t\in[m]),\qquad
S_t = \sum_{\mathbf{b}\in\cube{m-t}} g(\mathbf{r}_0,\dots,\mathbf{r}_{t-1},\mathbf{b})\quad (t\in[m+1]).
\]
Then $S_0 = \sum_{\mathbf{b}\in\cube{m}}g(\mathbf{b})$, $S_m = g(\mathbf{r})$, $S_t = H_t(0)+H_t(1)$ and $S_{t+1} = H_t(\mathbf{r}_t)$. Both $g_t$ and $H_t$ depend only on $\mathbf{r}_{[0,t)}$, and $\deg H_t\le\deg_{X_t}g\le D_t$.

Suppose $\cSC^{\mathbf{D}}(g,v,\psi,\mathbf{r})$ holds. Then $v_0 = v\neq S_0$ and $v_m = g(\mathbf{r}) = S_m$. Let $t$ be the least index with $v_{t+1} = S_{t+1}$; then $v_t\neq S_t$. The round check gives $g_t(0)+g_t(1) = v_t\neq S_t = H_t(0)+H_t(1)$, so $g_t\neq H_t$, and $g_t(\mathbf{r}_t) = v_{t+1} = S_{t+1} = H_t(\mathbf{r}_t)$. Hence $\mathbf{r}$ lies in
\[
F_t = \bigl\{\mathbf{r} : g_t\neq H_t,\ \deg g_t\le D_t,\ g_t(\mathbf{r}_t) = H_t(\mathbf{r}_t)\bigr\}.
\]
$F_t$ depends only on $\mathbf{r}_{[0,t+1)}$. For a fixed prefix $\mathbf{u} = \mathbf{r}_{[0,t)}$, the polynomials $g_t$ and $H_t$ are fixed; if $g_t = H_t$ or $\deg g_t>D_t$, no extension of $\mathbf{u}$ lies in $F_t$, and otherwise $g_t - H_t$ is a nonzero polynomial of degree at most $D_t$ with at most $D_t$ roots. So by Lemma~\ref{lem:split}(b) and then (a),
\[
\Pr_{\mathbf{r}\leftarrow\F^m}[F_t] = q^{-t}\sum_{\mathbf{u}\in\F^t}\Pr_{\rho\leftarrow\F}\bigl[\mathbf{u}\,\|\,\rho\in F_t'\bigr] \le \frac{D_t}{q},
\]
where $F_t'$ is $F_t$ read on the first $t+1$ coordinates. The union bound over $t\in[m]$ gives $\sum_{t}D_t/q$.
\end{proof}

\subsection{Proof of \texorpdfstring{Lemma~\ref{lem:chain-bound}}{the chain bound}}

\begin{proof}[Proof of Lemma~\ref{lem:chain-bound}]
We prove the statement by induction on the length $n$ of the chain, for all claim lists $C$, values $v$ and adversaries $A$ at once. Let $(s,L,b) = (s_0,L_0,b_0)$, $M = \sum_{i\in[n]}(2s_i+1)$ and $(\mathrm{pr},\mathit{st},r',\cdot,\cdot) = A(0,())$, and for $\mathbf{u}\in\F^{2s}$ let
\[
E_1(\mathbf{u}) \iff \cSC\bigl(\mathrm{pr},\mathit{st},(\{0,1\},P(L,s,C,b),v),r',((j,\mathbf{u}_j))_{j\in[2s]}\bigr).
\]
Disjunct (i) of Definition~\ref{def:chainbad} at $\mathbf{r}$ is $E_1(\mathbf{r}_{[0,2s)})$.

\emph{Bound for disjunct (i).} By Lemma~\ref{lem:split}(b) with $k = 2s$,
\[
\Pr_{\mathbf{r}\leftarrow\F^M}\bigl[E_1(\mathbf{r}_{[0,2s)})\bigr] = \Pr_{\mathbf{u}\leftarrow\F^{2s}}\bigl[E_1(\mathbf{u})\bigr] \le \frac{2s\,\dbnd(s)}{q},
\]
by Lemma~\ref{lem:layer-sumcheck} applied to $p = P(L,s,C,b)$ and $d = \dbnd(s)$: the representation premise holds by hypothesis, every gate of $L$ reads below $2^s$ by the chain conditions, and $v\neq\ccs_L(C,b)$.

\emph{Case $n = 1$.} Disjunct (ii) requires a nonempty remaining chain, so the event is disjunct (i), and its probability is at most $2s\,\dbnd(s)/q\le(2s\,\dbnd(s)+1)/q$.

\emph{Case $n\geq 2$.} Let $(s',L',b') = (s_1,L_1,b_1)$, let $\mathcal{C}'$ be the remaining chain and $M' = M-(2s+1)$. For $\mathbf{r}\in\F^M$ let $(t_x,t_y)$ be the carried values of $A(0,\mathbf{r}_{[0,2s)})$, $\xi = \mathbf{r}_{[0,s)}$, $\eta = \mathbf{r}_{[s,2s)}$, $\gamma = \mathbf{r}_{2s}$, and
\begin{gather*}
\mathrm{TF}(\mathbf{r}) \iff t_x\neq\widetilde b(\xi)\ \text{or}\ t_y\neq\widetilde b(\eta),\qquad
C'(\mathbf{r}) = ((\xi,1),(\eta,\gamma)),\\
v'(\mathbf{r}) = \bigl(t_x - \widetilde{\mathrm{const}}_{L'}(\xi)\bigr) + \gamma\bigl(t_y - \widetilde{\mathrm{const}}_{L'}(\eta)\bigr),\qquad
A'_{\mathbf{r}}(j,\pi) = A(j+1,\mathbf{r}_{[0,2s+1)}\,\|\,\pi).
\end{gather*}
All four depend only on $\mathbf{r}_{[0,2s+1)}$. Define
\begin{align*}
S_1 &= \{\mathbf{r} : E_1(\mathbf{r}_{[0,2s)})\},\\
S_2 &= \{\mathbf{r} : \mathrm{TF}(\mathbf{r})\ \wedge\ v'(\mathbf{r}) = \ccs_{L'}(C'(\mathbf{r}),b')\},\\
S_3 &= \{\mathbf{r} : \mathrm{TF}(\mathbf{r})\ \wedge\ v'(\mathbf{r})\neq\ccs_{L'}(C'(\mathbf{r}),b')\ \wedge\ \cBad_P(A'_{\mathbf{r}},\mathcal{C}',C'(\mathbf{r}),v'(\mathbf{r}),\mathbf{r}_{\geq 2s+1})\}.
\end{align*}
By Definition~\ref{def:chainbad}, $\cBad_P(A,\mathcal{C},C,v,\mathbf{r})$ holds iff $\mathbf{r}\in S_1$, or $\mathrm{TF}(\mathbf{r})$ and $\cBad_P(A'_{\mathbf{r}},\mathcal{C}',C'(\mathbf{r}),v'(\mathbf{r}),\mathbf{r}_{\geq 2s+1})$. Splitting the second alternative on whether $v'(\mathbf{r}) = \ccs_{L'}(C'(\mathbf{r}),b')$ shows that the event is contained in $S_1\cup S_2\cup S_3$.

$S_1$ has probability at most $2s\,\dbnd(s)/q$ by the bound for disjunct (i).

$S_2$ depends only on $\mathbf{r}_{[0,2s+1)}$. By Lemma~\ref{lem:split}(b) and then (a) with $k = 2s$ and $m = 1$,
\[
\Pr[S_2] = q^{-2s}\sum_{\mathbf{u}\in\F^{2s}}\Pr_{\gamma\leftarrow\F}\bigl[\mathbf{u}\,\|\,\gamma\in S_2\bigr].
\]
For fixed $\mathbf{u}$ the values $t_x,t_y$, the points $\xi,\eta$ and $\mathrm{TF}$ are fixed. If $\mathrm{TF}$ fails, the inner probability is $0$. Otherwise, the chain conditions give $s = w(L')$ and $b = \mathrm{eval}(L',b')$, so $\widetilde b$ is the $w(L')$-variate extension $\widetilde V$ of $V = \mathrm{eval}(L',b')$, $\mathrm{TF}$ says $t_x\neq\widetilde V(\xi)$ or $t_y\neq\widetilde V(\eta)$, and the inner event is the event of Lemma~\ref{lem:carry}. The inner probability is therefore at most $1/q$ for every $\mathbf{u}$, and $\Pr[S_2]\le 1/q$.

For $S_3$, Lemma~\ref{lem:split}(a) with $k = 2s+1$ and $m = M'$ gives
\[
\Pr[S_3] = q^{-(2s+1)}\sum_{\mathbf{u}\in\F^{2s+1}}\Pr_{\mathbf{r}''\leftarrow\F^{M'}}\bigl[\mathbf{u}\,\|\,\mathbf{r}''\in S_3\bigr].
\]
For fixed $\mathbf{u}$, the quantities $\mathrm{TF}$, $C'$, $v'$ and $A'_{\mathbf{u}}$ are fixed, and $(\mathbf{u}\,\|\,\mathbf{r}'')_{\geq 2s+1} = \mathbf{r}''$. If $\mathrm{TF}(\mathbf{u})$ and $v'(\mathbf{u})\neq\ccs_{L'}(C'(\mathbf{u}),b')$ do not both hold, the inner probability is $0$. Otherwise the inner event is $\cBad_P(A'_{\mathbf{u}},\mathcal{C}',C'(\mathbf{u}),v'(\mathbf{u}),\mathbf{r}'')$ for the layer chain $\mathcal{C}'$, whose first table is $b'$, with the false claim value $v'(\mathbf{u})\neq\ccs_{L'}(C'(\mathbf{u}),b')$ and the fixed adversary $A'_{\mathbf{u}}$. The induction hypothesis bounds its probability by $\sum_{i=1}^{n-1}(2s_i\,\dbnd(s_i)+1)/q$, and averaging over $\mathbf{u}$ preserves the bound.

The union bound gives
\[
\Pr\bigl[\cBad_P(A,\mathcal{C},C,v,\mathbf{r})\bigr]\le\frac{2s\,\dbnd(s)}{q}+\frac{1}{q}+\frac{1}{q}\sum_{i=1}^{n-1}\bigl(2s_i\,\dbnd(s_i)+1\bigr) = \frac{1}{q}\sum_{i\in[n]}\bigl(2s_i\,\dbnd(s_i)+1\bigr).
\]
\end{proof}

The mechanized proof of \isaf{gkr\_chain\_bad\_bound} in \texttt{GKR\_Assembly.thy} follows the same steps.

\subsection{Proof of \texorpdfstring{Theorem~\ref{thm:b}}{the chain bound}}

\begin{proof}[Proof of Theorem~\ref{thm:b}]
Let $n_0 = \sout$, $M = \sum_{i\in[n]}(2s_i+1)$, $V_0 = \mathrm{eval}(L_0,b_0)$ and, for $\mathbf{r}\in\F^{n_0+M}$, $z = \mathbf{r}_{[0,n_0)}$ and $\mathrm{val}(z) = \widetilde o(z) - \widetilde{\mathrm{const}}_{L_0}(z)$. Define
\begin{align*}
\mathrm{SEED} &= \{\mathbf{r} : \widetilde o(z) = \widetilde{V_0}(z)\},\\
\mathrm{REST} &= \{\mathbf{r} : \mathrm{val}(z)\neq\ccs_{L_0}(((z,1)),b_0)\ \wedge\ \cBad_P(A_z,\mathcal{C},((z,1)),\mathrm{val}(z),\mathbf{r}_{\geq n_0})\}.
\end{align*}
Since $\ccs_{L_0}(((z,1)),b_0) = \widetilde{V_0}(z) - \widetilde{\mathrm{const}}_{L_0}(z)$, an outcome outside $\mathrm{SEED}$ has a false seeded claim value, so the seeded chain event is contained in $\mathrm{SEED}\cup\mathrm{REST}$.

$\mathrm{SEED}$ depends only on $z$. By Lemma~\ref{lem:split}(b) and Lemma~\ref{lem:seed}, applied to the tables $o$ and $V_0$, which differ at some $u\in[2^{n_0}]$ by hypothesis, $\Pr[\mathrm{SEED}] = \Pr_{z\leftarrow\F^{n_0}}[\widetilde o(z) = \widetilde{V_0}(z)]\le n_0/q$.

By Lemma~\ref{lem:split}(a) with $k = n_0$,
\[
\Pr[\mathrm{REST}] = q^{-n_0}\sum_{z\in\F^{n_0}}\Pr_{\mathbf{r}'\leftarrow\F^{M}}\bigl[z\,\|\,\mathbf{r}'\in\mathrm{REST}\bigr].
\]
For fixed $z$, if $\mathrm{val}(z) = \ccs_{L_0}(((z,1)),b_0)$ the inner probability is $0$. Otherwise the inner event is $\cBad_P(A_z,\mathcal{C},((z,1)),\mathrm{val}(z),\mathbf{r}')$ with a false claim value and the fixed adversary $A_z$, and Lemma~\ref{lem:chain-bound} bounds it by $\sum_{i}(2s_i\,\dbnd(s_i)+1)/q$. Averaging preserves the bound, and the union bound over $\mathrm{SEED}$ and $\mathrm{REST}$ gives the theorem.
\end{proof}

The mechanized proof of \isaf{gkr\_assembly\_soundness} in \texttt{GKR\_Assembly.thy} has the same structure.


\section{Proofs for the Quartic Extension}
\label{app:ext-proofs}

This appendix gives the two routine verifications used in \cref{sec:smallfield}.

\paragraph{The quadratic construction.} Let $R$ be a commutative ring with $0\neq 1$ and $\delta\in R$. On $R\times R$ define
\[
(a,b)+(c,e) = (a+c,\ b+e),\qquad (a,b)\,(c,e) = (ac+\delta\,be,\ ae+bc).
\]
We show that this is a commutative ring with zero $(0,0)$ and unit $(1,0)$, as used in the proof of \cref{prop:ext-field}; both levels of the tower of \cref{def:ext-tower} are of this form, with $(R,\delta) = (\F_p,3)$ and $(R,\delta) = (\extE,\theta)$. In the mechanization the two levels are verified separately, as the ring structures on \isaf{kb\_quad} and \isaf{koala\_bear\_ext4} in \texttt{KoalaBear\_Ext4\_Field.thy}.

Addition is componentwise, so the additive group axioms hold with $-(a,b) = (-a,-b)$. Multiplication is commutative because $(ac+\delta\,be,\ ae+bc)$ is unchanged when $(a,b)$ and $(c,e)$ are exchanged. The unit law is $(1,0)(c,e) = (c+\delta\cdot 0\cdot e,\ e+0\cdot c) = (c,e)$. Both coordinates of the product are bilinear in $(a,b)$ and $(c,e)$, which gives distributivity. For associativity, let $(f,h)$ be a third element. Then
\begin{align*}
\bigl((a,b)(c,e)\bigr)(f,h) &= \bigl((ac+\delta\,be)f+\delta\,(ae+bc)h,\ (ac+\delta\,be)h+(ae+bc)f\bigr),\\
(a,b)\bigl((c,e)(f,h)\bigr) &= \bigl(a(cf+\delta\,eh)+\delta\,b(ch+ef),\ a(ch+ef)+b(cf+\delta\,eh)\bigr),
\end{align*}
and expanding shows that both equal $\bigl(acf+\delta\,(aeh+bch+bef),\ ach+aef+bcf+\delta\,beh\bigr)$. Finally, $(0,0)\neq(1,0)$ because $0\neq 1$ in $R$.

\paragraph{The two forms of chain well-formedness.} The mechanized definition \isaf{gkr\_chain\_wf} (in \texttt{GKR\_Assembly.thy}) is recursive. The empty chain is well-formed. A chain $\bigl((s_0),(L_0),(b_0)\bigr)$ of length one is well-formed if and only if every gate of $L_0$ has input wires below $2^{s_0}$. A chain of length $n\ge 2$ is well-formed if and only if every gate of $L_0$ has input wires below $2^{s_0}$, $s_0 = w(L_1)$, $b_0 = \mathrm{eval}(L_1,b_1)$, and the chain obtained by deleting the first entry of each of the three lists is well-formed. Triples of lists of different lengths are not well-formed.

For lists of a common length $n\ge 1$ this is equivalent to conditions (Ch1) and (Ch2) of \cref{sec:smallfield}, by induction on $n$. For $n = 1$, (Ch2) is vacuous and (Ch1) is the defining condition. For $n\ge 2$, the defining conditions are (Ch1) and (Ch2) at index $0$ together with well-formedness of the shortened chain, which by the induction hypothesis is (Ch1) at indices $1,\dots,n-1$ and (Ch2) at indices $1,\dots,n-2$. Together these are (Ch1) and (Ch2) at all indices.

\section{Proofs for the Sparse-Merkle Sections}
\label{app:smt-proofs}

This appendix gives the index-level details used in the proofs of Lemmas~\ref{lem:circuit-cons} and~\ref{lem:smt-honest-enc}, and the induction that identifies the accumulator chain of Definition~\ref{def:smt}~(f) with the recursive path root of the formalization. Positions are counted from $0$; a block of $W$ entries starting at position $a$ occupies $a, \ldots, a + W - 1$.

\paragraph{Accumulator chains.} The formal path root is defined by recursion on the sibling list: $\smPathRoot(a, k, ()) = a$, and $\smPathRoot(a, k, s_0 \cdot \vec{s}\,') = \smPathRoot(a', \lfloor k/2 \rfloor, \vec{s}\,')$ with $a' = \hnode(a, s_0)$ if $k$ is even and $a' = \hnode(s_0, a)$ if $k$ is odd (\isaf{path\_root}).

\begin{lemma}[Accumulator chain; mechanized]
\label{lem:smt-acc-chain}
For every digest $a$, key $k$ and sibling list $\vec{s}$ of length $n$, the recursive path root equals the last entry $a_n$ of the accumulator chain of Definition~\ref{def:smt}~(f), and that chain satisfies $a_0 = a$ and the step rule of Definition~\ref{def:smt}~(f).
\end{lemma}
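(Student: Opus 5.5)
The plan is to define the accumulator chain of Definition~\ref{def:smt}~(f) explicitly as a function of $(a,k,\vec{s})$ and to prove a generalized statement by induction on the length $n$ of $\vec{s}$, universally quantified over $a$ and $k$. The generalization is needed because the recursion of \isaf{path\_root} changes both the start digest (to $a'$) and the key (to $\lfloor k/2\rfloor$) at each step, so a statement with $a$ and $k$ fixed would not give a usable induction hypothesis.

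For the generalized statement, I would write $(a_0,\dots,a_n)$ for the chain of $(a,k,\vec{s})$, taking $a_0=a$ and the step rule as defining clauses, so only the equality $\smPathRoot(a,k,\vec{s})=a_n$ needs proof. The base case $n=0$ is immediate, since $\smPathRoot(a,k,())=a=a_0$.

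In the step, let $\vec{s}=s_0\cdot\vec{s}\,'$. The accumulator $a_1$ equals $\hnode(a,s_0)$ when $\smBit_0(k)=0$ and $\hnode(s_0,a)$ otherwise. Since $\smBit_0(k)=k\bmod 2$, this $a_1$ is exactly the $a'$ of the recursive clause. The key identity is then that the tail $(a_1,\dots,a_n)$ of the chain of $(a,k,\vec{s})$ is the chain of $(a',\lfloor k/2\rfloor,\vec{s}\,')$. This holds if $\smBit_{j+1}(k)=\smBit_j(\lfloor k/2\rfloor)$ for every $j$, which follows from $\lfloor\lfloor k/2\rfloor/2^j\rfloor=\lfloor k/2^{j+1}\rfloor$. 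I would prove the tail identity by an inner induction on $j$: both sequences start at $a'$, and they apply the same step rule with the same sibling $s_{j+1}=s'_j$ and the same bit. The induction hypothesis, applied at $a'$ and $\lfloor k/2\rfloor$, then gives $\smPathRoot(a',\lfloor k/2\rfloor,\vec{s}\,')=a_n$, which closes the step.

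The only mildly delicate point is the re-indexing in the tail identity, in particular the floor-division identity for bits and the shift of sibling indices. This is where I would spend care, and in a formal proof I would state it as a separate lemma about the chain for a cons list. Everything else is unfolding definitions.
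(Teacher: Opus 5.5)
Your proposal is correct and follows the paper's proof. Both argue by induction on $n$, identify $a_1$ with the first recursive step via $k \bmod 2 = \smBit_0(k)$, and identify the tail of the chain with the chain of $(a_1,\lfloor k/2\rfloor,\vec{s}\,')$ via $\smBit_j(\lfloor k/2\rfloor) = \smBit_{j+1}(k)$. Your explicit generalization over $a$ and $k$ and your inner induction for the tail identity spell out steps that the paper leaves implicit.
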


\begin{proof}
By induction on $n$. For $n = 0$ both sides are $a$. For $\vec{s} = s_0 \cdot \vec{s}\,'$, the first step of the recursion computes $a_1$, since $k \bmod 2 = \smBit_0(k)$. The recursion then continues from $a_1$ with the key $\lfloor k/2 \rfloor$ and the list $\vec{s}\,'$, and $\smBit_j(\lfloor k/2 \rfloor) = \lfloor k/2^{j+1} \rfloor \bmod 2 = \smBit_{j+1}(k)$. So the chain of $(a_1, \lfloor k/2 \rfloor, \vec{s}\,')$ is $(a_1, \ldots, a_n)$, and by induction the recursion returns $a_n$.
\end{proof}

\paragraph{Positions.} Table~\ref{tab:smt-input} lists the positions of the segments of an input vector (\isaf{ro\_lp}, \ldots, \isaf{up\_width} in \texttt{Compiler\_Model.thy}); they are the partial sums of the block lengths in Definition~\ref{def:encoding}. Table~\ref{tab:smt-ev} lists the positions of the entries of $\smEv_\tau(x)$ (\isaf{ev\_hl}, \ldots, \isaf{ev2\_vd}).

\begin{table}[htbp]
\centering
\small
\caption{Positions of the segments of an input vector of length at least $n_\tau$.}
\label{tab:smt-input}
\begin{tabular}{@{}lll@{}}
\toprule
Segment & Read-only kinds & Updates\\
\midrule
$P$ & $0$ ($\smWl$ entries) & $0$ ($\smWl$ entries)\\
$P'$ & & $\smWl$ ($\smWl$ entries)\\
$A_j$, $j \leq d$ & $\smWl + Wj$ & $2\smWl + Wj$\\
$A'_j$, $j \leq d$ & & $2\smWl + W(d+1) + Wj$\\
$S_j$, $j < d$ & $\smWl + W(d+1) + Wj$ & $2\smWl + 2W(d+1) + Wj$\\
$\beta_j$, $j < d$ & $\smWl + W(2d+1) + j$ & $2\smWl + W(3d+2) + j$\\
$R$ & $\smWl + W(2d+1) + d$ & $2\smWl + W(3d+2) + d$\\
$R'$ & & $2\smWl + W(3d+3) + d$\\
$V$ & $\smWl + W(2d+2) + d$ & $2\smWl + W(3d+4) + d$\\
\midrule
$n_\tau$ & $\smWl + W(2d+3) + d$ & $2\smWl + W(3d+5) + d$\\
\bottomrule
\end{tabular}
\end{table}

\begin{table}[htbp]
\centering
\small
\caption{Positions of the entries of $\smEv_\tau(x)$; every block other than $\beta_j$ and $x_0$ has $W$ entries.}
\label{tab:smt-ev}
\begin{tabular}{@{}lll@{}}
\toprule
Entry & Read-only kinds & Updates\\
\midrule
$\hwl(P)$ & $0$ & $0$\\
$\hwl(P')$ & & $W$\\
$M_j$, $j < d$ & $W + Wj$ & $2W + Wj$\\
$M'_j$, $j < d$ & & $2W + Wd + Wj$\\
$A_j$, $j \leq d$ & $W(d+1) + Wj$ & $W(2d+2) + Wj$\\
$A'_j$, $j \leq d$ & & $W(3d+3) + Wj$\\
$\beta_j$, $j < d$ & $W(2d+2) + j$ & $W(4d+4) + j$\\
$R$ & $W(2d+2) + d$ & $W(4d+4) + d$\\
$R'$ & & $W(4d+5) + d$\\
$V$ & $W(2d+3) + d$ & $W(4d+6) + d$\\
$x_0$ & $W(2d+4) + d$ & \\
\midrule
$e_\tau$ & $W(2d+4) + d + 1$ & $W(4d+7) + d$\\
\bottomrule
\end{tabular}
\end{table}

\begin{lemma}[Entries of the stack output; mechanized]
\label{lem:smt-ev-entries}
If $x$ has length at least $n_\tau$, then $\smEv_\tau(x)$ has length $2^{e_\tau}$ and holds at the positions of Table~\ref{tab:smt-ev} the entries named there.
\end{lemma}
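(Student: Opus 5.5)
The proof is bookkeeping on the specification (S) of the evaluation stack together with the unpadded layout of $\smEv_\tau(x)$ in \cref{def:residuals}. There is no probabilistic or algebraic content.

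\emph{Length.} By (S), $\smEv_\tau(x)$ is defined as the unpadded list padded with zeros to length $2^{e_\tau}$, where $e_\tau$ is the length of the unpadded list. So the first task is to count that list. Its block lengths are read off the enumeration in \cref{def:residuals}.

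For a read-only kind, the list consists of $\hwl(P)$, the $d$ vectors $M_j$, the $d+1$ vectors $A_j$, $R$ and $V$, each of $W$ entries, together with the $d$ single entries $\beta_j$ and the single entry $x_0$. This gives $e_\tau = W(1+d+d+1+2)+d+1 = W(2d+4)+d+1$.

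For updates, the $W$-blocks are the two leaf hashes, the $2d$ vectors $M_j,M'_j$, the $2(d+1)$ vectors $A_j,A'_j$, and $R,R',V$. With the $d$ entries $\beta_j$ this gives $W(4d+7)+d$. Both values agree with the last row of Table~\ref{tab:smt-ev}.

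\emph{Positions.} Each position in Table~\ref{tab:smt-ev} is the partial sum of the lengths of the blocks listed before it, in the order of \cref{def:residuals}. I would check the rows one at a time as running sums. In the read-only case, $M_j$ sits at $W+Wj$, since it follows $W$ entries. $A_j$ sits at $W(d+1)+Wj$, and $\beta_j$ at $W(d+1)+W(d+1)+j = W(2d+2)+j$. $R$ follows the $d$ entries $\beta_j$, which places it at $W(2d+2)+d$. Then come $V$ and $x_0$. The update column is checked the same way.

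The entry values themselves are the segments of $x$, or functions of them. Their positions in $x$ are given by Table~\ref{tab:smt-input}, and these are well defined because $x$ has length at least $n_\tau$. The $M_j$ are formed from $A_j,S_j,\beta_j$ through the multiplexer and $\hwn$. Here I would state the stack specification for vectors of length at least $n_\tau$, not only exactly $2^{n_\tau}$. This is harmless, because every segment lies below $n_\tau$.

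\emph{Main obstacle.} The hard step is that the padding and the power-of-two length must be consistent. The claim asserts length exactly $2^{e_\tau}$ with $e_\tau$ equal to the unpadded length, so the padded list has $2^{e_\tau}-e_\tau\ge 0$ zeros, and $e_\tau<2^{e_\tau}$ guarantees that every listed position lies inside the vector. Beyond that, the remaining work is ensuring that the arithmetic over the two tables and their offsets is correct. I would discharge this by unfolding the position definitions (\isaf{ev\_hl}, …) and simplifying with linear arithmetic, case by case over the kind.
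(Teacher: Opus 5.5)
Your proposal is correct and follows the paper's proof. The segments of $x$ are defined because they lie below $n_\tau$, the positions of Table~\ref{tab:smt-ev} are partial sums of block lengths ($W$ for each hash output, accumulator and digest, $1$ for each key bit and for $x_0$), the total is $e_\tau$, and the zero padding after position $e_\tau$ leaves these entries unchanged. Your appeal to (S) is superfluous: $\smEv_\tau(x)$ is defined directly from the segments of $x$ in \cref{def:residuals}, so the lemma does not involve the evaluation stack and needs no restated stack specification.
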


\begin{proof}
Every segment of Table~\ref{tab:smt-input} lies in the first $n_\tau$ positions, so the segments of $x$ are defined. The unpadded list is the concatenation of the blocks in the order of Table~\ref{tab:smt-ev}, with $W$ entries for each hash output, accumulator and digest block (the codomain of $\hwl$ and $\hwn$ is $\F^{W}$) and one entry for each key bit and for $x_0$. The position of a block is the sum of the lengths of the blocks before it, which gives the table, and the total length is $e_\tau$. The zero padding is appended after position $e_\tau$ and leaves these entries unchanged.
\end{proof}

\paragraph{Residual lists.} Writing $\mathrm{D}(a, b)$ for the residual $y_a - y_b$ of a vector $y$, $\mathrm{B}(a)$ for $y_a^2 - y_a$ and $\mathrm{T}(a)$ for $y_a^2 - 2 y_a$, and using the positions of Table~\ref{tab:smt-ev}, the list $\Delta_\tau$ for a read-only kind is
\[
\begin{aligned}
&\mathrm{D}(i,\ W(d+1) + i) && (i < W) && \text{(C1)}\\
&\mathrm{D}(W + Wj + i,\ W(d+1) + W(j+1) + i) && (j < d,\ i < W) && \text{(C2)}\\
&\mathrm{D}(W(2d+1) + i,\ W(2d+2) + d + i) && (i < W) && \text{(C3)}\\
&\mathrm{D}(W(d+1) + i,\ W(2d+3) + d + i) && (i < W) && \text{(C4)}\\
&\mathrm{B}(W(2d+2) + j) && (j < d) && \text{(C5)}\\
&\mathrm{T}(W(2d+4) + d) && \text{only for } \tau = \smKN && \text{(C6)}
\end{aligned}
\]
in this order (\isaf{descs\_ro}), and for updates it is the analogous list over the update column of Table~\ref{tab:smt-ev} for (C1) twice, (C2) twice, (C3) twice, (C4$'$) and (C5) (\isaf{descs\_up}).

\begin{lemma}[Residual layer; mechanized]
\label{lem:smt-res-layer}
Let $\Delta$ be a residual list and $y$ a vector containing every position that $\Delta$ reads. Then $\mathrm{eval}(\smRes(\Delta), y)$ has length $2^{|\Delta|}$; its entry $z$ is the value at $y$ of the $z$-th residual for $z < |\Delta|$, and $0$ for $|\Delta| \leq z < 2^{|\Delta|}$. Consequently all its entries are zero iff every residual of $\Delta$ vanishes at $y$.
\end{lemma}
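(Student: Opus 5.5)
The plan is to unfold the evaluation formula of \cref{def:circuit} for the layer $\smRes(\Delta)$ wire by wire. First, by \cref{def:residuals} the layer has $|\Delta|$ width bits and no constants, so $\mathrm{eval}(\smRes(\Delta),y)$ is a table on $[2^{|\Delta|}]$, which gives the length claim. Its constant sum is $0$ at every wire, so entry $z$ is the sum, over the gates $g$ of the list with $\mathrm{out}(g)=z$, of $c(g)\cdot\mathrm{sem}(\kappa(g),y_{\mathrm{in}_1(g)},y_{\mathrm{in}_2(g)})$. The hypothesis that $y$ contains every position $\Delta$ reads ensures each such read is an actual entry of $y$, with no out-of-range index.

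Next, I would identify the gates that write each wire. The gate list is built residual by residual, the $z$-th residual contributing exactly two gates with output wire $z$. So I would prove by induction on the list, in its concatenated form, that the gates with $\mathrm{out}(g)=z$ are exactly the two gates of residual $z$ when $z<|\Delta|$, and that there are none when $z\ge|\Delta|$. This filtering lemma over the concatenated gate list is the step I expect to need the most care, since the sums run over lists and the indexing of the generated gates must be matched with the position in $\Delta$. I would handle it by writing the gate list as $\mathrm{concat}_{z<|\Delta|}\,\mathrm{gates}(z,\Delta_z)$ and using that filtering distributes over concatenation and that each block has output wire $z$.

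Then I would check the three residual forms. For $\mathrm{D}(a,b)$ the two linear gates give $1\cdot y_a+(-1)\cdot y_b=y_a-y_b$. For $\mathrm{B}(a)$ the product gate gives $y_a^2$ and the linear gate gives $-y_a$. For $\mathrm{T}(a)$ the result is $y_a^2-2y_a$. Hence entry $z$ equals the $z$-th residual at $y$, and entries with $|\Delta|\le z<2^{|\Delta|}$ are empty sums, equal to $0$. The final equivalence then follows because the residuals occupy exactly the wires $z<|\Delta|$ and all remaining wires are identically zero.
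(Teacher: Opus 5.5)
Your proposal is correct and follows the paper's proof step for step. The steps are the same: the layer has no constants, its gate list is a concatenation of two-gate blocks with the block of residual $z$ writing wire $z$, the three residual forms $\mathrm{D}$, $\mathrm{B}$, $\mathrm{T}$ are evaluated directly, and wires $z \geq |\Delta|$ are empty sums. The paper makes one point explicit for the final equivalence that you leave implicit: $|\Delta| < 2^{|\Delta|}$, so every residual's wire $z < |\Delta|$ actually lies in the layer.
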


\begin{proof}
By Definition~\ref{def:circuit}, entry $z$ of a layer evaluation is the sum of the constants of wire $z$ plus, over the gates that write wire $z$, the contributions $c \cdot g(y_{\mathit{in}_1}, y_{\mathit{in}_2})$, where $g(u, t) = u$ for a linear gate and $g(u, t) = u t$ for a product gate. The layer $\smRes(\Delta)$ has no constants, and its gate list is the concatenation over $z < |\Delta|$ of the two gates of the $z$-th residual, both writing wire $z$. The gates writing wire $z$ are therefore exactly these two for $z < |\Delta|$, and there are none for $z \geq |\Delta|$. Their contributions sum to $1 \cdot y_a + (-1) \cdot y_b$ for $\mathrm{D}(a, b)$, to $1 \cdot y_a y_a + (-1) \cdot y_a$ for $\mathrm{B}(a)$, and to $y_a y_a - 2 y_a$ for $\mathrm{T}(a)$. Since $z < |\Delta| < 2^{|\Delta|}$, each residual sits on a wire of the layer, which gives the last statement.
\end{proof}

\begin{proof}[Proof of Lemma~\ref{lem:circuit-cons} in full]
Let $x$ have length $2^{n_\tau} \geq n_\tau$ and let $y = \smEv_\tau(x)$. By (S), the output of $C_\tau$ on $x$ is $\mathrm{eval}(\smRes(\Delta_\tau), y)$, so by Lemma~\ref{lem:smt-res-layer}, $C_\tau$ accepts $x$ iff every residual of $\Delta_\tau$ vanishes at $y$. By Lemma~\ref{lem:smt-ev-entries}, for a read-only kind the residuals of the first line are $\hwl(P)_i - (A_0)_i$ for $i < W$; they all vanish iff $\hwl(P) = A_0$, both vectors having $W$ entries, which is (C1). In the same way the second line gives $(M_j)_i - (A_{j+1})_i$ and (C2), the third $(A_d)_i - R_i$ and (C3), and the fourth $(A_0)_i - V_i$ and (C4). The fifth line gives $\beta_j^2 - \beta_j = \beta_j(\beta_j - 1)$ and (C5), and the last $x_0^2 - 2x_0 = x_0(x_0 - 2)$ and (C6). The update case reads the update column in the same way (\isaf{res\_sem\_descs\_ro\_iff}, \isaf{res\_sem\_descs\_up\_iff}).
\end{proof}

\paragraph{Encoded inputs.} The remaining facts used in the proof of Lemma~\ref{lem:smt-honest-enc} are the following.
\begin{enumerate}[label=(\alph*),leftmargin=*]
\item \emph{Segments of an encoded vector} (\isaf{enc\_ro\_segs}, \isaf{enc\_up\_segs}). For $x = \smEnc_\tau(\ell, k, \vec{s}, r, \mathit{vd})$ with $|\vec{s}| = d$, the segments of Table~\ref{tab:smt-input} are $P = \smPre(\ell)$, $A_j = \smRep(a_j)$, $S_j = \smRep(s_j)$, $\beta_j = \smBit_j(k)$, $R = \smRep(r)$ and $V = \smRep(\mathit{vd})$, and $x_0 = \smTag(\ell)$; similarly for updates, with $P' = \smPre(\ell')$, $A'_j = \smRep(a'_j)$ and $R' = \smRep(r')$. The vector $x^{\circ}$ of Definition~\ref{def:encoding} is the concatenation of blocks of lengths $\smWl$, $W$ ($d+1$ times), $W$ ($d$ times), $1$ ($d$ times), $W$ and $W$, so the block positions are the partial sums of Table~\ref{tab:smt-input}; the padding comes after position $n_\tau$; and $x_0$ is the first entry of $\smPre(\ell)$, which is $\smTag(\ell)$.
\item \emph{Key bits} (\isaf{kbit\_cases}, \isaf{kbit\_bool}). As a field element, $\smBit_j(k) = \nu(\lfloor k/2^j \rfloor \bmod 2) \in \{0, 1\}$, so $\smBit_j(k)(\smBit_j(k) - 1) = 0$.
\item \emph{Multiplexer} (\isaf{mux\_zero}, \isaf{mux\_one}). $\smMuxL(0, u, t) = u$ and $\smMuxR(0, u, t) = t$; $\smMuxL(1, u, t) = u + (t - u) = t$ and $\smMuxR(1, u, t) = t + (u - t) = u$.
\item \emph{Chain} (Lemma~\ref{lem:smt-acc-chain}). The encoded chain satisfies the step rule, and since $|\vec{s}| = d$ its last entry is $a_d = \smPathRoot(\hleaf(\ell), k, \vec{s})$.
\item \emph{Tag} (\isaf{leaf\_tag\_val\_zero\_or\_two}). For every leaf state $\lambda$, $\smTag(\lambda) \in \{0, 2\}$ iff $\lambda \in \{\smEmpty, \smTomb\}$: the tags are $0$, $1$ and $2$, $1 \neq 0$, and $1 = 2$ would give $1 = 0$. In an integral domain, $t(t - 2) = 0$ iff $t \in \{0, 2\}$.
\end{enumerate}
With (a) to (e), the proof of Lemma~\ref{lem:smt-honest-enc} in Section~\ref{sec:compiler} is complete. In the formal proof of \isaf{cons\_ro\_enc}, the unconditional constraints are the steps \texttt{c1}, \texttt{c2} and \texttt{c5}, and the three equivalences are \texttt{c3}, \texttt{c4} and \texttt{c6}; \isaf{cons\_up\_enc} has the same structure, with \isaf{up\_chain\_holds} for the compression steps of either chain.

\section{Correspondence with the Formal Development}
\label{app:mechanization}

\Cref{tab:mechanization} lists every numbered result of the paper with its status (\cref{sec:mechanization}) and the Isabelle theorems that state it.

\begin{table}[ht]
\centering
\caption{Status of every numbered result. M: mechanized; an Isabelle theorem states the result up to notation, with the same or weaker hypotheses. M$^{*}$: mechanized, with a difference described after the result. P: proved on paper only. I: imported from the AFP sumcheck entry, with a paper proof in \cref{app:gkr-proofs}. S: standard result, cited. Theory names are given without the suffix \texttt{.thy}.}
\label{tab:mechanization}
\footnotesize
\renewcommand{\arraystretch}{1.0}
\begin{tabular}{@{}>{\raggedright\arraybackslash}p{0.21\textwidth}>{\raggedright\arraybackslash}p{0.17\textwidth}>{\raggedright\arraybackslash}p{0.56\textwidth}@{}}
\toprule
Result & Status & Isabelle theorems (theory) \\
\midrule
\Cref{lem:mle-facts} & M & \isaf{eq\_pi\_\allowbreak bool\_point}, \isaf{mle\_\allowbreak bool\_point}, \isaf{mle\_add}, \isaf{mle\_scale}, \isaf{beta\_\allowbreak bool\_point}, \isaf{sum\_\allowbreak beta\_\allowbreak bool\_points} (Multilinear\_Extension); \isaf{mle\_cons} (GKR\_Assembly) \\
\Cref{lem:schwartz-zippel} & S & used in paper proofs only \\
\Cref{thm:sumcheck-afp} & I & \isaf{soundness\_\allowbreak inductive} (AFP Soundness\_Proof); paper proof in \cref{app:gkr-proofs} \\
\Cref{thm:sumcheck-std} & S & proof in \cref{app:gkr-proofs} \\
\Cref{lem:split} & M & \isaf{prob\_\allowbreak tuples\_\allowbreak split\_seg}, \isaf{prob\_\allowbreak tuples\_\allowbreak prefix} (GKR\_Assembly) \\
\Cref{lem:layer-identity} & M & \isaf{layer\_\allowbreak hypercube\_\allowbreak identity} (Sumcheck\_Instance) \\
\Cref{lem:integrand-poly} & P & \\
\Cref{lem:layer-sumcheck} & M & \isaf{layer\_\allowbreak sumcheck\_\allowbreak soundness} (Sumcheck\_Instance) \\
\Cref{lem:carry} & M & \isaf{carried\_\allowbreak claim\_\allowbreak carry\_\allowbreak sound} (GKR\_Assembly) \\
\Cref{lem:seed} & M & \isaf{mle\_\allowbreak agree\_prob} (GKR\_Assembly) \\
\Cref{lem:chain-bound} & M & \isaf{gkr\_chain\_\allowbreak bad\_bound} (GKR\_Assembly) \\
\Cref{thm:b} & M & \isaf{gkr\_\allowbreak assembly\_\allowbreak soundness} (GKR\_Assembly) \\
\Cref{cor:b-deg4} & M, vacuous, not used & \isaf{gkr\_\allowbreak assembly\_\allowbreak soundness\_deg4} (GKR\_Assembly); likewise \isaf{gkr\_\allowbreak assembly\_\allowbreak soundness\_\allowbreak deg4\_ext4} and \isaf{theorem\_C\_\allowbreak composition\_\allowbreak deg4} \\
\Cref{lem:accept-bad} & P & \\
\Cref{prop:b-faithful} & P & \\
\Cref{lem:ext-base} & M & \isaf{kb\_p\_prime}, \isaf{kb\_\allowbreak inverse\_\allowbreak exists}, \isaf{CARD\_\allowbreak koala\_bear} (KoalaBear\_Field) \\
\Cref{lem:nonsquare} & M & \isaf{kb\_three\_\allowbreak nonsquare}, \isaf{kb\_minus\_\allowbreak three\_\allowbreak nonsquare} (KoalaBear\_Ext4\_Nonsquare); \isaf{kb\_quad\_\allowbreak y\_\allowbreak nonsquare} (KoalaBear\_Ext4\_Field) \\
\Cref{prop:ext-field} & M: (a)--(d), bijectivity in (e); P: formula in (e), (f) & field instances on \isaf{kb\_quad} and \isaf{koala\_\allowbreak bear\_ext4}, \isaf{CARD\_\allowbreak kb\_quad}, \isaf{CARD\_\allowbreak koala\_\allowbreak bear\_ext4}, \isaf{kb4\_\allowbreak embed\_simps}, \isaf{kb4\_\allowbreak embed\_inj}, \isaf{kb4\_\allowbreak embed\_\allowbreak inverse}, \isaf{kb\_alpha\_\allowbreak square}, \isaf{kb\_alpha\_\allowbreak fourth}, \isaf{kb\_alpha\_\allowbreak not\_in\_\allowbreak base} (KoalaBear\_Ext4\_Field); \isaf{coeff4\_bij} (KoalaBear\_Ext4\_GKR) \\
\Cref{lem:pushforward} & M & \isaf{uniform\_\allowbreak coeff\_\allowbreak tuple\_\allowbreak pushforward} (KoalaBear\_Ext4\_GKR) \\
\Cref{thm:b-ext4} & M & \isaf{gkr\_\allowbreak assembly\_\allowbreak soundness\_ext4}, \isaf{ext4\_\allowbreak denominator\_\allowbreak exact} (KoalaBear\_Ext4\_GKR) \\
\Cref{rem:ext-scale} & P & arithmetic only \\
\Cref{lem:lift-layer} & M: (a)--(c), (e); (d), (f) under well-formedness, P as stated & (a) \isaf{kb4\_\allowbreak embed\_\allowbreak sum\_list}, \isaf{kb4\_\allowbreak embed\_\allowbreak prod\_list}, \isaf{kb4\_\allowbreak embed\_\allowbreak sum\_\allowbreak lessThan}, \isaf{kb4\_\allowbreak embed\_\allowbreak prod\_\allowbreak lessThan}, \isaf{kb4\_\allowbreak embed\_\allowbreak power}; (b) \isaf{kb4\_\allowbreak embed\_\allowbreak eq\_pi}, \isaf{kb4\_\allowbreak embed\_\allowbreak mle\_\allowbreak base\_point}, \isaf{lifted\_\allowbreak mle\_\allowbreak boolean}; (c) \isaf{layer\_\allowbreak width\_lift}, \isaf{const\_\allowbreak at\_lift}; (e) \isaf{wf\_\allowbreak layer\_lift}; (d), (f) \isaf{layer\_\allowbreak eval\_embed}, \isaf{false\_\allowbreak output\_\allowbreak preserved}, with \isaf{gate\_\allowbreak contrib\_\allowbreak sum\_lift} (KoalaBear\_Ext4\_Lift) \\
\Cref{prop:lift-chain} & P & \\
\Cref{lem:leaffold} & M$^{*}$ & \isaf{leaf\_\allowbreak fold\_inj}, \isaf{h\_leaf\_\allowbreak collision\_\allowbreak to\_\allowbreak sponge\_\allowbreak collision} (Leaf\_Fold) \\
\Cref{prop:binding} & P; last step M & \isaf{encoded\_\allowbreak state\_\allowbreak collision\_\allowbreak to\_\allowbreak sponge\_\allowbreak collision} (Leaf\_Fold) \\
\Cref{lem:circuit-cons} & M$^{*}$ & \isaf{A\_i\_ro}, \isaf{A\_i\_up} (Compiler\_Correctness) \\
\Cref{lem:smt-honest-enc} & M$^{*}$ & \isaf{cons\_ro\_enc}, \isaf{cons\_up\_enc} (Compiler\_Correctness) \\
\Cref{thm:a-sound} & M & \isaf{theorem\_A\_\allowbreak soundness} (Compiler\_Correctness) \\
\Cref{thm:a-complete} & M & \isaf{theorem\_A\_\allowbreak completeness} (Compiler\_Correctness) \\
\Cref{cor:a-iff} & M & \isaf{theorem\_A\_\allowbreak compilation\_\allowbreak soundness} (Compiler\_Correctness) \\
\Cref{thm:c} & M & \isaf{theorem\_C\_\allowbreak composition} (Composition) \\
\Cref{cor:perinput} & P & \\
\Cref{cor:perinput-verifier} & P & \\
\Cref{cor:perinput-ext} & P & \\
\Cref{thm:a-vector} & P & uses \isaf{A\_i\_ro}, \isaf{A\_i\_up} (Compiler\_Correctness) and \isaf{mux\_zero}, \isaf{mux\_one} (Compiler\_Model) \\
\Cref{rem:vector-constraints} & P & \\
\Cref{cor:perinput-vector} & P & \\
\Cref{cor:perinput-vector-ext} & P & \\
\Cref{cor:one-leaf} & P & \\
\Cref{lem:smt-acc-chain} & M & \isaf{acc\_list\_0}, \isaf{acc\_\allowbreak list\_step}, \isaf{acc\_\allowbreak list\_last} (Compiler\_Model) \\
\Cref{lem:smt-ev-entries} & M & \isaf{length\_\allowbreak ev\_ro}, \isaf{length\_\allowbreak ev\_up} and the \isaf{ev\_ro\_nth} and \isaf{ev\_up\_nth} lemmas (Compiler\_Correctness) \\
\Cref{lem:smt-res-layer} & M & \isaf{mk\_res\_\allowbreak layer\_\allowbreak eval\_in}, \isaf{mk\_res\_\allowbreak layer\_\allowbreak eval\_pad}, \isaf{mk\_res\_\allowbreak layer\_\allowbreak accept\_iff} (Compiler\_Model) \\
\bottomrule
\end{tabular}
\end{table}

\end{document}